\newtheorem{theorem}{Theorem}[section]
\newtheorem{proposition}[theorem]{Proposition}
\newtheorem{lemma}[theorem]{Lemma}
\newtheorem{remark}[theorem]{Remark}
\newcommand{\rd}{{\rm d}}
\newcommand{\be}{\begin{equation}}
\newcommand{\ee}{\end{equation}}
\newcommand{\bey}{\begin{eqnarray}}
\newcommand{\eey}{\end{eqnarray}}
\newcommand{\eps}{\varepsilon}
\newcommand{\bx}{{\bf x}}
\newcommand{\ph}{\varphi}
\newcommand{\cU}{{\cal U}}
\newcommand{\cR}{{\cal R}}
\newcommand{\bR}{{\mathbb R}}
\newcommand{\bC}{{\mathbb C}}
\newcommand{\bN}{{\mathbb N}}
\newcommand{\tr}{\mbox{Tr}}
\newcommand{\wt}{\widetilde}
\newcommand{\const}{\mathrm{const}}
\newcommand{\cF}{{\cal F}}
\newcommand{\cA}{{\cal A}}
\newcommand{\cE}{{\cal E}}
\newcommand{\cH}{{\cal H}}
\newcommand{\cL}{{\cal L}}
\newcommand{\cN}{{\cal N}}
\newcommand{\donothing}[1]{}
\begin{document}
\title{Quantum Fluctuations and Rate of Convergence towards \\ Mean Field Dynamics}
\author{Igor Rodnianski${}^1$\thanks{Partially supported by NSF grant DMS-0702270.}\, and Benjamin Schlein${}^2$\thanks{On leave from the University of
Cambridge. Supported by a Sofja Kovalevskaja Award of the Humboldt Foundation.} \\
\\
Department of Mathematics, Princeton University\\ Princeton, NJ, USA${}^1$ \\ \\
Institute of Mathematics, University of Munich, \\
Theresienstr. 39, D-80333 Munich, Germany${}^2$
\\}

\maketitle

\begin{abstract}
The nonlinear Hartree equation describes the macroscopic dynamics of
initially factorized $N$-boson states, in the limit of large $N$. In
this paper we provide estimates on the rate of convergence of the
microscopic quantum mechanical evolution towards the limiting
Hartree dynamics. More precisely, we prove bounds on the difference
between the one-particle density associated with the solution of the
$N$-body Schr\"odinger equation and the orthogonal projection onto
the solution of the Hartree equation.
\end{abstract}

\section{Introduction}
\setcounter{equation}{0}

We consider an $N$ boson system described on the Hilbert space
$L^2_s (\bR^{3N})$ (the subspace of $L^2 (\bR^{3N})$ consisting of
all functions symmetric with respect to arbitrary permutations of
the $N$ particles) by a mean field Hamiltonian of the form
\begin{equation}\label{eq:ham}
H_N = \sum_{j=1}^N -\Delta_{x_j} + \frac{1}{N} \sum_{i<j}^N V(x_i
-x_j)\,.
\end{equation}
We will specify later assumptions on the interaction potential $V$.
Note the coupling constant $1/N$ in front of the potential energy
which characterizes mean-field models; it makes sure that in the
limit of large $N$ the potential and the kinetic energy are
typically of the same order, and thus can compete to generate
nontrivial effective equation for the macroscopic dynamics of the
system.

\medskip

We consider a factorized initial wave function
\begin{equation}\label{eq:init}
L_s^2 (\bR^{3N}) \ni \; \psi_N (\bx) = \prod_{j=1}^N \ph (x_j) \quad
\text{for some } \ph \in H^1 (\bR^3)\end{equation} with
normalization $\| \ph \|_{L^2 (\bR^3)} = 1$ (so that $\| \psi_N
\|_{L^2 (\bR^{3N})} =1$) and we study its time-evolution
$\psi_{N,t}$, given by the solution of the $N$ body Schr\"odinger
equation
\begin{equation}\label{eq:schr}
i\partial_t \psi_{N,t} = H_N \psi_{N,t} \qquad \text{with initial
data $\psi_{N,0} = \psi_N$.}
\end{equation}
In (\ref{eq:init}) and in what follows we use the notation
$\bx=(x_1,\dots,x_N)\in\bR^{3N}$.

\medskip

Clearly, because of the interaction among the particles, the
factorization of the wave function is not preserved by the time
evolution. However, due to the presence of the small constant $1/N$ in front of
the potential energy in (\ref{eq:ham}), we may expect the total
potential experienced by each particle to be approximated, for large
$N$, by an effective mean field potential, and thus that, in the
limit $N \to \infty$, the solution $\psi_{N,t}$ of (\ref{eq:schr})
is still approximately (and in an appropriate sense) factorized. We
may expect, in other words, that in an appropriate sense
\begin{equation}\label{eq:fac} \psi_{N,t} (\bx) \simeq \prod_{j=1}^N
\ph_t (x_j) \qquad \text{for large $N$.}\end{equation} If
(\ref{eq:fac}) is indeed correct, it is easy to derive a
self-consistent equation for the evolution of the one-particle wave
function $\ph_t$.  In fact, it follows from (\ref{eq:fac}) that
the total potential experienced by a particle at $x$ can be
approximated by the convolution $(V*|\ph_t|^2)(x)$, and thus that
the evolution of the one-particle wave function $\ph_t$ is described
by the nonlinear Hartree equation
\begin{equation}\label{eq:hartree}
i \partial_t \ph_t = -\Delta \ph_t + (V
* |\ph_t|^2) \ph_t \,. \end{equation}

\medskip

To understand in which sense (\ref{eq:fac}) holds true, we need to
introduce marginal densities. The density matrix $\gamma_{N,t} =
|\psi_{N,t} \rangle \langle \psi_{N,t}|$ associated with
$\psi_{N,t}$ is defined as the orthogonal projection onto
$\psi_{N,t}$ (we use here Dirac's bracket notation; for $f,g,h \in
L^2 (\bR^d)$, $|f \rangle \langle g|: L^2 (\bR^d) \to L^2 (\bR^d)$
is the operator defined by $|f \rangle \langle g| (h) = \langle g,h
\rangle_{L^2} \; f$). The kernel of $\gamma_{N,t}$ is thus given by
\[ \gamma_{N,t} (\bx; \bx') = \psi_{N,t} (\bx) \overline{\psi}_{N,t}
(\bx') . \] For $k=1, \dots ,N$, we define then the $k$-particle
marginal density $\gamma^{(k)}_{N,t}$ associated with $\psi_{N,t}$
by taking the partial trace of $\gamma_{N,t}$ over the last $N-k$
particles. In other words, we define $\gamma^{(k)}_{N,t}$ as a
positive trace class operator on $L^2 (\bR^{3k})$ with kernel
\begin{equation}\label{eq:kpart}
\gamma_{N,t}^{(k)} (\bx_k ; \bx'_k) = \int \rd \bx_{N-k} \;
\gamma_{N,t} (\bx_k , \bx_{N-k} ; \bx'_k , \bx_{N-k})\,.
\end{equation}
Since $\| \psi_{N,t} \|_{L^2 (\bR^{3N})} = 1$, we immediately obtain
$\tr \; \gamma^{(k)}_{N,t} = 1$ for all $N\geq 1$, $k =1,\dots ,N$,
and $t \in \bR$.

%Of course, if $1 \leq k < N$, the $k$-particle density
%$\gamma^{(k)}_{N,t}$ does not contain the full information about the
%state described by the wave function $\psi_{N,t}$. However, it
%allows to compute expectations of arbitrary $k$-particle observables
%(observables which depend non-trivially only on $k$ particles). More
%precisely, if $J^{(k)}$ is an operator on $L^2 (\bR^{3k})$, then we
%have
%\[ \langle \psi_{N,t}, \left(J^{(k)} \otimes 1_{N-k}\right)
%\psi_{N,t} \rangle = \tr \; J^{(k)} \gamma^{(k)}_{N,t} \]

\medskip

By the choice of the initial wave function (\ref{eq:init}), at time
$t=0$ we have $\gamma_{N,0}^{(k)} = |\ph \rangle \langle \ph
|^{\otimes k}$. It turns out that (\ref{eq:fac}) should be understood
in terms of convergence of marginal densities. For a large class
of interaction potentials $V$, for every fixed $k \geq 1$, and $t
\in \bR$, one can in fact show that
\begin{equation}\label{eq:conv} \gamma_{N,t}^{(k)} \to |\ph_t
\rangle \langle \ph_t|^{\otimes k} \qquad \text{as } N \to \infty
\end{equation} where $\ph_t$ is a solution of the nonlinear Hartree
equation (\ref{eq:hartree}). The convergence (\ref{eq:conv}) holds
in the trace norm topology. In particular, (\ref{eq:conv}) implies
that for arbitrary $k$ and for an arbitrary bounded operator
$J^{(k)}$ on $L^2 (\bR^{3k})$,
\[ \left\langle \psi_{N,t}, \left( J^{(k)} \otimes 1^{(N-k)} \right)
\psi_{N,t} \right\rangle \to \langle \ph_t^{\otimes k}, J^{(k)}
\ph_t^{\otimes k} \rangle \] as $N \to \infty$. The approximate identity
(\ref{eq:fac}) can thus be be interpreted as follows: as long as we are
interested in the expectation of observables depending non-trivially
only on a fixed number of particles, the $N$-body wave function
$\psi_{N,t}$ can be approximated by the $N$-fold tensor product of the
solution $\phi_t$ to the nonlinear Hartree equation (\ref{eq:hartree}).

\medskip

The first rigorous proof of (\ref{eq:conv}) was obtained by Spohn in
\cite{Spohn}, under the assumption of a bounded interaction
potential $V$. The problem of proving (\ref{eq:conv}) becomes
substantially more involved for singular potentials. In \cite{EY},
Erd\H os and Yau extended Spohn's approach to obtain a rigorous
derivation of the Hartree equation (\ref{eq:hartree}) for a Coulomb
interaction $V(x) = \const / |x|$ (partial results for the Coulomb
interaction were also obtained by Bardos, Golse, and Mauser in
\cite{BGM}). In \cite{ES}, the Hartree equation with Coulomb
interaction was derived for semirelativistic bosons; in the
semirelativistic setting, the dispersion of the bosons only grows
linearly in the momentum (for large momenta), and thus the control
of the Coulomb singularity is more delicate. In
\cite{EESY,ESY1,ESY2}, models described by the Hamiltonian
\[ H_N = \sum_{j=1}^N -\Delta_{x_j} + \frac{1}{N} \sum_{i<j}^N
N^{3\beta} V (N^{\beta} (x_i -x_j)) \qquad \text{with } \beta
\in(0,1] \] with an $N$-dependent potential were considered (in the
one-dimensional case, $N$-dependent potentials were considered by
Adami, Golse and Teta in \cite{AGT}). These models are used to
describe systems of physical interest, such as Bose-Einstein
condensates. Assuming the interaction to be positive ($V(x) \geq 0$
for all $x\in \bR^3$) and sufficiently small, the main result was
again a proof of the convergence (\ref{eq:conv}); this time,
however, $\ph_t$ is a solution of the cubic nonlinear Schr\"odinger
equation (with local nonlinearity)
\begin{equation}\label{eq:NLS} i\partial_t \ph_t = - \Delta \ph_t + \sigma
|\ph_t|^2 \ph_t \quad \text{with } \sigma
= \left\{ \begin{array}{ll} b_0 \quad &\text{if } 0< \beta < 1 \\
8\pi a_0 \quad &\text{if } \beta =1 \end{array} \right.  \; .
\end{equation} Here $b_0 = \int \rd x V(x)$ and $a_0$ is the scattering
length of $V$. The emergence of the scattering length $a_0$ for
$\beta=1$ (for all other choices of $0 < \beta <1$ the coupling
constant is given by $b_0$, which is the first Born approximation to
$8\pi a_0$) is a consequence of the short scale correlation
structure developed in solutions of the Schr\"odinger equation,
which, in the case $\beta =1$, is characterized by the same length
scale $O(1/N)$ as the scale of the interaction potential.

\medskip

The results described above have been obtained by extensions of the
approach introduced by Spohn in \cite{Spohn}, which was based on the
study of the BBGKY hierarchy
\begin{equation}\label{eq:BBGKY}
\begin{split}
i\partial_t \gamma^{(k)}_{N,t} = \; &\sum_{j=1}^k [-\Delta_{x_j},
\gamma^{(k)}_{N,t} ] + \frac{1}{N} \sum_{i<j}^k [V (x_i -x_j),
\gamma^{(k)}_{N,t}] \\ &+ \left( \frac{N-k}{N} \right) \sum_{j=1}^k
\tr_{k+1} \; [ V(x_j - x_{k+1}) , \gamma^{(k+1)}_{N,t} ]
\end{split}
\end{equation}
for the evolution of the marginal densities $\gamma^{(k)}_{N,t}$,
$k=1, \dots,N$ (here $\tr_{k+1}$ denotes the partial trace over the
$(k+1)$-th particle; this hierarchy is equivalent to the
Schr\"odinger equation (\ref{eq:schr}) for $\psi_{N,t}$). Because of
the compactness of the sequence $\gamma^{(k)}_{N,t}$, $N \geq k$,
the proof of (\ref{eq:conv}) reduces to two main steps. The first
step consists in proving that an arbitrary family of limit points
$\{ \gamma_{\infty,t}^{(k)} \}_{k \geq 1}$ satisfies the infinite
hierarchy \begin{equation}\label{eq:infi} i \partial_t
\gamma^{(k)}_{\infty,t} = \sum_{j=1}^k [-\Delta_{x_j},
\gamma^{(k)}_{\infty,t} ] + \sum_{j=1}^k \tr_{k+1} \; [ V(x_j -
x_{k+1}) , \gamma^{(k+1)}_{\infty,t} ] \, .\end{equation} The second
step is a proof of the uniqueness of the solution of
(\ref{eq:infi}). Since the factorized family
$\gamma_{\infty,t}^{(k)} = |\ph_t \rangle \langle \ph_t|^{\otimes
k}$, with $\ph_t$ determined by (\ref{eq:hartree}), is a solution of
the infinite hierarchy (\ref{eq:infi}), these two steps are
sufficient to obtain (\ref{eq:conv}).

\medskip

Despite its many successes, this method has some limitations. The
main one, from our point of view, is that, because of the use of
abstract arguments related to the compactness of the sequence
$\gamma_{N,t}^{(k)}$, this technique does not provide any
information on the rate of convergence of $\gamma_{N,t}^{(k)}$ to
$|\ph_t \rangle \langle \ph_t|^{\otimes k}$.

\medskip

In some cases, instead of comparing the solution of (\ref{eq:BBGKY}) with the
solution of the infinite hierarchy (\ref{eq:infi}), it is also
possible to expand it in a Duhamel series and to compare it directly
with the corresponding expansion for the factorized densities
$|\ph_t \rangle \langle \ph_t|^{\otimes k}$. This approach (see
\cite{Spohn}) leads to bounds of the form
\begin{equation}\label{eq:err1} \tr \; \left|\gamma_{N,t}^{(k)} -
|\ph_t \rangle \langle \ph_t|^{\otimes k} \right| \leq \frac{C^k}{N}
\end{equation} for all sufficiently small times $|t| \leq t_0$. The
restriction to small times is needed to guarantee the convergence of
the Duhamel expansion of the solution to (\ref{eq:BBGKY}). Iterating
the arguments used to obtain (\ref{eq:err1}), one can derive bounds
of the form
\[ \tr \; \left|\gamma_{N,t}^{(k)} - |\ph_t \rangle \langle
\ph_t|^{\otimes k} \right| \leq \frac{C^k}{N^{\frac{1}{2^t}}}\,
\] which hold for all $t \in \bR$, but deteriorate very fast in time
and are therefore not effective and not very useful. Next theorem,
which is the main result of this paper, provides much stronger
bounds on the difference between the true quantum mechanical
evolution of the marginal densities and their Hartree evolution; in
particular it shows that for every fixed time $t \in \bR$, the error
is at most of the order $O(N^{-1/2})$.

\begin{theorem}\label{thm:fact}
Suppose that there exists $D>0$ such that the operator inequality
\begin{equation}\label{eq:assump} V^2 (x) \leq D \; (1-\Delta_x)  \end{equation} holds true.
Let
\begin{equation}
\psi_N (\bx) = \prod_{j=1}^N \ph (x_j),\end{equation} for some $\ph
\in H^1 (\bR^3)$ with\footnote{In what follows, for a function $f$ we will always denote by
$\|f\|$ its $L^2$ norm, while, for an operator $A$, $\|A\|$ will mean its $L^2$ operator norm. }
$\| \ph \| =1$. Denote by $\psi_{N,t} = e^{-i
H_N t} \psi_N$ the solution to the Schr\"odinger equation
(\ref{eq:schr}) with initial data $\psi_{N,0}= \psi_N$, and let
$\gamma^{(1)}_{N,t}$ be the one-particle density associated with
$\psi_{N,t}$. Then there exist constants $C,K$, depending only on
the $H^1$ norm of $\ph$ and on the constant $D$ on the r.h.s. of
(\ref{eq:assump}) such that
\begin{equation}\label{eq:bd1}
\tr\; \Big| \gamma^{(1)}_{N,t} - |\ph_t \rangle \langle \ph_t| \Big|
\leq \frac{C}{N^{1/2}} \; e^{K t} \,.
\end{equation}
Here $\ph_t$ is the solution to the nonlinear Hartree equation
\begin{equation}\label{eq:hartree2}
i\partial_t \ph_t = -\Delta \ph_t + (V *|\ph_t|^2 ) \ph_t
\end{equation}
with initial data $\ph_{t=0} = \ph$.
\end{theorem}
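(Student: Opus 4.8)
The plan is to work on the bosonic Fock space and use a coherent-state / second-quantization approach rather than the BBGKY hierarchy, since we want a quantitative rate. First I would introduce the Fock space $\cF = \bigoplus_{n\ge0} L^2_s(\bR^{3n})$, the Weyl operator $W(f) = \exp(a^*(f) - a(f))$ generating the coherent state $W(\sqrt N \ph)\Omega$ (whose $n$-particle components are factorized with a Poisson weight of mean $N$), and the second-quantized Hamiltonian $\cH_N = \int \nabla_x a_x^* \nabla_x a_x \,\rd x + \frac{1}{2N}\int V(x-y) a_x^* a_y^* a_x a_y \,\rd x\,\rd y$. The key point is that although the true initial state $\psi_N = \ph^{\otimes N}$ is \emph{not} a coherent state, the one-particle density of $e^{-i\cH_N t} W(\sqrt N \ph)\Omega$ after projecting onto the $N$-particle sector, divided by $N$, differs from $\gamma^{(1)}_{N,t}$ only by an $O(N^{-1/2})$ error (this is a standard comparison, since the $N$-particle component of the coherent state has weight $\sim N^{-1/4}$ but the observables involved are suitably bounded; alternatively one compares expectations of $a^*(g)a(g)$).

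Next I would conjugate the dynamics by the Weyl operator and by the one-particle Hartree flow: define the fluctuation dynamics $\cU_N(t) = W(\sqrt N \ph_t)^* \, e^{-i\cH_N t}\, W(\sqrt N \ph)$, i.e. factor out the macroscopic coherent evolution $W(\sqrt N \ph_t)$ where $\ph_t$ solves (\ref{eq:hartree2}). Expanding $e^{i\cH_N t}\,a_x\,e^{-i\cH_N t}$ and using the Weyl relations $W(f)^* a_x W(f) = a_x + f(x)$, one finds that $\cU_N(t)$ is generated by a time-dependent Hamiltonian $\cL_N(t)$ in which the leading $O(N)$ term vanishes precisely because $\ph_t$ solves the Hartree equation, the $O(\sqrt N)$ term also cancels, and what remains is an $O(1)$ quadratic-plus-cubic-plus-quartic expression in creation/annihilation operators, plus a cubic term of size $O(N^{-1/2})$ and a quartic term $\frac{1}{2N}\int V(x-y)a_x^*a_y^*a_x a_y$. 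The one-particle density of the original state is then expressed, up to the $O(N^{-1/2})$ coherent-state error above, in terms of $\langle \Omega, \cU_N(t)^* a^*(g) a(g) \cU_N(t)\Omega\rangle$ for test functions $g$, with the Hartree projection $|\ph_t\rangle\langle\ph_t|$ corresponding to the ``no-fluctuation'' contribution.

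The core of the argument is then a Gronwall estimate controlling the growth of the expected number of fluctuation particles $\langle \cU_N(t)\Omega, \cN \, \cU_N(t)\Omega\rangle$, where $\cN = \int a_x^* a_x\,\rd x$. Differentiating in $t$, the contributions from the quadratic part of $\cL_N(t)$ that commute with $\cN$ drop out, and the off-diagonal quadratic terms (of the form $a^*a^*$ and $aa$ with kernel built from $\ph_t$ and $V$) together with the cubic terms are bounded by $C(\|\ph_t\|_{H^1}, D)\,(\langle\cN\rangle + 1)$ using Cauchy--Schwarz in Fock space and the operator bound (\ref{eq:assump}) $V^2 \le D(1-\Delta)$, which is exactly what is needed to dominate a potential term $\int V(x-y)a_x^* a_y^* a_x a_y$ by the kinetic energy and hence (after a further argument controlling the energy) by $\cN$. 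Since $\langle\cN\rangle_{t=0}=0$, Gronwall gives $\langle\cN\rangle_t \le C\,e^{Kt}$ with $K,C$ depending only on $\|\ph\|_{H^1}$ and $D$; note the $H^1$ norm of $\ph_t$ is controlled uniformly in $t$ by conservation of mass and energy for (\ref{eq:hartree2}). Finally, expanding $\tr|\gamma^{(1)}_{N,t} - |\ph_t\rangle\langle\ph_t||$ — using that for rank-one-dominated differences the trace norm is comparable to the operator norm, which in turn is bounded by $\sup_{\|g\|=1}|\langle g, (\gamma^{(1)}_{N,t} - |\ph_t\rangle\langle\ph_t|)g\rangle|$ up to a factor — one sees the difference is bounded by $N^{-1/2}$ times powers of $\langle\cN\rangle_t^{1/2}$, yielding (\ref{eq:bd1}).

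The main obstacle is the Gronwall step: one must show that the ``dangerous'' terms in $\frac{\rd}{\rd t}\langle\cN\rangle_t$ — in particular the quartic interaction $\frac{1}{2N}\int V(x-y)a_x^*a_y^*a_x a_y$ and the cubic term $\frac{1}{\sqrt N}\int V(x-y)\ph_t(y) a_x^* a_y^* a_x$ — are controlled by $\cN$ with constants independent of $N$. Naively the quartic term involves two extra particle number factors, so one needs to exploit (\ref{eq:assump}) to trade $V$ for a gradient and then absorb it, which forces one to simultaneously propagate a bound on the kinetic energy of the fluctuation field (e.g. on $\langle \cU_N(t)\Omega, \cN\,\cH_N^{\mathrm{kin}}/N \cdots\rangle$-type quantities, or more simply on $\langle(\cN+1)(\mathcal{H}_N+1)\rangle$); closing this coupled system of differential inequalities, rather than a single one for $\langle\cN\rangle$, is the delicate part of the proof.
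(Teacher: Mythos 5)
Your overall framework is the right one (Fock space, coherent state $W(\sqrt N\ph)\Omega$, fluctuation dynamics $\cU_N(t;s)=W^*(\sqrt N\ph_t)e^{-i\cH_N(t-s)}W(\sqrt N\ph_s)$, Gronwall on the number of fluctuations), but there are two genuine gaps. First, the Gronwall step does not close as you describe. The quartic term $\frac1{2N}\int V(x-y)a_x^*a_y^*a_ya_x$ is actually harmless for this purpose -- it commutes with $\cN$ and drops out of $\frac{\rd}{\rd t}\langle\cN\rangle$ -- so trading $V$ for a gradient and propagating kinetic-energy bounds is not the issue. The real obstruction is the cubic term $\frac1{\sqrt N}\int V(x-y)a_x^*\bigl(\ph_t(y)a_y^*+\overline{\ph}_t(y)a_y\bigr)a_x$: Cauchy--Schwarz only gives a bound of order $N^{-1/2}\langle(\cN+1)^{3/2}\rangle$, which cannot be absorbed into $C\langle\cN+1\rangle$ with an $N$-independent constant, so a single differential inequality for $\langle\cN\rangle$ (or even a coupled one with the energy, as you suggest, which does not convert $\cN^{3/2}/\sqrt N$ into $\cN$) does not close. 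The paper resolves this by introducing a truncated dynamics $\cU_N^{(M)}$ with a cutoff $\chi(\cN\le M)$ in the cubic term (for which Gronwall does close, with constant $\sim 1+\sqrt{M/N}$), proving weak a priori bounds $\langle\cU_N^*\,\cN^j\,\cU_N\rangle\lesssim (\cN+N)^j$ by undoing the conjugation and using Weyl-operator identities, and then comparing $\cU_N$ with $\cU_N^{(M)}$ and choosing $M=N$ (Proposition 3.3); this is the technical core of the paper and is absent from your argument.

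Second, the reduction from the factorized state $\ph^{\otimes N}$ to coherent states is not the ``standard comparison'' you invoke. Since $\cH_N$ conserves particle number, the $N$-particle component of the evolved coherent state is indeed proportional to $\psi_{N,t}$, but the proportionality constant is $d_N^{-1}$ with $d_N\simeq N^{1/4}$, so projecting costs a factor $d_N^2\simeq N^{1/2}$, exactly eating the gain you are hoping for. Concretely, after writing $\ph^{\otimes N}=d_N\int_0^{2\pi}\frac{\rd\theta}{2\pi}e^{i\theta N}W(e^{-i\theta}\sqrt N\ph)\Omega$, the quadratic-in-fluctuation term carries $d_N^2/N\simeq N^{-1/2}$ and is fine, but the terms linear in the fluctuation field carry $d_N^2/\sqrt N\simeq 1$, and (unlike in the pure coherent-state case, where a parity argument with an auxiliary dynamics $\wt\cU_N$ kills them) they do not vanish. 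The paper must show separately that the function $f_N$ collecting these linear terms is bounded in $L^2$ uniformly in $N$ (Lemma 4.2), which requires exploiting the oscillatory $\theta$-integral, an explicit expansion of the projected state with coefficients $\cR_m$ expressed through Laguerre polynomials, and sharp pointwise Laguerre estimates. Without this cancellation your argument would only give an $O(1)$ error, not $O(N^{-1/2})$, so this step cannot be waved through as standard.
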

\begin{remark}
The assumption on the potential $V$ means that the most singular potential
we can handle is the Coulomb potential $V(x) = \kappa / |x|$. Note
that our theorem applies both to the attractive ($\kappa <0$) and
the repulsive case ($\kappa >0$). In particular Theorem
\ref{thm:fact} implies the result obtained by Erd\"os and Yau in
\cite{EY}.
\end{remark}
\begin{remark}
Note that under the assumption (\ref{eq:assump}) on
the interaction potential $V$, the nonlinear equation
(\ref{eq:hartree2}) is known to be globally well-posed in $H^1
(\bR^3)$. This follows from the conservation of the mass $\| \ph \|$
and of the energy
\[ \cE (\ph) = \int \rd x \; |\nabla \ph (x)|^2 + \frac{1}{2} \int
\rd x \rd y \; V(x-y) |\ph (x)|^2 |\ph (y)|^2 \] and from the
observation that there exist constants $c_1, c_2$ such that
\begin{equation}\label{eq:H1ph}
\begin{split}
\cE (\ph) \leq c_1 \| \ph \|^2_{H^1} (1+ \| \ph\|^2) \quad \text{and
} \quad  \| \ph \|^2_{H^1} &\leq c_2 \left(\cE (\ph) + \|\ph\|^4+ \|\ph\|^2
\right).
\end{split}
\end{equation}
Both bounds can be proven using that, by (\ref{eq:assump}),
\[ \int \rd y \;  V(x-y) |\ph (y)|^2 \leq \eps \| \nabla \ph
\|^2 + \eps^{-1} \| \ph \|^2 \] for all $\eps >0$, uniformly in
$x\in \bR^3$.
\end{remark}

\begin{remark}
Instead of (\ref{eq:bd1}) we will prove that
\begin{equation}\label{eq:HS} \| \gamma^{(1)}_{N,t} - |\ph_t \rangle \langle \ph_t|
\|_{\text{HS}} \leq \frac{C}{N^{1/2}} \; e^{K t} \end{equation}
where $\| . \|_{\text{HS}}$ denotes the Hilbert-Schmidt norm.
Although in general the trace norm is bigger than the
Hilbert-Schmidt norm, in this case they differ at most by a factor
of two\footnote{We would like to thank Robert Seiringer for
pointing out this argument to us.}. In fact,
since $|\ph_t\rangle \langle \ph_t |$ is a rank one
projection, the operator $A= \gamma_{N,t}^{(1)} - |\ph_t\rangle
\langle \ph_t|$ can only have one negative eigenvalue
$\lambda_{\text{neg}} <0$. Since moreover \[ \tr \;
\left(\gamma_{N,t}^{(1)} -|\ph_t\rangle \langle \ph_t|\right) = 0 \]
it follows that the negative eigenvalue of $A$ is equal, in absolute
value, to the sum of all positive eigenvalues. The trace norm of $A$
is equal, therefore, to $2|\lambda_{\text{neg}}| = 2 \| A \|$, where
$\| A \|$ denotes the operator norm of $A$. Since $\| A \| \leq \| A
\|_{\text{HS}}$, we immediately obtain that $\tr \; | A | \leq 2 \|
A \|_{\text{HS}}$.
\end{remark}

\begin{remark}
The bound (\ref{eq:bd1}) is not optimal. As mentioned
above, for short times and bounded potentials, the quantity on the
l.h.s. of (\ref{eq:bd1}) is known to be of the order $1/N$.
Nevertheless Theorem~\ref{thm:fact} is the first estimate on the
rate of convergence towards the mean-field limit which holds for all
times and remains of the same order $N^{-1/2}$ for all fixed times.
\end{remark}
\begin{remark}
Although, in order to simplify the analysis, we
only consider the rate of convergence of the one-particle density
$\gamma_{N,t}^{(1)}$ to $|\ph_t \rangle \langle \ph_t|$, our method
can also be used to prove bounds of the form
\[ \tr \; \left| \gamma^{(j)}_{N,t} -|\ph_t \rangle \langle
\ph_t|^{\otimes j} \right| \leq \frac{C (j)}{N^{1/2}} \; e^{K(j) \, t} \]
for all $j,t,N$ and for $j$-dependent constants $C(j), K(j)$.
\end{remark}

In this paper we avoid the use of the BBGKY hierarchy and instead revive
an approach, introduced by Hepp in \cite{Hepp} and
extended by Ginibre and Velo in \cite{GV}, to the study of a semiclassical limit of
quantum many-boson
systems\footnote{Mathematically, the
semiclassical limit considered in \cite{Hepp,GV} is equivalent to
the mean field limit considered in the present manuscript.}.  This
approach is based on embedding the $N$-body Schr\"odinger system
into the second quantized Fock-space representation and on the use of
coherent states as initial data. The use of the Fock-space representation
is in particular dictated by the fact
that coherent states  do not have a fixed number of particles.

The Hartree dynamics emerges as the main component of the evolution of coherent states
in the mean field limit (or, in the language of \cite{Hepp,GV}, in the semiclassical
limit). The problem then reduces to the study of quantum fluctuations, described by an
$N$-dependent two-parameter unitary group $\cU_N (t;s)$,
around the Hartree dynamics. In \cite{Hepp,GV}, Hepp (for smooth interaction
potentials) and Ginibre and Velo (for singular potentials) proved
that, in the limit $N \to \infty$, the fluctuation dynamics $\cU_N
(t;s)$ approaches a limiting evolution $\cU (t;s)$. This important
result shows the relevance of the Hartree dynamics in the mean field
limit (at least in the case of coherent initial states). It does not
prove, however, the convergence (\ref{eq:conv}) of the one-particle
marginal density to the orthogonal projection onto the solution of
the Hartree equation, nor does it imply convergence results for the evolution of factorized
initial sates. The problem of convergence of marginals requires additional
control on the growth of the number\footnote{Fluctuations around the Hartree dynamics
will be considered as particle excitations and thus it will be
possible to compute their number.} of fluctuations generated by the
evolution $\cU_N (t;s)$. This analysis, which,
technically, is the most difficult part of the present paper (see
Proposition \ref{prop:1cou}), is new\footnote{A more precise
discussion of the results of \cite{Hepp,GV}, and of their relation
with our work can be found at the end of Section \ref{sec:coh}.}.
Another novel part of our work is the derivation of convergence towards Hartree
dynamics for factorized initial sates from the corresponding statements for the
evolution of coherent states.

\medskip

Although we are mainly concerned with the dynamics of factorized
initial data, the result we obtain for coherent states (see Theorem
\ref{thm:coh}) is of independent interest, especially because, in
this case, our bound is optimal in its $N$-dependence (for coherent
states, we show that the error is at most of the order $1/N$ for
every fixed time).

\medskip

The paper is organized as follows. In Section \ref{sec:Fock}, we
define the Fock space representation of the mean field system,
introduce coherent states and review their main properties. In
Section \ref{sec:coh}, we consider the evolution of a coherent state
and we prove that, in this case, the rate of convergence to the mean
field solution remains of the order $1/N$ for all fixed times.
Finally, in Section \ref{sec:fact}, we show how to use coherent
states to obtain information on the dynamics of factorized states,
and we prove Theorem \ref{thm:fact}.

\section{Fock space representation}\label{sec:Fock}
\setcounter{equation}{0}

We define the bosonic Fock space over $L^2 (\bR^3, \rd x)$ as the
Hilbert space
\[ \cF = \bigoplus_{n \geq 0} L^2 (\bR^3 , \rd x)^{\otimes_s n} =
\bC \oplus \bigoplus_{n \geq 1} L^2_s (\bR^{3n}, \rd x_1 \dots \rd
x_n)\, ,
\] with the convention $L^2 (\bR^3)^{\otimes_s 0} = \bC$.
Vectors in $\cF$ are sequences $\psi = \{ \psi^{(n)} \}_{n \geq 0}$
of $n$-particle wave functions $\psi^{(n)} \in L^2_s (\bR^{3n})$.
The scalar product on $\cF$ is defined by
\[ \langle \psi_1 , \psi_2 \rangle = \sum_{n \geq 0} \langle
\psi_1^{(n)} , \psi_2^{(n)} \rangle_{L^2 (\bR^{3n})} =
\overline{\psi_1^{(0)}} \psi_2^{(0)} + \sum_{n \geq 1} \int \rd x_1
\dots \rd x_n \, \overline{\psi_1^{(n)}} (x_1 , \dots , x_n)
\psi_2^{(n)} (x_1, \dots ,x_n) \,. \] An $N$ particle state with
wave function $\psi_N$ is described on $\cF$ by the sequence $\{
\psi^{(n)} \}_{ n \geq 0}$ where $\psi^{(n)} = 0$ for all $n \neq N$
and $\psi^{(N)} = \psi_N$. The vector $\{1, 0, 0, \dots \} \in \cF$
is called the vacuum, and will be denoted by $\Omega$.

On $\cF$, we define the number of particles operator $\cN$, by $(\cN
\psi)^{(n)} = n \psi^{(n)}$. Eigenvectors of $\cN$ are vectors of
the form $\{ 0, \dots, 0, \psi^{(m)}, 0,  \dots \}$ with a fixed
number of particles. For $f \in L^2 (\bR^3)$ we also define the
creation operator $a^* (f)$ and the annihilation operator $a(f)$ on
$\cF$ by
\begin{equation}
\begin{split}
\left(a^* (f) \psi \right)^{(n)} (x_1 , \dots ,x_n) &=
\frac{1}{\sqrt n} \sum_{j=1}^n f(x_j) \psi^{(n-1)} ( x_1, \dots,
x_{j-1}, x_{j+1},
\dots , x_n) \\
\left(a (f) \psi \right)^{(n)} (x_1 , \dots ,x_n) &= \sqrt{n+1} \int
\rd x \; \overline{f (x)} \, \psi^{(n+1)} (x, x_1, \dots ,x_n) \, .
\end{split}
\end{equation}
The operators $a^* (f)$ and $a(f)$ are unbounded, densely defined,
closed operators. The creation operator $a^*(f)$ is the adjoint of
the annihilation operator $a(f)$ (note that by definition $a(f)$ is
anti-linear in $f$), and they satisfy the canonical commutation
relations \begin{equation}\label{eq:comm} [ a(f) , a^* (g) ] =
\langle f,g \rangle_{L^2 (\bR^3)}, \qquad [ a(f) , a(g)] = [ a^*
(f), a^* (g) ] = 0 \,. \end{equation} For every $f\in L^2 (\bR^3)$,
we introduce the self adjoint operator
\[ \phi (f) = a^* (f) + a(f) \,. \]

We will also make use of operator valued distributions $a^*_x$ and
$a_x$ ($x \in \bR^3$), defined so that \begin{equation}\begin{split}
a^* (f) &= \int \rd x \, f(x) \, a_x^* \\ a(f) & = \int \rd x \,
\overline{f (x)} \, a_x \end{split}
\end{equation}
for every $f \in L^2 (\bR^3)$. The canonical commutation relations
assume the form \[ [ a_x , a^*_y ] = \delta (x-y) \qquad [ a_x, a_y
] = [ a^*_x , a^*_y] = 0 \, .\]

The number of particle operator, expressed through the distributions
$a_x,a^*_x$, is given by
\[ \cN = \int \rd x \, a_x^* a_x \,. \]

The following lemma provides some useful bounds to control creation
and annihilation operators in terms of the number of particle
operator $\cN$.
\begin{lemma}\label{lm:a-bd}
Let $f \in L^2 (\bR^3)$. Then
\begin{equation}
\begin{split}
\| a(f) \psi \| &\leq \| f \| \, \| \cN^{1/2} \psi \| \\
\| a^* (f) \psi \| &\leq \| f \| \, \| \left( \cN + 1 \right)^{1/2}
\psi \| \\
\| \phi (f) \psi \| &\leq 2 \| f \| \| \left( \cN + 1 \right)^{1/2}
\psi \|\,
\end{split}
\end{equation}
\end{lemma}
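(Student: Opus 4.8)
The plan is to work sector by sector in the Fock space decomposition and reduce each bound to the Cauchy--Schwarz inequality followed by a summation over the particle number. First I would establish the bound for the annihilation operator. Fix $\psi = \{\psi^{(n)}\}_{n\geq 0}$ in the domain of $\cN^{1/2}$ and recall that
\[ (a(f)\psi)^{(n)}(x_1,\dots,x_n) = \sqrt{n+1}\int \rd x\; \overline{f(x)}\,\psi^{(n+1)}(x,x_1,\dots,x_n). \]
Applying Cauchy--Schwarz in the $x$-variable gives $|(a(f)\psi)^{(n)}(x_1,\dots,x_n)|^2 \leq (n+1)\,\|f\|^2 \int \rd x\; |\psi^{(n+1)}(x,x_1,\dots,x_n)|^2$, and integrating over $x_1,\dots,x_n$ yields $\|(a(f)\psi)^{(n)}\|^2 \leq (n+1)\,\|f\|^2\,\|\psi^{(n+1)}\|^2$. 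Summing over $n \geq 0$ and reindexing,
\[ \|a(f)\psi\|^2 = \sum_{n\geq 0}\|(a(f)\psi)^{(n)}\|^2 \leq \|f\|^2 \sum_{m\geq 1} m\,\|\psi^{(m)}\|^2 = \|f\|^2\,\|\cN^{1/2}\psi\|^2, \]
which is the first inequality.

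For the creation operator I would exploit that $a^*(f)$ is the adjoint of $a(f)$ together with the canonical commutation relation (\ref{eq:comm}), i.e. $[a(f),a^*(f)] = \|f\|^2$. On the dense set of vectors with finitely many nonzero, compactly supported sectors one then has $\|a^*(f)\psi\|^2 = \langle \psi, a(f)a^*(f)\psi\rangle = \langle \psi, a^*(f)a(f)\psi\rangle + \|f\|^2\|\psi\|^2 = \|a(f)\psi\|^2 + \|f\|^2\|\psi\|^2$, and combining this with the first bound gives $\|a^*(f)\psi\|^2 \leq \|f\|^2\big(\|\cN^{1/2}\psi\|^2 + \|\psi\|^2\big) = \|f\|^2\,\|(\cN+1)^{1/2}\psi\|^2$, the last equality because $\cN$ acts diagonally with eigenvalue $n$ on the $n$-th sector. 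The estimate then extends to the full domain of $(\cN+1)^{1/2}$ by density and closedness of the operators involved. (Alternatively the same bound follows from a direct Cauchy--Schwarz estimate on each $n$-particle sector, using the permutation symmetry of $\psi^{(n-1)}$, but the commutator argument is shorter.)

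Finally, since $\phi(f) = a(f) + a^*(f)$ and $\|\cN^{1/2}\psi\| \leq \|(\cN+1)^{1/2}\psi\|$, the triangle inequality gives $\|\phi(f)\psi\| \leq \|a(f)\psi\| + \|a^*(f)\psi\| \leq 2\,\|f\|\,\|(\cN+1)^{1/2}\psi\|$. I do not expect a genuine obstacle here: the content is the elementary sector-wise Cauchy--Schwarz estimate, and the only point needing a little care is the standard domain bookkeeping that legitimizes manipulating the unbounded operators $a(f),a^*(f)$, which is handled by first proving the identities on the dense core described above and then passing to the limit.
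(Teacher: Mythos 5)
Your proposal is correct and follows essentially the same route as the paper: Cauchy--Schwarz for $a(f)$, the commutation relation $\|a^*(f)\psi\|^2 = \|a(f)\psi\|^2 + \|f\|^2\|\psi\|^2$ for the creation operator, and the triangle inequality for $\phi(f)$. The only cosmetic difference is that you carry out the first estimate sector by sector on the explicit formula for $(a(f)\psi)^{(n)}$, whereas the paper phrases the identical Cauchy--Schwarz step through the operator-valued distributions $a_x$ and the identity $\int \rd x\, \|a_x\psi\|^2 = \|\cN^{1/2}\psi\|^2$.
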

\begin{proof}
The last inequality clearly follows from the first two. To prove the
first bound we note that
\begin{equation}
\begin{split} \| a (f) \psi \| &\leq \int \rd x \, |f(x)| \, \| a_x
\psi \| \leq \left( \int \rd x \, |f(x)|^2 \right)^{1/2} \, \left(
\int \rd x \, \| a_x \psi \|^2 \right)^{1/2} \\ &= \| f \| \, \|
\cN^{1/2} \psi \|\, .
\end{split}
\end{equation}
The second estimate follows by the canonical commutation relations
(\ref{eq:comm}) because
\begin{equation}
\begin{split}
\| a^* (f) \psi \|^2 &= \langle \psi, a(f) a^* (f) \psi \rangle =
\langle \psi, a^* (f) a(f) \psi \rangle + \| f \|^2 \| \psi \|^2  \\
&= \| a(f) \psi \|^2 + \| f\|^2 \| \psi \|^2 \leq \| f\|^2 \, \left(
\|\cN^{1/2} \psi \| + \| \psi \|^2 \right) = \| f \|^2 \| \left( \cN
+1 \right)^{1/2} \psi \|^2 \, .
\end{split}
\end{equation}
\end{proof}

Given $\psi \in \cF$, we define the one-particle density
$\gamma^{(1)}_{\psi}$ associated with $\psi$ as the positive trace
class operator on $L^2 (\bR^3)$ with kernel given by
\begin{equation}\label{eq:margi} \gamma^{(1)}_{\psi} (x; y) = \frac{1}{\langle \psi,
\cN \psi \rangle} \, \langle \psi, a_y^* a_x \psi \rangle\, .
\end{equation} By definition, $\gamma_{\psi}^{(1)}$ is a positive trace
class operator on $L^2 (\bR^3)$ with $\tr \, \gamma_{\psi}^{(1)}
=1$. For every $N$-particle state with wave function $\psi_N \in
L^2_s (\bR^{3N})$ (described on $\cF$ by the sequence $\{ 0, 0,
\dots, \psi_N, 0,0, \dots \}$) it is simple to see that this
definition is equivalent to the definition (\ref{eq:kpart}).

\medskip

We define the Hamiltonian $\cH_N$ on $\cF$ by $ (\cH_N \psi)^{(n)} =
\cH^{(n)}_N \psi^{(n)}$, with
\[ \cH^{(n)}_N = - \sum_{j=1}^n \Delta_j + \frac{1}{N} \sum_{i<j}^n
V(x_i -x_j) \, . \] Using the distributions $a_x, a^*_x$, $\cH_N$
can be rewritten as
\begin{equation}\label{eq:ham2} \cH_N = \int \rd x \nabla_x a^*_x
\nabla_x a_x + \frac{1}{2N} \int \rd x \rd y \, V(x-y) a_x^* a_y^*
a_y a_x \, . \end{equation} By definition the Hamiltonian $\cH_N$
leaves sectors of $\cF$ with a fixed number of particles invariant.
Moreover, it is clear that on the $N$-particle sector, $\cH_N$
agrees with the Hamiltonian $H_N$ (the subscript $N$ in $\cH_N$ is a
reminder of the scaling factor $1/N$ in front of the potential
energy). We will study the dynamics generated by the operator
$\cH_N$. In particular we will consider the time evolution of
coherent states, which we introduce next.

\medskip

For $f \in L^2 (\bR^3)$, we define the Weyl-operator
\begin{equation}
W(f) = \exp \left( a^* (f) - a(f) \right) = \exp \left( \int \rd x
\, (f(x) a^*_x - \overline{f} (x) a_x) \right) \, .
\end{equation}
Then the coherent state $\psi (f) \in \cF$ with one-particle wave
function $f$ is defined by
\[ \psi (f) = W(f) \Omega \, .\]
Notice that \begin{equation}\label{eq:coh} \psi (f)= W(f) \Omega =
e^{-\| f\|^2 /2} \sum_{n \geq 0} \frac{ (a^* (f))^n}{n!} \Omega  =
e^{-\| f\|^2 /2} \sum_{n \geq 0} \frac{1}{\sqrt{n!}} \, f^{\otimes n} \,,
\end{equation}
where $f^{\otimes n}$ indicates the Fock-vector $\{ 0, \dots , 0 ,f^{\otimes n}, 0, \dots \}$. This follows from
\[ \exp (a^* (f) - a (f)) = e^{-\|f \|^2/2} \exp (a^* (f)) \exp
(-a(f)) \] which is a consequence of the fact that the commutator $[
a (f) , a^* (f)] = \| f \|^2$ commutes with $a(f)$ and $a^* (f)$.
{F}rom Eq. (\ref{eq:coh}) we see that coherent states are
superpositions of states with different number of particles (the
probability of having $n$ particles in $\psi (f)$ is given by
$e^{-\| f\|^2} \| f \|^{2n}/n!$).

\medskip

In the following lemma we collect some important and well known
properties of Weyl operators and coherent states.
\begin{lemma}\label{lm:coh}
Let $f,g \in L^2 (\bR^3)$.
\begin{itemize}
\item[i)] The Weyl operator satisfy the relations
\[ W(f) W(g) = W(g) W(f) e^{-2i \, \text{Im} \, \langle f,g \rangle} = W(f+g) e^{-i\, \text{Im} \, \langle f,g \rangle} \,. \]
\item[ii)] $W(f)$ is a unitary operator and
\[ W(f)^* = W(f)^{-1}  = W (-f). \]
\item[iii)] We have \[ W^* (f) a_x W(f) = a_x + f(x), \qquad \text{and} \quad W^* (f) a^*_x
W(f) = a^*_x + \overline{f} (x) \, .\]
\item[iv)] {F}rom iii) we see that coherent states are eigenvectors of annihilation operators
\[ a_x \psi (f) = f(x) \psi (f)  \qquad \Rightarrow \qquad a (g)
\psi (f) = \langle g, f \rangle_{L^2} \psi (f) \, .\]
\item[v)] The expectation of the number of particles in the coherent
state $\psi (f)$ is given by $\| f\|^2$, that is
\[ \langle \psi (f), \cN \psi (f) \rangle = \| f \|^2
\, . \] Also the variance of the number of particles in $\psi (f)$
is given by $\|f \|^2$ (the distribution of $\cN$ is Poisson), that
is
\[ \langle \psi (f), \cN^2 \psi (f) \rangle - \langle \psi (f) ,
\cN \psi (f) \rangle^2 = \| f \|^2 \, .\]
\item[vi)] Coherent states are normalized but not orthogonal to each
other. In fact
\[ \langle \psi (f) , \psi (g) \rangle = e^{-\frac{1}{2}\left( \| f
\|^2 + \| g \|^2 - 2 (f,g) \right)}  \quad \Rightarrow \quad
|\langle \psi (f) , \psi (g) \rangle| = e^{-\frac{1}{2} \| f- g
\|^2} \, .\]
\end{itemize}
\end{lemma}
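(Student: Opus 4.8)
The plan is to reduce all six claims to one algebraic fact: the Baker--Campbell--Hausdorff identities $e^{A}e^{B} = e^{A+B}\,e^{\frac12[A,B]}$ and $e^{A}e^{B} = e^{B}e^{A}\,e^{[A,B]}$, valid whenever $[A,B]$ is a scalar (hence central). Writing $A = a^*(f)-a(f)$ and $B = a^*(g)-a(g)$, the canonical commutation relations (\ref{eq:comm}) give
\[ [A,B] = \langle g,f\rangle - \langle f,g\rangle = -2i\,\text{Im}\,\langle f,g\rangle , \]
which is indeed a multiple of the identity. Part i) is then immediate: $W(f)W(g) = e^{A}e^{B} = e^{A+B}\,e^{\frac12[A,B]} = W(f+g)\,e^{-i\,\text{Im}\langle f,g\rangle}$, while the expression relating $W(f)W(g)$ to $W(g)W(f)$ follows from the second identity with the factor $e^{[A,B]} = e^{-2i\,\text{Im}\langle f,g\rangle}$.

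For ii) I would note that $A = a^*(f)-a(f)$ is skew-adjoint ($A^* = a(f)-a^*(f) = -A$, since $a^*(f)$ is the adjoint of $a(f)$), so $W(f) = e^{A}$ is unitary with $W(f)^* = e^{-A} = W(-f)$; and $W(f)W(-f) = W(0) = \mathbf 1$ by i) (the phase vanishes because $\text{Im}\langle f,-f\rangle = 0$), so $W(-f) = W(f)^{-1}$. For iii) the standard device is to differentiate the conjugation in a real parameter: set $F(t) = e^{-tA}a_x e^{tA}$; then $F'(t) = e^{-tA}[a_x,A]e^{tA}$ and $[a_x,A] = [a_x,a^*(f)] = f(x)$ is a scalar, so $F'(t)\equiv f(x)$ and integrating from $0$ to $1$ yields $W(f)^* a_x W(f) = a_x + f(x)$; the relation for $a^*_x$ follows by taking adjoints. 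Part iv) is then a one-liner: $a_x\psi(f) = W(f)\,\big(W(f)^* a_x W(f)\big)\Omega = W(f)(a_x+f(x))\Omega = f(x)\psi(f)$ because $a_x\Omega = 0$, and integrating against $\overline f$ gives $a(g)\psi(f) = \langle g,f\rangle\psi(f)$.

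Part v) I would obtain from iv) together with $\cN = \int \rd x\, a^*_x a_x$ and $\|\psi(f)\| = 1$ (from ii)): $\langle\psi(f),\cN\psi(f)\rangle = \int \rd x\,\|a_x\psi(f)\|^2 = \int \rd x\,|f(x)|^2 = \|f\|^2$. For the second moment, commuting one annihilation operator past one creation operator gives $\cN^2 = \int \rd x\rd y\; a^*_x a^*_y a_y a_x + \cN$; the normal-ordered term, evaluated with iv) and its adjoint, equals $\|f\|^4$, so $\langle\psi(f),\cN^2\psi(f)\rangle = \|f\|^4 + \|f\|^2$ and the variance is $\|f\|^2$. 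Finally vi) combines ii), i) and the explicit series (\ref{eq:coh}): $\langle\psi(f),\psi(g)\rangle = \langle\Omega, W(-f)W(g)\Omega\rangle = e^{i\,\text{Im}\langle f,g\rangle}\,\langle\Omega,\psi(g-f)\rangle = e^{i\,\text{Im}\langle f,g\rangle}\, e^{-\|g-f\|^2/2}$ (only the $n=0$ term of (\ref{eq:coh}) survives the pairing with $\Omega$), which rearranges to $e^{-\frac12(\|f\|^2+\|g\|^2-2\langle f,g\rangle)}$ and has modulus $e^{-\|f-g\|^2/2}$.

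None of the six items is genuinely difficult; the whole substance is the observation that the relevant commutators are scalars, after which everything is formal. The only point that calls for some care is the rigorous handling of the unbounded operators: that $e^{A}$ makes sense (via skew-adjointness and Stone's theorem, or the explicit convergent series (\ref{eq:coh})), that BCH and the differentiation argument in iii) are justified on a common dense invariant domain such as the finite-particle vectors, and that the operator-valued-distribution manipulations in v) are legitimate. Since these are precisely the well-known facts the lemma is recalling, I would either cite them or carry out the computations on the algebraic Fock space of finite-particle states and pass to the closure.
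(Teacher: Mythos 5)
Your proof is correct. The paper itself gives no argument for this lemma --- it is stated as a collection of well-known facts, the only related computation being the factorization $\exp(a^*(f)-a(f)) = e^{-\|f\|^2/2}\exp(a^*(f))\exp(-a(f))$ used to derive (\ref{eq:coh}) --- and your derivation (BCH with central commutator $[A,B]=-2i\,\mathrm{Im}\,\langle f,g\rangle$ for i), ii), vi), the differentiation of $e^{-tA}a_x e^{tA}$ for iii), and the resulting eigenvector property for iv), v)) is exactly the standard route the authors are implicitly invoking, with the domain issues for the unbounded operators correctly flagged.
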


\section{Time evolution of coherent states}\label{sec:coh}
\setcounter{equation}{0}

Next we study the dynamics of coherent states with expected number
of particles $N$ in the limit $N \to \infty$. We choose the initial
data
\begin{equation} \psi (\sqrt{N} \ph) = W(\sqrt{N} \ph) \Omega \qquad
\text{for $\ph \in H^1 (\bR^3)$ with } \| \ph \| =1 \end{equation}
and we study its time evolution $\psi (N,t) = e^{-i \cH_N t} \psi
(\sqrt{N} \ph)$ with the Hamiltonian $\cH_N$ defined in
(\ref{eq:ham2}).

\begin{theorem}\label{thm:coh}
Suppose that there exists $D >0$ such that the operator inequality
\begin{equation}\label{eq:assump-coh} V^2 (x) \leq D (1 - \Delta_x) \end{equation}
holds true. Let $\Gamma_{N,t}^{(1)}$ be the one-particle marginal
associated with $\psi(N,t)= e^{-i\cH_N t} W(\sqrt{N} \ph) \Omega$ (as defined
in (\ref{eq:margi})). Then there exist constants $C,K>0$ (only
depending on the $H^1$-norm of $\ph$ and on the constant $D$
appearing in (\ref{eq:assump-coh})) such that
\begin{equation}\label{eq:tr-1}
\tr \; \Big|\Gamma^{(1)}_{N,t} - |\ph_t \rangle \langle \ph_t| \Big|
\leq \frac{C}{N} \; e^{K t}
\end{equation}
for all $t \in \bR$.
\end{theorem}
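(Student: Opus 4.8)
The plan is to follow the Hepp--Ginibre--Velo strategy in Fock space. The key observation is that, since coherent states are eigenvectors of annihilation operators, the evolution $e^{-i\cH_N t} W(\sqrt N \ph)\Om$ can be conjugated by Weyl operators so as to peel off the Hartree dynamics, leaving a fluctuation dynamics. Concretely, I would introduce $\cU_N(t;s)$ defined by $\cU_N(t;s) = W^*(\sqrt N \ph_t)\, e^{-i\cH_N(t-s)}\, W(\sqrt N \ph_s)$, where $\ph_t$ solves the Hartree equation (\ref{eq:hartree2}) with $\ph_s$ as data at time $s$. A computation using part iii) of Lemma \ref{lm:coh} (the shift $W^*(\sqrt N\ph)a_x W(\sqrt N\ph)=a_x+\sqrt N\ph(x)$) shows that the terms in $\cH_N$ that are quadratic in $\sqrt N\ph$ (and hence of order $N$) are exactly cancelled by the phase and by the Hartree equation itself, so that $\cU_N(t;s)$ is generated by an $N$-dependent but \emph{bounded-in-$N$} (in the relevant sense) time-dependent generator $\cL_N(t)$ containing terms at most quadratic in creation/annihilation operators plus lower-order corrections. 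The point of the conjugation is that $\psi(N,t)=e^{-i\cH_Nt}W(\sqrt N\ph)\Om = e^{i\chi_N(t)} W(\sqrt N\ph_t)\,\cU_N(t;0)\,\Om$ for a scalar phase $\chi_N(t)$, so the fluctuation vector is $\cU_N(t;0)\Om$, whose expected number of particles measures the deviation from pure coherent evolution.

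Next I would reduce the trace-norm bound (\ref{eq:tr-1}) to a bound on the growth of the number of particles in the fluctuation vector. Using (\ref{eq:margi}) and the shift relations, the kernel of $\Gamma^{(1)}_{N,t}$ decomposes as $\ph_t(x)\ov\ph_t(y) + (\text{terms involving }\cU_N(t;0)\Om)$, and the error terms are, schematically, $\tfrac1N\la \cU_N(t;0)\Om, a_y^* a_x \cU_N(t;0)\Om\ra$ plus cross terms of the form $\tfrac1{\sqrt N}\ph_t(x)\la \cU_N(t;0)\Om, a_y \cU_N(t;0)\Om\ra$ and its conjugate. Testing $\Gamma^{(1)}_{N,t}-|\ph_t\ra\la\ph_t|$ against an arbitrary Hilbert--Schmidt (or compact, rank-structured) operator $J$, and using Lemma \ref{lm:a-bd} to bound $a(f),a^*(f)$ by $(\cN+1)^{1/2}$, all these error terms are controlled by $\tfrac1N\la\cU_N(t;0)\Om,(\cN+1)\cU_N(t;0)\Om\ra$ up to constants; combined with the remark in the introduction that for a difference of a state with a rank-one projection the trace norm is at most twice the operator norm, this yields
\[
\tr\;\Big|\Gamma^{(1)}_{N,t}-|\ph_t\ra\la\ph_t|\Big| \le \frac{C}{N}\,\la \cU_N(t;0)\Om,(\cN+1)\,\cU_N(t;0)\Om\ra .
\]

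The crux is therefore the \emph{a priori estimate} on the moments of $\cN$ under the fluctuation dynamics, i.e. a Gronwall bound
\[
\la \cU_N(t;0)\Om,(\cN+1)\,\cU_N(t;0)\Om\ra \le C\, e^{Kt},
\]
uniformly in $N$. This is the hard part, and it is where the Coulomb-type singularity enters through assumption (\ref{eq:assump-coh}). To prove it I would differentiate $\la\psi_t,(\cN+1)\psi_t\ra$ with $\psi_t=\cU_N(t;0)\Om$ in time, obtaining $\tfrac{d}{dt}\la\psi_t,(\cN+1)\psi_t\ra = \la\psi_t, i[\cL_N(t),\cN]\,\psi_t\ra$; only the terms of $\cL_N(t)$ that do not commute with $\cN$ survive, and these are the ones cubic and quadratic in $a,a^*$ carrying a factor $V*|\ph_t|^2$ or $V(\cdot-\cdot)$ with one or two uncompensated $\sqrt N$'s, i.e. terms like $N^{-1/2}\int V(x-y)\ph_t(y)\,a_x^*a_x^*a_y\,\rd x\,\rd y$ and $\int V(x-y)\ph_t(x)\ov\ph_t(y)\,a_x^*a_y^*\,\rd x\,\rd y$, plus conjugates. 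Each such term must be estimated by $C\la\psi_t,(\cN+1)\psi_t\ra$; the singular potential is absorbed precisely by (\ref{eq:assump-coh}), which gives $\int V(x-y)|h(y)|^2\rd y \le \e\|\nabla h\|^2 + \e^{-1}\|h\|^2$ and, more to the point, lets one bound $\|V(x-\cdot)\ph_t\|$ and the relevant quadratic forms by the kinetic energy of $\ph_t$ (controlled by $\|\ph_t\|_{H^1}$, which stays bounded by the global well-posedness and conservation laws noted in the remark after Theorem \ref{thm:fact}). I expect that a clean way to carry this out — avoiding the need to also control the kinetic energy of the fluctuation vector, which would require a separate and harder bound — is to split the potential via (\ref{eq:assump-coh}) into a bounded part and a part absorbed into $-\Delta$ acting on $\ph_t$, so that only $\|\ph_t\|_{H^1}$ and the number operator appear on the right-hand side. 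Once this differential inequality is established, Gronwall gives the $e^{Kt}$ growth with $K$ depending only on $\|\ph\|_{H^1}$ and $D$, and the theorem follows. (These estimates on $i[\cL_N(t),\cN]$, controlling the growth of fluctuations, are exactly what is isolated as Proposition \ref{prop:1cou}.)
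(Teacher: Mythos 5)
There is a genuine gap, and it sits exactly at the point where the claimed rate $1/N$ has to be produced. After the decomposition of $\Gamma^{(1)}_{N,t}(x;y)-\ph_t(x)\overline{\ph}_t(y)$ you correctly identify the quadratic term $\frac1N\langle \cU_N(t;0)\Omega, a_y^*a_x\,\cU_N(t;0)\Omega\rangle$ and the cross terms $\frac{1}{\sqrt N}\ph_t(x)\langle \cU_N(t;0)\Omega, a_y^*\,\cU_N(t;0)\Omega\rangle$ (and conjugate), but then you assert that \emph{all} of these are controlled by $\frac{C}{N}\langle \cU_N(t;0)\Omega,(\cN+1)\cU_N(t;0)\Omega\rangle$. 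That is not what Lemma \ref{lm:a-bd} gives: for the cross terms it only yields a bound of size $\frac{1}{\sqrt N}\|(\cN+1)^{1/2}\cU_N(t;0)\Omega\|$, i.e.\ $O(N^{-1/2})$, which is the rate of Theorem \ref{thm:fact}, not the $O(N^{-1})$ rate claimed in Theorem \ref{thm:coh}. The expectations $\langle\Omega,\cU_N(t;0)^*a_x\,\cU_N(t;0)\Omega\rangle$ do not vanish, because the cubic ($N^{-1/2}$) term in $\cL_N(t)$ breaks parity; they are merely small, and extracting that smallness is the whole point of the paper's extra step: one introduces the auxiliary dynamics $\wt\cU_N$ of (\ref{eq:wtcUN}), whose generator omits the cubic term and therefore commutes with $(-1)^{\cN}$, so that $\langle\Omega,\wt\cU_N(t;0)^*a_x\,\wt\cU_N(t;0)\Omega\rangle=0$; the cross terms are then rewritten through differences $(\cU_N-\wt\cU_N)\Omega$ as in (\ref{eq:gamma3cou}), and the estimate $\|(\cU_N(t;0)-\wt\cU_N(t;0))\Omega\|\le CN^{-1/2}e^{Kt}$ (Lemma \ref{lm:3cou}, which itself needs the $\cN^3$ bound for $\wt\cU_N$ of Lemma \ref{lm:2cou}) supplies the missing factor $N^{-1/2}$. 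Without some substitute for this comparison argument your proposal proves only the weaker $N^{-1/2}$ bound.

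A second, related gap is in your sketch of the a priori estimate on $\langle\cU_N(t;0)\Omega,(\cN+1)\cU_N(t;0)\Omega\rangle$ (the content of Proposition \ref{prop:1cou}). A direct Gronwall argument on this single moment does not close: the commutator of $\cN$ with the cubic term of $\cL_N(t)$ produces, after the Cauchy--Schwarz and assumption (\ref{eq:assump-coh}), a contribution of the schematic size $N^{-1/2}\,\|\cN^{1/2}\psi_t\|\,\|(\cN+1)\psi_t\|$, i.e.\ a $3/2$-moment of $\cN$, not $C\langle\psi_t,(\cN+1)\psi_t\rangle$; and passing to higher moments only propagates the mismatch. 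This is precisely why the paper proves Proposition \ref{prop:1cou} through the truncated dynamics $\cU_N^{(M)}$ with the cutoff $\chi(\cN\le M)$ in the cubic term (so that the offending factor is bounded by $\sqrt{M/N}$, Lemma \ref{lem:trunc}), supplements it with the weak $N$-dependent bounds of Lemma \ref{lem:weak} obtained by conjugating back with Weyl operators via (\ref{eq:GV}), compares $\cU_N$ with $\cU_N^{(M)}$ (Lemma \ref{lem:comp}), and finally sets $M=N$. Your splitting of $V$ via (\ref{eq:assump-coh}) handles the singularity of the potential, but it does not address this structural obstruction coming from the cubic term, which is the actual difficulty the proposition is designed to overcome.
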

\begin{remark}
The use of coherent states as initial data allows us
to obtain the optimal rate of convergence $1/N$ for all fixed times
(while for the evolution of factorized $N$-particle states we only
get the rate $1/\sqrt{N}$; see (\ref{eq:bd1})).
\end{remark}
\begin{proof}
The proof of Theorem \ref{thm:coh} will occupy the remaining subsections
of section \ref{sec:coh}.
\subsection{Dynamics $\cU_N$ of quantum fluctuations}
By (\ref{eq:margi}), the kernel of $\Gamma^{(1)}_{N,t}$ is given by
\begin{equation}\label{eq:gamma}
\begin{split}
\Gamma^{(1)}_{N,t} (x;y) = &\frac{1}{N} \left\langle \Omega, W^*
(\sqrt{N} \ph) e^{i\cH_N t} a_y^* a_x e^{-i\cH_N t} W(\sqrt{N} \ph)
\Omega \right\rangle \\ = \; &\ph_t (x) \overline{\ph}_t (y) +
\frac{\overline{\ph}_t (y)}{\sqrt{N}} \left\langle \Omega, W^*
(\sqrt{N} \ph) e^{i\cH_N t} (a_x - \sqrt{N} \ph_t(x)) e^{-i\cH_N t}
W(\sqrt{N} \ph) \Omega \right\rangle \\ &+ \frac{\ph_t
(x)}{\sqrt{N}} \left\langle \Omega, W^* (\sqrt{N} \ph) e^{i\cH_N t}
(a_y^* - \sqrt{N} \overline{\ph}_t (y)) e^{-i\cH_N t} W(\sqrt{N}
\ph) \Omega \right\rangle \\ &+ \frac{1}{N} \left\langle \Omega, W^*
(\sqrt{N} \ph) e^{i\cH_N t} ( a_y^* - \sqrt{N} \overline{\ph}_t (y))
(a_x - \sqrt{N} \ph_t (x)) e^{-i \cH_N t} W(\sqrt{N} \ph ) \Omega
\right\rangle \,.
\end{split}
\end{equation}
It was observed by Hepp in \cite{Hepp} (see also
Eqs. (1.17)-(1.28) in \cite{GV}) that
\begin{equation}\label{eq:GV}
\begin{split} W^* (\sqrt{N} \ph_s) \; e^{i\cH_N (t-s)} (a_x - \sqrt{N} \ph_t
(x)) e^{-i\cH_N (t-s)} W (\sqrt{N} \ph_s) &= \cU_N  (t;s)^* \, a_x \,
\cU_N (t;s) \\ &= \cU_N (s;t) \, a_x \, \cU_N (t;s)
\end{split}\end{equation} where the unitary evolution $\cU_N (t;s)$
is determined by the equation\footnote{Note that, explicitly,
$\cU_N(t,s)=W^*(\sqrt N \phi_t) e^{-i\cH_N(t-s)} W(\sqrt N\phi_s)$.}
\begin{equation}\label{eq:cUN} i
\partial_t \cU_N (t;s) = \cL_N (t) \cU_N (t;s) \qquad \text{and}
\quad \cU_N (s;s)= 1 \end{equation} with the generator
\begin{equation}\label{eq:cLN}
\begin{split}
\cL_N (t) = & \int \rd x \, \nabla_x a^*_x \nabla_x a_x + \int \rd x
\, \left(V *|\ph_t|^2 \right) (x) \, a^*_x a_x +
\int \rd x \rd y \, V(x-y) \, \overline{\ph_t} (x) \ph_t (y) a^*_y a_x \\
&+ \frac{1}{2} \int \rd x \rd y \, V(x-y) \left( \ph_t (x) \ph_t (y)
a^*_x a^*_y +
\overline{\ph_t} (x) \overline{\ph_t} (y) a_x a_y \right) \\
&+\frac{1}{\sqrt{N}} \int \rd x \rd y \, V(x-y) \, a_x^* \left(
\ph_t (y)
a^*_y  + \overline{\ph_t} (y) a_y \right) a_x \\
&+\frac{1}{2N} \int \rd x \rd y \, V(x-y) \, a^*_x a^*_y a_y a_x \, .
\end{split}
\end{equation}
It follows from (\ref{eq:gamma}) that
\begin{equation}\label{eq:gamma2cou}
\begin{split}
\Gamma^{(1)}_{N,t} (x,y) - \ph_t (x) \overline{\ph}_t (y) = \;
&\frac{1}{N} \left\langle \Omega, \cU_N (t;0)^* a_y^* a_x \cU_N
(t;0) \Omega \right\rangle \\ &+ \frac{\ph_t (x)}{\sqrt{N}}
\left\langle \Omega,\cU_N (t;0)^* a^*_y \cU_N (t;0) \Omega \right\rangle \\
&+ \frac{\overline{\ph}_t (y)}{\sqrt{N}} \left\langle \Omega,\cU_N
(t;0)^* a_x \cU_N (t;0) \Omega \right\rangle\,.
\end{split}
\end{equation}
In order to produce another decaying factor $1/\sqrt{N}$ in the last
two term on the r.h.s. of the last equation, we compare the
evolution $\cU_N (t;0)$ with another evolution $\wt\cU_N (t;0)$
defined through the equation
\begin{equation}\label{eq:wtcUN} i
\partial_t \wt \cU_N (t;s) = \wt\cL_N (t)\, \wt\cU_N (t;s) \qquad
\text{with} \quad \wt\cU_N (s;s) = 1 \end{equation} with the
time-dependent generator
\begin{equation}
\begin{split}
\wt \cL_N (t) = & \int \rd x \, \nabla_x a^*_x \nabla_x a_x + \int
\rd x \, \left(V *|\ph_t|^2 \right) (x) \, a^*_x a_x + \int \rd x
\rd y \, V(x-y) \overline{\ph_t} (x) \ph_t (y) a^*_y a_x \\
&+ \frac{1}{2} \int \rd x \rd y \, V(x-y) \left( \ph_t (x) \ph_t (y)
a^*_x a^*_y +
\overline{\ph_t} (x) \overline{\ph_t} (y) a_x a_y \right) \\
&+\frac{1}{2N} \int \rd x \rd y \, V(x-y) \, a^*_x a^*_y a_y a_x \, .
\end{split}
\end{equation}
{F}rom (\ref{eq:gamma2cou}) we find
\begin{equation}\label{eq:gamma3cou}
\begin{split}
&\Gamma^{(1)}_{N,t} (x;y) - \ph_t (x) \overline{\ph}_t (y) \\ & =
\frac{1}{N} \langle \Omega, \cU_N (t;0)^* a_y^* a_x \cU_N (t;0)
\Omega \rangle \\ &\;\; + \frac{\ph_t (x)}{\sqrt{N}} \left(
\left\langle \Omega,\cU_N (t;0)^* a_y^* \left(\cU_N (t;0) - \wt
\cU_N (t;0)\right) \Omega \right\rangle + \left\langle \Omega,
\left(\cU_N (t;0)^* - \wt \cU_N (t;0)^* \right) a_y^* \wt \cU_N
(t;0) \Omega
\right\rangle \right)\\
&\;\; + \frac{\overline{\ph}_t (y)}{\sqrt{N}} \left( \left\langle
\Omega,\cU_N (t;0)^* a_x \left( \cU_N (t;0) - \wt \cU_N (t;0)
\right) \Omega \right\rangle + \left\langle \Omega, \left(\cU_N
(t;0)^* -\wt \cU_N (t;0)^* \right) a_x \wt \cU_N (t;0) \Omega
\right\rangle \right).
\end{split}
\end{equation}
Here we used the fact that
\[ \left\langle \Omega, \wt \cU_N (t;0)^*  a_y \, \wt \cU_N (t;0) \Omega
\right\rangle = \left\langle \Omega,\wt \cU_N (t;0)^* a_x^*\,  \wt
\cU_N (t;0) \Omega \right\rangle = 0 \, .\] This follows from the
observation that, although the evolution $\wt\cU_N (t)$ does not
preserve the number of particles, it preserves the parity (it
commutes with $(-1)^{\cN}$). Multiplying (\ref{eq:gamma3cou}) with
the kernel $J(x,y)$ of a Hilbert-Schmidt operator $J$ over $L^2
(\bR^3)$ and taking the trace, we obtain
\begin{equation*}
\begin{split}
\tr \, J &\left( \Gamma^{(1)}_{N,t} - |\ph_t \rangle \langle \ph_t|
\right) \\ = \; &\frac{1}{N} \int \rd x \rd y \, J (x,y)
\langle a_y \cU_N (t;0) \Omega , a_x \cU_N (t;0) \Omega \rangle \\
&+ \frac{1}{\sqrt{N}} \int \rd x \rd y \, J (x,y) \ph_t (x) \langle
a_y \cU_N (t;0) \Omega, \left(\cU_N
(t;0) - \wt \cU_N (t;0)\right) \Omega \rangle \\
&+ \frac{1}{\sqrt{N}} \int \rd x \rd y \, J (x,y) \ph_t (x) \langle
\left(\cU_N (t;0) - \wt \cU_N (t;0) \right)\Omega,  a^*_y \wt
\cU_N (t;0) \Omega \rangle \\
&+ \frac{1}{\sqrt{N}} \int \rd x \rd y \, J (x,y) \,
\overline{\ph}_t (y) \langle  a_x^* \cU_N (t;0) \Omega, \left( \cU_N
(t;0) - \wt \cU_N (t;0) \right) \Omega \rangle \\ & +
\frac{1}{\sqrt{N}} \int \rd x \rd y J (x,y) \, \overline{\ph}_t (y)
\langle \left(\cU_N (t;0) -\wt \cU_N (t;0) \right)\Omega,  a_x \wt
\cU_N (t;0) \Omega \rangle\,.
\end{split}
\end{equation*}
Hence
\begin{equation*}
\begin{split}
\Big| \tr \, J &\left( \Gamma^{(1)}_{N,t} - |\ph_t \rangle \langle
\ph_t| \right) \Big| \\ \leq \; &\frac{1}{N} \left(\int \rd
x \rd y \, |J (x,y)|^2 \right)^{1/2} \; \int \rd x \| a_x \cU_N (t;0) \Omega \|^2 \,\\
&+ \frac{1}{\sqrt{N}} \left\| \left(\cU_N (t;0) - \wt \cU_N (t;0)
\right) \Omega \right\|
\, \int \rd x \, |\ph_t (x)| \| a (J (x,.)) \cU_N (t;0) \Omega \| \\
&+ \frac{1}{\sqrt{N}} \left\|\left(\cU_N (t;0) - \wt \cU_N (t;0)
\right)\Omega\right\| \, \int \rd x |\ph_t (x)| \| a^* (J (x,.)) \wt
\cU_N (t;0) \Omega \| \\
&+ \frac{1}{\sqrt{N}} \left\|  \left( \cU_N (t;0) - \wt \cU_N (t;0)
\right) \Omega \right\|  \int \rd y \, |\ph_t (y)| \| a^* (J (.,y))
\cU_N (t;0) \Omega \| \\ & + \frac{1}{\sqrt{N}} \left\|\left(\cU_N
(t;0) -\wt \cU_N (t;0) \right)\Omega \right\| \, \int \rd y  \,
|\ph_t (y)| \| a (J(.,y)) \wt \cU_N (t;0) \Omega \|
\end{split}
\end{equation*}
and therefore
\begin{equation*}
\begin{split}
\Big| \tr \, J \left( \Gamma^{(1)}_{N,t} - |\ph_t \rangle
\langle \ph_t| \right) \Big|
\leq \; &\frac{\| J \|_{\text{HS}}}{N} \; \langle \cU_N (t;0) \Omega, \cN \cU_N (t;0) \Omega \rangle \\
&+ \frac{2 \| J \|_{\text{HS}}}{\sqrt{N}} \| (\cU_N (t;0) - \wt
\cU_N (t;0)) \Omega \|
\, \| (\cN+1)^{1/2} \cU_N (t;0) \Omega \| \\
&+ \frac{2 \| J \|_{\text{HS}}}{\sqrt{N}} \|(\cU_N (t;0) - \wt \cU_N
(t;0))\Omega\| \, \| (\cN+1)^{1/2} \wt \cU_N (t;0) \Omega \|\,.
\end{split}
\end{equation*}
The proof of Theorem  \ref{thm:coh}
now follows from Proposition \ref{prop:1cou}, Lemma
\ref{lm:2cou}, Lemma \ref{lm:3cou}, and from the remark that the
trace norm can be controlled, in this case, by twice the
Hilbert-Schmidt norm (see Remark 3 after Theorem \ref{thm:fact}).
\end{proof}

\begin{proposition}\label{prop:1cou}
Let $\cU_N (t;s)$ be the unitary evolution defined in
(\ref{eq:cUN}). Then there exists a constant $K$, and, for every $j \in \bN$, constants $C(j), K(j)$ (depending only on $\| \ph \|_{H^1}$ and on the constant $D$ appearing in
(\ref{eq:assump-coh})) such that
\begin{equation}\label{eq:lm1cou} \langle \cU_N (t;s) \psi, \cN^j \cU_N
(t;s) \psi \rangle \leq C(j) \langle \psi, (\cN+1)^{2j+2} \psi \rangle \, e^{K(j) |t-s|} \,. \end{equation}
for all $\psi \in \cF$, and for all $t,s \in \bR$.
\end{proposition}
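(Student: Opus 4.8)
The plan is to estimate the time derivative of $\langle \cU_N(t;s)\psi, (\cN+1)^j \cU_N(t;s)\psi\rangle$ and close a Gronwall inequality. Write $\psi_t := \cU_N(t;s)\psi$, so that $i\partial_t \psi_t = \cL_N(t)\psi_t$, and set $F_j(t) := \langle \psi_t, (\cN+1)^j \psi_t\rangle$. Then $\partial_t F_j(t) = i\langle \psi_t, [\cL_N(t), (\cN+1)^j]\psi_t\rangle$, so the whole game is to bound the commutator $[\cL_N(t), (\cN+1)^j]$ by something controllable by powers of $(\cN+1)$. Inspecting the six terms of $\cL_N(t)$ in (\ref{eq:cLN}): the first three terms (the kinetic term, the $V*|\ph_t|^2$ term, and the $a^*_y a_x$ term) all conserve the particle number, hence commute with $(\cN+1)^j$ and contribute nothing. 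The last term, $\tfrac{1}{2N}\int V(x-y)\,a^*_x a^*_y a_y a_x$, also conserves $\cN$ and drops out. So only two terms survive: the "pair creation/annihilation" term $\tfrac12\int V(x-y)(\ph_t(x)\ph_t(y)a^*_x a^*_y + \text{h.c.})$, which changes $\cN$ by $\pm 2$, and the cubic term $\tfrac{1}{\sqrt N}\int V(x-y)\,a^*_x(\ph_t(y)a^*_y + \overline{\ph_t}(y)a_y)a_x$, which changes $\cN$ by $\pm 1$.

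The key algebraic fact I would use is that for an operator $A$ that shifts the particle number by a fixed amount $m$ (i.e. $A\,\cN = (\cN - m)\,A$ on each sector), one has $[A,(\cN+1)^j] = A\big((\cN+1)^j - (\cN+1-m)^j\big)$, and the difference of these two polynomials is a polynomial in $\cN$ of degree $j-1$; thus $[A,(\cN+1)^j]$ is controlled by $\|A\,\Theta\|$ where $\Theta$ is of order $(\cN+1)^{j-1}$. Plugging $A = $ (pair term) and $A = $ (cubic term): for the pair term, after writing it out with $a^\#(\ph_t)$-type operators and using Lemma \ref{lm:a-bd} together with the bound $\int V(x-y)|\ph_t(y)|^2\,dy \le \e\|\nabla\ph_t\|^2 + \e^{-1}\|\ph_t\|^2$ (valid by (\ref{eq:assump-coh}), as noted after the statement of Theorem \ref{thm:fact}), one controls $\langle \psi_t,[\text{pair},(\cN+1)^j]\psi_t\rangle$ by $C\,\|\ph_t\|_{H^1}^2\, F_j(t)$ — the two creation operators bring two factors of $(\cN+1)^{1/2}$, the polynomial difference absorbs $(\cN+1)^{j-1}$, and the total matches $(\cN+1)^j$. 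For the cubic term one gets the sharper bound $\tfrac{C}{\sqrt N}\|\ph_t\|_{H^1}\, \langle\psi_t,(\cN+1)^{j+\text{something}}\psi_t\rangle$; here the three operators $a^\#_x a^\#_y a_x$ naively cost $(\cN+1)^{3/2}$, but the prefactor $1/\sqrt N$ is compensated by the fact that on states with $\langle\cN\rangle\sim N$ the extra $(\cN+1)$ is of order $N$ — this is why the final bound (\ref{eq:lm1cou}) is stated with $(\cN+1)^{2j+2}$ on the right rather than $(\cN+1)^j$, giving room for the cubic term to be absorbed after using $\cN/\sqrt N \le (\cN+1)/\sqrt N$ and Cauchy--Schwarz. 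I would handle the potential $V$ throughout by the operator inequality (\ref{eq:assump-coh}), integrating the $\nabla_x$ by parts onto the $a_x$'s so that $\|\nabla\ph_t\|$ appears; global boundedness of $\|\ph_t\|_{H^1}$ comes from the conservation laws discussed in Remark 2 after Theorem \ref{thm:fact}.

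Putting this together yields a differential inequality of the shape $|\partial_t F_j(t)| \le C(j)\,\big(F_j(t) + \tfrac1N\, G(t)\big)$ where $G(t)$ involves a slightly higher moment; one closes the system by first establishing (\ref{eq:lm1cou}) for $j=1$ (where the right side $(\cN+1)^4$ already dominates everything the cubic term produces), and then inducting on $j$, at each stage feeding in the already-proven bound for the lower moment to control the $1/N$-suppressed remainder. Gronwall then gives $F_j(t) \le C(j)\,F_{?}(s)\,e^{K(j)|t-s|}$ with the moment on the right being at most $2j+2$, as claimed. The main obstacle is the bookkeeping for the cubic term: one must be careful that the product $a^\#_x(\cdots)a_x$ is handled by distributing $(\cN+1)^{1/2}$ factors symmetrically and pulling the $x,y$ integrations through via Cauchy--Schwarz in such a way that only $\|\ph_t\|_{L^2}$ and $\|\nabla\ph_t\|_{L^2}$ (via (\ref{eq:assump-coh})) appear, with no uncontrolled $L^\infty$ or higher Sobolev norm of $\ph_t$ — getting the powers of $(\cN+1)$ to line up with exactly the budget $2j+2$ is the delicate point.
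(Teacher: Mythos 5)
Your treatment of the number-conserving terms and of the pair creation/annihilation term is fine and matches the opening computation of the paper's argument (the quartic term indeed drops out, and the pair term contributes $C\,F_j(t)$). The genuine gap is in the cubic term $N^{-1/2}\int V(x-y)\,a_x^*\big(\ph_t(y)a_y^*+\overline{\ph}_t(y)a_y\big)a_x$. Its commutator with $(\cN+1)^j$ produces, after Cauchy--Schwarz and (\ref{eq:assump-coh}), a term of the form $N^{-1/2}\langle\psi_t,(\cN+1)^{j+1}\psi_t\rangle$, i.e.\ a \emph{higher} moment of the \emph{evolved} state, so your differential inequality is of the type $|\partial_t F_j|\lesssim F_j+N^{-1/2}F_{j+1}$: an open hierarchy. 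Your proposed closure --- start at $j=1$ and induct upward, ``feeding in the already-proven bound for the lower moment to control the $1/N$-suppressed remainder'' --- runs in the wrong direction, since the remainder involves higher moments, not lower ones, and the hierarchy never closes. The heuristic that the extra $(\cN+1)$ is of order $N$ ``on states with $\langle\cN\rangle\sim N$'' is also unjustified: the proposition is applied to states such as $\Omega$, for which the whole point is that $\cU_N(t;0)\Omega$ has $O(1)$ particles, and in any case one cannot replace the operator $\cN$ by its expectation inside the commutator estimate. In particular, the loss from $(\cN+1)^j$ to $(\cN+1)^{2j+2}$ in (\ref{eq:lm1cou}) does not arise from ``absorbing'' the cubic term in this way.

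The paper closes the hierarchy with three ingredients absent from your proposal: (i) a truncated dynamics $\cU_N^{(M)}$ whose generator carries a cutoff $\chi(\cN\le M)$ in the cubic term; since $\|a(V(x-\cdot)\ph_t)\chi(\cN\le M)\|\le M^{1/2}\sup_x\|V(x-\cdot)\ph_t\|$, the cubic contribution becomes $\sqrt{M/N}\,F_j$ (the \emph{same} moment) and Gronwall closes for $\cU_N^{(M)}$; (ii) weak a-priori bounds on the full dynamics, $\langle\psi,\cU_N(t;s)\cN^{2\ell}\cU_N(t;s)^*\psi\rangle\le C(\ell)\langle\psi,(\cN+N)^{2\ell}\psi\rangle$, obtained not by Gronwall but from the algebraic identity (\ref{eq:GV}), namely $\cU_N^*(t;s)\,\cN\,\cU_N(t;s)=W^*(\sqrt N\ph_s)\big(\cN-\sqrt N\,e^{i\cH_N(t-s)}\phi(\ph_t)e^{-i\cH_N(t-s)}+N\big)W(\sqrt N\ph_s)$; these cost powers of $N$ but hold uniformly in time; (iii) a Duhamel comparison of $\cU_N$ with $\cU_N^{(M)}$, in which the difference of generators carries $\chi(\cN>M)\le(\cN/M)^{2j}$, so that upon choosing $M=N$ the powers of $N$ lost in (ii) are recovered. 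It is this combination that produces the exponent $2j+2$ on the right-hand side of (\ref{eq:lm1cou}). Without some substitute for (ii) --- an $N$-dependent but time-uniform a priori bound exploiting the explicit Weyl-conjugation structure of $\cU_N$ --- your direct Gronwall-plus-induction scheme cannot be closed.
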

\begin{remark}
Proposition \ref{prop:1cou} states that the number of particles produced by the
dynamics $\cU_N$ of quantum fluctuations is independent of $N$ and grows in
time with at most exponential rate. This $N$-independence plays an important role
in our analysis. Its proof requires the introduction of yet another dynamics $\cU_N^{(M)}$,
whose generator looks very similar to $\cL_N(t)$ but contains a cutoff, in the cubic term, guaranteeing that the number of particles is smaller than a given $M$.
\end{remark}
\begin{proof}
We start by introducing a new unitary dynamics with time-dependent
generator $\cL^{(M)}_N (t)$ similar to $\cL_N (t)$ but with a cutoff
in the number of particles in the cubic term.

\subsection{Truncated dynamics $\cU_N^{(M)}$}

For a fixed $M >0$ (at the end we will choose $M=N$), we consider the
time-dependent generator
\begin{equation}
\begin{split}
\cL^{(M)}_N (t) = &\int \rd x \, \nabla_x a^*_x \nabla_x a_x + \int
\rd x \, \left(V *|\ph_t|^2 \right) (x) \, a^*_x a_x + \int \rd x
\rd y \, V(x-y) \, \overline{\ph_t} (x) \ph_t (y) a^*_y a_x \\
&+ \frac{1}{2} \int \rd x \rd y \, V(x-y) \left( \ph_t (x) \ph_t (y)
a^*_x a^*_y +
\overline{\ph_t} (x) \overline{\ph_t} (y) a_x a_y \right) \\
&+\frac{1}{\sqrt{N}} \int \rd x \rd y \, V(x-y) \, a^*_x \left(
\overline{\ph}_t (y) a_y \chi (\cN \leq M) + \ph_t (y) \chi
(\cN \leq M) a_y^* \right) a_x \\
&+\frac{1}{2N} \int \rd x \rd y \, V(x-y) \, a^*_x a^*_y a_y a_x \,
\end{split}
\end{equation}
and the corresponding time-evolution $\cU_N^{(M)} (t;s)$, defined by
\[ i\partial_t \cU_N^{(M)} (t;s) = \cL^{(M)}_N (t) \cU_N^{(M)} (t;s)
\qquad \text{with} \qquad \cU_N^{(M)} (s;s) = 1 \, .\]

\bigskip

\noindent
{\it Step 1. in the proof of Proposition \ref{prop:1cou}}
\begin{lemma}\label{lem:trunc}
There exists a constant $K$ (only depending on $\| \ph
\|_{H^1}$ and on the constant $D$ in (\ref{eq:assump-coh})), such
that, for all $N,M \in \bN$, $\psi \in \cF$, and $t,s \in \bR$
\begin{equation}\label{eq:step1cou} \langle \cU_N^{(M)} (t;s)
\psi, \cN^j \cU_N^{(M)} (t;s) \psi \rangle \leq   \langle \psi, (\cN+1)^j \psi
\rangle \; \exp \left(4^{j} \, K  |t-s| (1+ \sqrt{M/N})\right)\,.
\end{equation}
\end{lemma}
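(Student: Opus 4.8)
The plan is to control the growth of $\langle \cU_N^{(M)}(t;s)\psi, \cN^j \cU_N^{(M)}(t;s)\psi\rangle$ by a Grönwall argument, differentiating in $t$ and estimating the commutator $[\cN^j, \cL_N^{(M)}(t)]$. First I would set $\psi_t := \cU_N^{(M)}(t;s)\psi$ and compute
\[ \frac{\rd}{\rd t} \langle \psi_t, \cN^j \psi_t \rangle = \langle \psi_t, i[\cL_N^{(M)}(t), \cN^j] \psi_t \rangle \, . \]
Since $\cN$ commutes with the quadratic terms in $\cL_N^{(M)}$ that preserve particle number (the kinetic term, the $(V*|\ph_t|^2)a_x^*a_x$ term, and the $a_y^*a_x$ term) as well as with the quartic term $a_x^*a_y^*a_ya_x$, the only contributions to the commutator come from the terms that change the particle number by $\pm 2$ (the $a_x^*a_y^* + a_xa_y$ term, which carries no $N^{-1/2}$) and by $\pm 1$ (the cutoff cubic term, which carries $N^{-1/2}$). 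The key point of introducing the cutoff $\chi(\cN\le M)$ is precisely that it makes the cubic term bounded relative to $\cN$: on the range of $\chi(\cN \le M)$ one has $a_y \chi(\cN \le M)$ controlled by $\sqrt{M}$ times a bounded operator, so the contribution of the cubic term to the commutator estimate will come with a factor $\sqrt{M}/\sqrt{N} = \sqrt{M/N}$.

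Next I would estimate each piece of $\langle \psi_t, i[\cL_N^{(M)}(t),\cN^j]\psi_t\rangle$ in terms of $\langle \psi_t, (\cN+1)^j\psi_t\rangle$. For the quadratic $\pm 2$ term, write it as $\tfrac12\phi$-type expressions; using that $\cN^j$ shifted by $\pm 2$ is comparable to $(\cN+1)^j$ up to a multiplicative constant $4^j$ (this is where the factor $4^j$ in the exponent arises — iterating, $(\cN+2)^j \le 2^j(\cN+1)^j$, and one collects such factors), together with Lemma \ref{lm:a-bd} and the assumption (\ref{eq:assump-coh}) to bound $\int V(x-y)\ph_t(x)\ph_t(y)\,(\cdots)$ by $\|\ph_t\|_{H^1}^2$-dependent constants. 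Crucially, by the second inequality in (\ref{eq:H1ph}) together with energy conservation for the Hartree equation, $\|\ph_t\|_{H^1}$ is bounded uniformly in $t$ by a constant depending only on $\|\ph\|_{H^1}$, so the coefficients in the differential inequality are time-independent. For the cubic term one proceeds analogously but retains the extra $N^{-1/2}$ and picks up $\sqrt{M}$ from $\chi(\cN\le M)$, yielding a contribution bounded by $K\sqrt{M/N}\,\langle\psi_t,(\cN+1)^j\psi_t\rangle$. Collecting everything gives
\[ \frac{\rd}{\rd t}\langle\psi_t,\cN^j\psi_t\rangle \le 4^j K\,(1+\sqrt{M/N})\,\langle\psi_t,(\cN+1)^j\psi_t\rangle \, ,\]
and then Grönwall's inequality (applied to $\langle\psi_t,(\cN+1)^j\psi_t\rangle$, noting that the constant operator $1$ only helps) yields (\ref{eq:step1cou}).

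The main technical obstacle is making rigorous the formal manipulations with the operator-valued distributions $a_x, a_x^*$: the generator $\cL_N^{(M)}(t)$ is unbounded, and one must justify differentiating $\langle\psi_t,\cN^j\psi_t\rangle$ and computing the commutator on a suitable dense domain. The standard way around this — which I would follow — is to first prove the bound with $\cN$ replaced by a bounded regularization, e.g. $\cN_{\le L} := \cN\,\chi(\cN\le L)$ or $\cN/(1+\cN/L)$, for which all expressions are bounded operators and the computation is legitimate, derive the differential inequality with constants uniform in $L$, apply Grönwall, and then let $L\to\infty$ using monotone convergence. One also needs that $\cU_N^{(M)}(t;s)$ genuinely exists as a strongly continuous unitary evolution and maps the domain of $(\cN+1)^j$ into itself; this follows because $\cL_N^{(M)}(t)$ is essentially self-adjoint on a natural domain (the cutoff makes the dangerous cubic term bounded relative to $\cN^{3/2}$, which is in turn controlled by $\cH_N + \cN$-type quantities) — a point one would address by a standard Kato-type argument before the main estimate.
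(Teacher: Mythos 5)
Your proposal follows essentially the same route as the paper's proof: differentiate the expectation of a power of $\cN+1$ along the truncated flow, observe that only the pair-creation quadratic term and the cutoff cubic term contribute to the commutator, bound them via $\sup_x \| V(x-\cdot)\ph_t\|$ (uniform in $t$ by (\ref{eq:assump-coh}) and conservation of mass and energy), with the cutoff $\chi(\cN \le M)$ supplying the $\sqrt{M}$ that combines with $N^{-1/2}$ into $\sqrt{M/N}$, and close with Gr\"onwall, the $4^j$ arising from the shift/binomial bookkeeping. This matches the paper's argument; the additional remarks on regularizing $\cN$ and on well-posedness of $\cU_N^{(M)}$ are sensible but not part of the paper's proof.
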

\begin{proof}[Proof of Lemma \ref{lem:trunc}.]
To prove
(\ref{eq:step1cou}) we compute the time-derivative of the
expectation of $(\cN+1)^j$. It suffices to consider the case $s=0$. We find
\begin{align*}
\frac{\rd}{\rd t} \langle \cU_N^{(M)} &(t;0) \psi, (\cN+1)^j
\cU_N^{(M)} (t;0) \psi \rangle \\ = \; &\langle \cU_N^{(M)} (t;0)
\psi, [i\cL^{(M)}_N (t), (\cN+1)^j ] \cU_N^{(M)} (t;0) \psi
\rangle
\\ = \; & \text{Im} \, \int \rd x \rd y V(x-y) \, \ph_t (x) \ph_t (y) \langle \cU_N^{(M)} (t;0)
\psi, [a_x^* a_y^* , (\cN+1)^j ] \cU_N^{(M)} (t;0) \psi \rangle \\
+&\frac 2{\sqrt N} \, \text{Im} \, \int \rd x \rd y V(x-y) \, \overline{\ph}_t (y)
\langle \cU_N^{(M)} (t;0)
\psi, [a_x^* a_y \chi (\cN \leq M) a_x, (\cN+1)^j]\, \cU_N^{(M)} (t;0) \psi \rangle
\end{align*}
Using the pull-through formulae $a_x \cN = (\cN+1) a_x$, $a_x^* \cN
= (\cN-1) a_x^*$, we find
$$
[a_x^*, (\cN+1)^j ]=\sum_{k=0}^{j-1} {j \choose k} (-1)^k (\cN+1)^k a_x^*,\qquad
[a_x, (\cN+1)^j ]=\sum_{k=0}^{j-1} {j \choose k}  (\cN+1)^k a_x.
$$
As a consequence,
\begin{align*}
[a_x^*a_y^*, (\cN+1)^j ]&=\sum_{k=0}^{j-1} {j \choose k} (-1)^k \left (a_x^* (\cN+1)^k a_y^*+ (\cN+1)^k a_x^*a_y^*\right)\\ &=\sum_{k=0}^{j-1} {j \choose k} (-1)^k \left ( \cN^{\frac k2} a_x^* a_y^* (\cN+2)^{\frac k2} +
(\cN+1)^{\frac k2} a_x^*a_y^* (\cN+3)^{\frac k2}\right),\\
[a_x, (\cN+1)^j ]&=\sum_{k=0}^{j-1} {j \choose k}  (\cN+1)^k a_x = \sum_{k=0}^{j-1} {j \choose k}\, (\cN+1)^{\frac k2} a_x \cN^{\frac k2}.
\end{align*}
Therefore
\begin{equation}\label{eq:sp1-1}
\begin{split}
\frac{\rd}{\rd t} \langle \cU_N^{(M)} (t;0) \psi,& (\cN+1)^j
\cU_N^{(M)} (t;0) \psi \rangle \\ =  \; &\sum_{k=0}^{j-1} {j \choose k} (-1)^k \,
\text{Im} \, \int \rd x \rd y \,V(x-y) \, \ph_t (x) \ph_t
(y)  \\ &\hspace{.5cm} \times \langle \cU_N^{(M)} (t;0) \psi, \left ( \cN^{\frac k2} a_x^*a_y^*
(\cN+2)^{\frac k2}+
(\cN+1)^{\frac k2} a_x^*a_y^* (\cN+3)^{\frac k2}\right)
 \cU_N^{(M)} (t;0) \psi \rangle \\
&+\frac{2}{\sqrt{N}}\,  \sum_{k=0}^{j-1} {j \choose k} \,  \text{Im} \, \int \rd x \,
\\
&\hspace{.5cm}\times \langle \cU_N^{(M)} (t;0) \psi, a_x^* a (V(x-.)\ph_t) \chi (\cN \leq M)  (\cN+1)^{\frac k2} a_x \cN^{\frac k2} \cU_N^{(M)} (t;0) \psi \rangle.
\end{split}
\end{equation}
To control contributions from the first term we use bounds of the
form
\begin{equation*}
\begin{split} \Big|\int \rd x \rd y &V(x-y) \, \ph_t (x) \ph_t (y)\,
\langle \cU_N^{(M)} (t;0) \psi, (\cN+1)^{\frac k2}
a_x^* a_y^* (\cN+3)^{\frac k2} \cU_N^{(M)} (t;0) \psi \rangle \Big| \\
&\leq \int \rd x |\ph_t (x)| \| a_x (\cN+1)^{\frac k2} \cU_N^{(M)} (t;0)
\psi\| \, \| a^* (V(x-.)\ph_t)
(\cN+3)^{\frac k2}\cU_N^{(M)} (t;0) \psi\| \\
&\leq \const \,\sup_x  \left(\int V(x-y)^2 |\ph_t (y)|^2
\right)^{1/2} \| (\cN+3)^{\frac {k+1}2} \cU_N^{(M)} (t;0) \psi\|^2 \\
&\leq K \, \| (\cN+3)^{\frac {k+1}2} \cU_N^{(M)} (t;0) \psi\|^2\,.
\end{split}
\end{equation*}
Here we used that, by (\ref{eq:assump-coh}),
\begin{equation}\label{eq:bd} \sup_x \int \rd y V^2(x-y)
|\ph_t (y)|^2 \leq D \| \ph_t \|_{H^1}^2 \leq \const D \| \ph
\|^2_{H^1} \leq K
\end{equation}
is bounded uniformly in $t$ (as follows from (\ref{eq:H1ph})). Similar estimates
are applied to the term containing $\cN^{\frac k2} a_x^* a_y^* (\cN+2)^{\frac k2}$.

On the other hand, to control contributions arising from the second
integral on the r.h.s. of (\ref{eq:sp1-1}), we use estimates of the
form
\begin{equation*}
\begin{split}
\Big| \int \rd x &\, \langle \cU_N^{(M)} (t;0) \psi, a_x^* a
(V(x-.)\ph_t) \chi (\cN \leq M) (\cN+1)^{\frac k2} a_x \cN^{\frac k2}
\cU_N^{(M)} (t;0) \psi \rangle \Big| \\ &\leq \int \rd x \, \| a_x
(\cN+1)^{\frac k2} \cU_N^{(M)} (t;0) \psi\| \, \| a (V(x-.)\ph_t) \chi
(\cN \leq M)\|\,  \|a_x \cN^{\frac k2} \cU_N^{(M)} (t;0) \psi \| \\
&\leq M^{1/2} \sup_x \| V(x-.)\ph_t \| \,  \| \cN^{\frac {k+1}2}
 \cU_N^{(M)} (t;0) \psi \| \, \|\cN^{1/2}
(\cN+1)^{\frac k2} \cU_N^{(M)} (t;0) \psi \| \\ &\leq K M^{1/2} \|
(\cN+1)^{\frac {k+1}2}\cU_N^{(M)} (t;0) \psi\|^2\,.
\end{split}
\end{equation*}
This implies
\begin{equation*}
\begin{split}
\Big| \frac{\rd}{\rd t} \langle \cU_N^{(M)} (t;0) \psi, &(\cN+1)^j
\cU_N^{(M)} (t;0) \psi \rangle \Big| \\ &\leq  \; K (1+ \sqrt{M/N})\,
\sum_{k=0}^{j} {j \choose k}\,
\langle \cU_N^{(M)} (t;0) \psi, (\cN+3)^k \cU_N^{(M)} (t;0)
\psi\rangle\,\\ &\leq  \; 4^j \, K (1+ \sqrt{M/N}) \,
\langle \cU_N^{(M)} (t;0) \psi, (\cN+1)^j\, \cU_N^{(M)} (t;0)
\psi\rangle\, .
\end{split}
\end{equation*}
{F}rom Gronwall Lemma, we find (\ref{eq:step1cou}).
\end{proof}

\bigskip
\noindent
{\it Step 2. of the proof of Proposition \ref{prop:1cou}}

\subsection{Weak bounds on the $\cU_N$ dynamics}
To compare the evolution $\cU_N (t;s)$ with the cutoff evolution
$\cU^{(M)}_N (t;s)$, we first need some (very weak) a-priori bound on the
growth of the number of particle with respect to $\cU_N (t;s)$.

\begin{lemma}\label{lem:weak}
For arbitrary $t,s \in \bR$ and $\psi \in \cF$, we have
\begin{equation}\label{eq:step2cou1}
\langle \psi, \cU_N (t;s) \cN
\cU_N (t;s)^* \psi \rangle \leq 6 \langle \psi, (\cN+N+1) \psi \rangle\,.
\end{equation}
Moreover, for every $\ell \in \bN$, there exists a constant $C(\ell)$ such that
\begin{align}
\label{eq:step2cou}
\langle \psi, \cU_N (t;s) \cN^{2\ell}
&\cU_N (t;s)^* \psi \rangle \leq
C(\ell) \, \langle \psi, (\cN+N)^{2\ell}
\psi \rangle \\ \label{eq:step2coub}
\langle \psi, \cU_N (t;s) \cN^{2\ell+1}
&\cU_N (t;s)^* \psi \rangle \leq
C(\ell) \, \langle \psi, (\cN+N)^{2\ell+1} (\cN+1) \psi \rangle
\end{align}
for all $t,s \in \bR$ and $\psi\in \cF$.
\end{lemma}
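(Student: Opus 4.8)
The plan is to prove Lemma~\ref{lem:weak} by a Gronwall argument applied to the \emph{backward} evolution $\cU_N(t;s)^*$, exactly mirroring the structure of the proof of Lemma~\ref{lem:trunc}, but with two crucial differences: there is no cutoff $\chi(\cN\le M)$ available to tame the cubic term, and we are only after very crude (in fact $N$-dependent) bounds, so we can afford to be wasteful. Concretely, fix $s=0$ (the general case follows by time-translation invariance of the estimates) and differentiate $\langle \psi, \cU_N(t;0)\,(\cN+1)^m\,\cU_N(t;0)^*\psi\rangle$ in $t$. Writing $\Phi_t = \cU_N(t;0)^*\psi$, the derivative equals $-\langle \Phi_t, [i\cL_N(t),(\cN+1)^m]\Phi_t\rangle$; as in the computation leading to (\ref{eq:sp1-1}), only the non-particle-number-preserving terms of $\cL_N(t)$ survive the commutator, namely the quadratic term $\tfrac12\int V(x-y)(\ph_t(x)\ph_t(y)a_x^*a_y^* + \mathrm{h.c.})$ and the cubic term $\tfrac1{\sqrt N}\int V(x-y)\,a_x^*(\ph_t(y)a_y^* + \overline{\ph_t}(y)a_y)a_x$.

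The quadratic term is handled precisely as in Lemma~\ref{lem:trunc}: using the pull-through formulae and the uniform bound (\ref{eq:bd}), $\sup_x\int V^2(x-y)|\ph_t(y)|^2\,\rd y \le K$, its contribution is bounded by $K\,\langle \Phi_t,(\cN+C)^m\Phi_t\rangle \le 2^m K\,\langle\Phi_t,(\cN+1)^m\Phi_t\rangle$. The cubic term is the real issue, because without the cutoff the two loose annihilation/creation operators $a_y^*$, $a_y$ each cost a factor $(\cN+1)^{1/2}$, so three $a$'s acting on $(\cN+1)^{m/2}\Phi_t$ naively produce $(\cN+1)^{(m+3)/2}\Phi_t$, i.e.\ \emph{three} extra powers of $\cN$ against only the $m$ powers we started with --- Gronwall then fails to close. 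The fix is exactly the point of the lemma's statement: we bound the cubic contribution by $N^{-1/2}\,K\,\langle\Phi_t,(\cN+C)^{m+1}\Phi_t\rangle$ (two of the three $a$'s are estimated via Cauchy--Schwarz in $x$ and absorbed into $\cN^{1/2}\cdot\cN^{1/2}$, plus one leftover, using $\|a_x\cN^{k/2}\Phi\|$-type estimates and $\sup_x\|V(x-\cdot)\ph_t\|\le K$), and then we \emph{do not} try to close the differential inequality in $(\cN+1)^m$ alone. Instead, write $(\cN+N)^{m}$: the gain of $N^{-1/2}$ in front of the $(\cN+C)^{m+1}$ term, combined with one factor of $N^{1/2}$ extracted from one copy of $(\cN+N)$, yields a bound of the form $\tfrac{\rd}{\rd t}\langle\Phi_t,(\cN+N)^m\Phi_t\rangle \lesssim K(m)\,\langle\Phi_t,(\cN+N)^m\Phi_t\rangle$ plus, for the odd case, an extra harmless $(\cN+1)$ factor --- which is why (\ref{eq:step2coub}) carries the additional $(\cN+1)$. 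Gronwall over the fixed interval $[0,t]$ then gives (\ref{eq:step2cou}) and (\ref{eq:step2coub}) with $C(\ell)$ absorbing the accumulated $e^{K(m)|t|}$ (this is legitimate here since we only need boundedness, not time-uniformity; in the application $M=N$ and $t$ ranges over a fixed bounded set where these are used).

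For the first, linear bound (\ref{eq:step2cou1}), one redoes the $m=1$ case keeping constants explicit: differentiating $\langle\Phi_t,(\cN+N+1)\Phi_t\rangle$, the quadratic commutator $[a_x^*a_y^*,\cN]= -2a_x^*a_y^*$ contributes a term bounded by $2\sup_x(\int V^2|\ph_t|^2)^{1/2}\|\Phi_t\|\,\|(\cN+1)^{1/2}\Phi_t\| \le \text{(const)}\langle\Phi_t,(\cN+N+1)\Phi_t\rangle$ after Cauchy--Schwarz and $ab\le \tfrac12(a^2+b^2)$, and the cubic commutator contributes $N^{-1/2}\cdot\text{(const)}\,\|(\cN+1)\Phi_t\|\|(\cN+1)^{1/2}\Phi_t\| \le N^{-1/2}\text{(const)}\langle\Phi_t,(\cN+1)^2\Phi_t\rangle \le \text{(const)}\langle\Phi_t,(\cN+N+1)^2\,(\cN+N+1)^{-1}\cdots\rangle$; tracking the numerics so that the total differential inequality reads $\tfrac{\rd}{\rd t}(\cdots)\le c(\cdots)$ with a $c$ for which $e^{c|t|}\le 6$ on the relevant range, or more honestly, arranging the estimate so no Gronwall is needed at all by a direct bound --- I expect the factor $6$ comes from a careful but elementary such bookkeeping. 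The main obstacle, then, is purely the cubic term: recognizing that one must pass to $(\cN+N)$-weights to convert the $N^{-1/2}$ prefactor into a clean Gronwall-closable inequality, rather than trying (and failing) to bound $\cN$-moments by $\cN$-moments directly.
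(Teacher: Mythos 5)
There is a genuine gap here, and it is precisely the point of this lemma. Your plan is to run a Gronwall argument on $\langle \Phi_t,(\cN+N)^m\Phi_t\rangle$ with $\Phi_t=\cU_N(t;0)^*\psi$, but the differential inequality does not close. (A preliminary issue: with your parametrization the derivative is \emph{not} $\langle\Phi_t,[i\cL_N(t),(\cN+1)^m]\Phi_t\rangle$; differentiating $\cU_N(t;0)^*=\cU_N(0;t)$ in its second argument produces the commutator of $\cL_N(t)$ with the \emph{conjugated} observable $\cU_N(t;0)(\cN+1)^m\cU_N(t;0)^*$. This is repairable by using $\cU_N(t;s)^*=\cU_N(s;t)$ and differentiating in the first argument, but it signals the deeper problem.) Without the cutoff, the cubic term of $\cL_N$ contributes to the commutator a quantity of size roughly $N^{-1/2}\,\langle \Phi,(\cN+1)^{3/2}(\cN+N)^{m-1}\Phi\rangle$: on the sector with $n$ particles this behaves like $n^{3/2}(n+N)^{m-1}/\sqrt{N}$, which is \emph{not} dominated by $(n+N)^m$ once $n\gg N$ (the ratio grows like $\sqrt{n/N}$). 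Your step ``extract one factor $N^{1/2}$ from one copy of $(\cN+N)$'' only works when $\cN\lesssim N$, which is exactly the a priori information you do not have; the leftover factor is not harmless, and the inequality $\frac{\rd}{\rd t}\langle(\cN+N)^m\rangle\lesssim\langle(\cN+N)^m\rangle$ cannot be derived this way. Indeed, if a direct Gronwall on $\cU_N$ worked, the truncated dynamics $\cU_N^{(M)}$ of Lemma \ref{lem:trunc} and the whole comparison in Lemma \ref{lem:comp} would be superfluous; the cutoff $\chi(\cN\le M)$ is introduced exactly because the uncutoffed cubic term cannot be absorbed.

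The paper proves Lemma \ref{lem:weak} by a completely different, non-Gronwall mechanism: the explicit representation $\cU_N(t;s)=W^*(\sqrt{N}\ph_t)\,e^{-i\cH_N(t-s)}\,W(\sqrt{N}\ph_s)$ together with the Hepp identity (\ref{eq:GV}) gives the exact operator identity $\cU_N^*(t;s)\,\cN\,\cU_N(t;s)=W^*(\sqrt{N}\ph_s)\,X_{t,s}\,W(\sqrt{N}\ph_s)$ with $X_{t,s}=\cN-\sqrt{N}\,e^{i\cH_N(t-s)}\phi(\ph_t)e^{-i\cH_N(t-s)}+N$. Since $[\cH_N,\cN]=0$ and $\|\phi(\ph_t)\psi\|\le 2\|(\cN+1)^{1/2}\psi\|$, one gets $X_{t,s}^2\le C(\cN+N)^2$ with constants \emph{uniform in $t,s$}; powers are handled by the commutator-induction bound $X_{t,s}^{2\ell}\le C(\ell)(\cN+N)^{2\ell}$, and the final Weyl conjugation just shifts $\cN$ by $\sqrt{N}\phi(\ph_s)+N$. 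This also explains features your sketch cannot reproduce: the constants carry no time dependence at all, the factor $6$ in (\ref{eq:step2cou1}) comes from this algebra rather than from bookkeeping a Gronwall constant, and the extra $(\cN+1)$ in (\ref{eq:step2coub}) arises simply from $\cN^{2\ell+1}\le \frac{1}{2N}\cN^{2\ell+2}+\frac{N}{2}\cN^{2\ell}$ applied to the even-power bound, not from a leftover term in a differential inequality. To fix your argument you would have to either import this algebraic identity or reintroduce a cutoff and then compare dynamics, i.e.\ essentially follow the paper's route.
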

\begin{proof}[Proof of Lemma \ref{lem:weak}.]
Eq. (\ref{eq:step2coub}) follows from (\ref{eq:step2cou}). In fact, assuming (\ref{eq:step2cou}) to hold true, we have
\begin{equation}
\begin{split}
\langle \psi, \cU_N (t;s) \cN^{2\ell+1} &\cU_N (t;s)^* \psi \rangle \\ \leq \; &\frac{1}{2N} \langle \psi, \cU_N (t;s) \cN^{2\ell+2} \cU_N (t;s)^* \psi \rangle + \frac{N}{2} \langle \psi, \cU_N (t;s) \cN^{2\ell} \cU_N (t;s)^* \psi \rangle \\ \leq \; &\frac{C(\ell+1)}{2N} \langle \psi, (\cN+N)^{2\ell+2} \psi \rangle + \frac{C(\ell) N}{2} \langle \psi, (\cN+N)^{2\ell} \psi \rangle \\ \leq \; &D(\ell) \, \langle \psi, (\cN+N)^{2\ell+1} (\cN+1) \psi \rangle
\end{split}
\end{equation}
for an appropriate constant $D(\ell)$.

\medskip

To prove (\ref{eq:step2cou1}) and (\ref{eq:step2cou}) we observe that, by (\ref{eq:GV}),
\begin{equation}\begin{split}\label{eq:UNU}
\cU_N^* &(t;s) \cN \cU_N (t;s) \\ &= \int \rd x \, \cU_N^* (t;s) a_x^*
a_x \cU_N (t;s) \\ &= \int \rd x \, W^* (\sqrt{N}\ph_s) e^{i\cH_N (t-s)}
(a_x^* - \sqrt{N} \overline{\ph}_t (x)) (a_x - \sqrt{N} \ph_t (x))
e^{-i\cH_N (t-s)} W(\sqrt{N} \ph_s) \\ &= W^* (\sqrt{N}\ph_s) \left( \cN - \sqrt{N}  e^{i\cH_N (t-s)} \phi (\ph_t)
e^{-i\cH_N (t-s)} + N  \right) W(\sqrt{N} \ph_s) \,.
\end{split}
\end{equation}
(Recall that $\phi (\ph) = a^* (\ph) + a (\ph) = \int \rd x (\ph
(x) a_x^* + \overline{\ph} (x)a_x )$). {F}rom Lemma \ref{lm:a-bd} and Lemma \ref{lm:coh}, we get
\begin{equation}
\begin{split}
\langle \psi, \cU_N^* (t;s) \cN \cU_N (t;s) \psi \rangle &\leq 2 \langle \psi, W^* (\sqrt{N}\ph_s) (\cN +N+1) W(\sqrt{N} \ph_s) \psi \rangle \\ &= 2 \langle \psi, (\cN + \sqrt{N} \phi (\ph_s) +2N+1) \psi \rangle \\ &\leq 6 \langle \psi, (\cN +N+1) \psi \rangle \, \end{split}
\end{equation}
which shows (\ref{eq:step2cou1}). To complete the proof of (\ref{eq:step2cou}), we define
\[ X_{t,s} = (\cN - \sqrt{N} e^{i\cH_N (t-s)} \phi (\ph_t) e^{-i\cH_N (t-s)} + N)\,.\]
Then, using the notation $\text{ad}_A (B) = [B,A]$, it is simple to prove that
there exists a constant $C>0$ such that
\begin{equation}\label{eq:Xtj} X_{t,s}^2 \leq C (N + \cN)^2 \qquad \text{and } \quad  \left| \text{ad}^{m}_{X_{t,s}} (\cN) \right| \leq C (N + \cN) \qquad \text{ for all $m \in \bN$.} \end{equation}
By induction it follows that, for every $\ell \in \bN$, there exist constants $D(\ell), C(\ell)$ with
\begin{equation}\label{eq:ind}
X_{t,s}^{\ell-1} (\cN + N)^2 X_{t,s}^{\ell-1} \leq D(\ell) (\cN+N)^{2\ell}
\qquad \text{and } \quad  X_{t,s}^{2\ell} \leq C(\ell)  (\cN+N)^{2\ell}\,. \end{equation}
In fact, for $\ell=1$ (\ref{eq:ind}) reduces to (\ref{eq:Xtj}). Assuming (\ref{eq:ind}) to hold for all $\ell <k$, we can prove it for $\ell=k$ by noticing that
\begin{equation}\label{eq:ind1}
\begin{split}
X_{t,s}^{k-1} (\cN + N)^2 X_{t,s}^{k-1}  &\leq 2 (\cN + N) X_{t,s}^{2k-2} (\cN+N) + 2 |[ X_{t,s}^{k-1}, \cN]|^2  \\ &\leq 2(\cN + N) X_{t,s}^{2k-2} (\cN+N) + 4^k \sum_{m=0}^{k-2} X_{t,s}^{m} \left|\text{ad}_{X_{t,s}}^{k-1-m} (\cN) \right|^2 X_{t-s}^m  \\
&\leq 2 (\cN + N) X_{t,s}^{2k-2} (\cN+N) + 4^k C  \sum_{m=0}^{k-2} X_{t,s}^{m} (\cN+N)^2 X_{t-s}^m
\\ &\leq D(k) \, (\cN+N)^{2k}
\end{split}
\end{equation}
for an appropriate constant $D(k)$, and that, by (\ref{eq:Xtj}) and (\ref{eq:ind1}),
\[ X_{t,s}^{2k} \leq C X_{t,s}^{k-1} (\cN +N)^2 X_{t,s}^{k-1} \leq C D(k) (\cN+N)^{2k} = C(k) (\cN+N)^{2k} \,. \] In (\ref{eq:ind1}), we used the commutator expansion
\[ [A^n, B] = \sum_{m=0}^{n-1} {n \choose m} A^m \text{ad}^{n-m}_A (B)\,  \] in the second line, the bound (\ref{eq:Xtj}) in the third line, and the induction assumption in the last line.

{F}rom (\ref{eq:UNU}) and (\ref{eq:ind}), we obtain that
\begin{equation}\label{eq:WNW}
\begin{split}
\langle \psi, \cU_N (t;s) \cN^{2\ell} \cU_N (t;s)^* \psi \rangle &= \langle W (\sqrt{N} \ph_s) \psi, X_{t,s}^{2\ell} W (\sqrt{N} \ph_s) \psi \rangle \\ & \leq C(\ell)
\langle W (\sqrt{N} \ph_s) \psi, (\cN+N)^{2\ell} W (\sqrt{N} \ph_s) \psi \rangle \\ & = C(\ell)
\langle \psi, (\cN+\sqrt{N} \phi (\ph_s) + 2N)^{2\ell} \psi \rangle\,.
\end{split}
\end{equation}
Analogously to (\ref{eq:ind}), it is possible to prove that, for every $\ell \in \bN$, there exists a constant $C(\ell)$ with \[
(\cN+\sqrt{N} \phi (\ph_s) + 2N)^{2\ell} \leq C(\ell) (\cN + N)^{2\ell} \,. \] Eq.~(\ref{eq:step2cou}) follows therefore from (\ref{eq:WNW}).
\end{proof}

\bigskip

\noindent
{\it Step 3. of the proof of Proposition \ref{prop:1cou}}

\subsection{Comparison of the $\cU_N$ and $\cU_N^{(M)}$ dynamics}
\begin{lemma}\label{lem:comp}
For every $j\in \bN$ there exist constants $C(j),K(j)$ (depending only on $j$, on $\| \ph
\|_{H^1}$ and on the constant $D$ in (\ref{eq:assump-coh})) such
that
\begin{equation}
\begin{split}
\label{eq:step31cou} \Big| \langle \cU_N (t;s) \psi, \cN^j \Big(
\cU_N (t;s) - &\cU_N^{(M)} (t;s)\Big) \psi \rangle \Big| \\ &\leq C(j) \frac{(N/M)^j \, \| (\cN+1)^{j+1} \psi \|^2}{(1+\sqrt{M/N})} \, \exp \left(K(j) (1+ \sqrt{M/N}) |t-s| \right)\,
\end{split}
\end{equation}
and
\begin{equation}
\begin{split}
\label{eq:step32cou}
\Big| \langle \cU^{(M)}_N (t;s) \psi, \cN^j
\Big( \cU_N (t;s) - &\cU_N^{(M)} (t;s)\Big) \psi \rangle \Big|
\\ &\leq C \frac{\| (\cN+1)^j \psi \|^2}{M^j (1+\sqrt{M/N})} \exp \left( K(j) (1 + \sqrt{M/N}) |t-s| \right)\, ,
\end{split}
\end{equation}
for all $\psi \in \cF$ and for all $t,s \in \bR$.
\end{lemma}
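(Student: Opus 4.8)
The plan is to compare the two evolutions through the Duhamel formula. Their generators $\cL_N(\tau)$ and $\cL_N^{(M)}(\tau)$ coincide except in the cubic term, so that
\[ D(\tau):=\cL_N(\tau)-\cL_N^{(M)}(\tau)=\frac{1}{\sqrt N}\int \rd x\,\rd y\;V(x-y)\,a_x^*\Big(\ph_\tau(y)\,\chi(\cN>M)\,a_y^*+\overline{\ph_\tau}(y)\,a_y\,\chi(\cN>M)\Big)a_x \]
carries both the small prefactor $N^{-1/2}$ and the projection $\chi(\cN>M)$ onto the sector $\{\cN>M\}$. Since $i\partial_\tau\cU_N(\tau;s)=\cL_N(\tau)\cU_N(\tau;s)$ and likewise for $\cU_N^{(M)}$, one gets $\cU_N(t;s)-\cU_N^{(M)}(t;s)=-i\int_s^t\cU_N(t;\tau)\,D(\tau)\,\cU_N^{(M)}(\tau;s)\,\rd\tau$; the point of this arrangement is that $D(\tau)$ is applied directly to $\cU_N^{(M)}(\tau;s)\psi$, whose number-operator moments are controlled by Lemma \ref{lem:trunc} with no power of $N$. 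Substituting this identity into the two matrix elements, moving $\cN^j$ to the left and discarding the unitary $\cU_N(t;\tau)$, reduces both estimates to bounding, for each $\tau\in[s,t]$, a product $\|\cN^j V_1\psi\|\cdot\|D(\tau)\cU_N^{(M)}(\tau;s)\psi\|$ with $V_1=\cU_N(t;s)$ in the case of (\ref{eq:step31cou}) and $V_1=\cU_N^{(M)}(t;s)$ in the case of (\ref{eq:step32cou}).

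Two inputs then do the work. First, the cutoff identity $\chi(\cN>M)\le(\cN/M)^p$ as a quadratic form for every $p\ge0$, together with the fact that $\chi(\cN>M)$ commutes past the monomials $a_x^*a_y^* a_x$ and $a_x^* a_y a_x$ up to a bounded shift of $M$: each occurrence of the cutoff in $D(\tau)$ may thus be exchanged for a factor $M^{-p}$ at the price of $p$ extra powers of $(\cN+1)$ acting on $\cU_N^{(M)}(\tau;s)\psi$. Combining this with the standard estimate for the cubic monomials — proved via Lemma \ref{lm:a-bd} applied to the $x$- and $y$-integrals, keeping the operators in an order that does not produce the Coulomb-divergent coincidence term $V(0)$ — and with $\sup_x\|V(x-\cdot)\ph_\tau\|\le K^{1/2}$ from (\ref{eq:bd}), one obtains, for every $p$,
\[ \|(\cN+1)^a D(\tau)\,\cU_N^{(M)}(\tau;s)\psi\|\ \le\ \frac{C(a,p)}{\sqrt N\,M^{p}}\,\|(\cN+1)^{a+p+\frac{3}{2}}\psi\|\;e^{c(a,p)|\tau-s|(1+\sqrt{M/N})}. \]
Second, the bound on $\|\cN^j V_1\psi\|$: for $V_1=\cU_N^{(M)}(t;s)$ this is $N$-free, $\le\|(\cN+1)^j\psi\|\,e^{c(j)|t-s|(1+\sqrt{M/N})}$ by Lemma \ref{lem:trunc}; for $V_1=\cU_N(t;s)$ only the weak bound $\|\cN^j\cU_N(t;s)\psi\|\le C(j)\|(\cN+N)^j\psi\|$ of Lemma \ref{lem:weak} is available, which costs a factor $N^{j}$.

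Putting these together and integrating in $\tau$ (which produces the $(1+\sqrt{M/N})^{-1}$ prefactor and the final exponential): for (\ref{eq:step32cou}) both factors are $N$-free, and choosing $p$ comparable to $j$ yields the $M^{-j}$ gain; for (\ref{eq:step31cou}) the factor $N^j$ coming from the weak bound is reabsorbed by the $N^{-1/2}M^{-j}$ produced by $D(\tau)$, leaving the advertised $(N/M)^j$. I expect the real difficulty to lie entirely in the second-quantized bookkeeping hidden in the two displayed inequalities: keeping the creation/annihilation operators ordered so as to avoid $V(0)$, tracking the bounded shifts of $M$ and the growth of the exponents $a+p+\frac{3}{2}$ through each commutation, and — most delicately — choosing $p$ (possibly differently for the $a^*a^*a$ and $a^*aa$ parts of $D$, and in separate regimes of $M$ relative to $\sqrt N$) so that after optimisation the number-operator weight left on $\psi$ is exactly $(\cN+1)^{j+1}$ in (\ref{eq:step31cou}) and $(\cN+1)^{j}$ in (\ref{eq:step32cou}); pushing the weight down to this sharp level forces one to feed in the estimates at index $j-1$ via $\|(\cN+1)^{(j-1)/2}(\cU_N(t;s)-\cU_N^{(M)}(t;s))\psi\|^2$, so that the two bounds are naturally proved together by induction on $j$.
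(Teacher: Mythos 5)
Your proposal is correct in substance and shares the paper's skeleton (the Duhamel identity for $\cU_N-\cU_N^{(M)}$ with the cubic, cutoff difference term carrying $N^{-1/2}$; the conversion $\chi(\cN>M)\le(\cN/M)^p$; Lemma \ref{lem:trunc} for moments under the truncated flow; Lemma \ref{lem:weak} to pay $N^j$ for moments under the full flow), but it reorganizes the key estimate differently. The paper does \emph{not} apply Cauchy--Schwarz to the whole Fock-space inner product: it moves one annihilation operator $a_x$ from the difference term onto the left vector, obtaining $\langle a_x\cU_N(t;s)^*\cN^j\cU_N(t;0)\psi,\,a(V(x-\cdot)\ph_t)\chi(\cN>M)a_x\cU_N^{(M)}(s;0)\psi\rangle$, and then Cauchy--Schwarzes in $x$; this way it only needs Lemma \ref{lm:a-bd} pointwise in $x$ together with \eqref{eq:bd}, at the price of controlling an extra $\cN^{1/2}$ conjugated by $\cU_N(t;s)$ (hence the $N^{j+1/2}$, which cancels the $N^{-1/2}$ and yields $(N/M)^j$). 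You instead discard the unitary outright and need a genuine norm bound on the cubic difference applied to the truncated evolution, $\|(\cN+1)^a D(\tau)\cU_N^{(M)}(\tau;s)\psi\|\lesssim N^{-1/2}M^{-p}\|(\cN+1)^{a+p+3/2}\psi\|e^{\cdots}$. That bound is true, but be aware it does not follow from Lemma \ref{lm:a-bd} ``applied to the $x$- and $y$-integrals'' naively: pulling the norm inside the $x$-integral produces $\int\rd x\,\|a_x\,\cdot\,\|$, which is not controlled by $\cN$-moments, and duality fails for the same reason; you need the expand-the-square computation of the type carried out in Lemma \ref{lm:3cou} (where no $V(0)$ term arises), after commuting the cutoff through the monomial onto the vector. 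Granting that, your bookkeeping even gains an extra $N^{-1/2}$ over \eqref{eq:step31cou} and \eqref{eq:step32cou}, because the full-dynamics weak bound is only invoked for $\cN^j$ and not for the additional $\cN^{1/2}$.

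The one discrepancy is the $\cN$-weight on $\psi$: your combination naturally yields $\|(\cN+1)^j\psi\|\,\|(\cN+1)^{j+3/2}\psi\|$, which is not in general dominated by $\|(\cN+1)^{j+1}\psi\|^2$, so as written you prove the lemma with a slightly heavier weight rather than exactly \eqref{eq:step31cou}--\eqref{eq:step32cou}. This is harmless for the paper's purposes (Proposition \ref{prop:1cou} and its applications only require some fixed polynomial weight, and indeed the paper's own proof of \eqref{eq:step32cou} ends with $\|(\cN+1)^{j+1}\psi\|^2$ rather than the stated $\|(\cN+1)^j\psi\|^2$), and your proposed induction on $j$ is more machinery than is needed; but you should either state the lemma with the weight your argument delivers or note explicitly that the loss is immaterial downstream.
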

\begin{proof}[Proof of Lemma \ref{lem:comp}.]
To simplify the notation we consider the case $s=0$ and $t >0$ (but the other cases can be treated identically). To prove (\ref{eq:step31cou}), we expand the difference of the two
evolutions:
\begin{equation}\label{eq:sp3-1}
\begin{split}
\langle \cU_N (t;0) &\psi, \cN^j \left( \cU_N (t;0) - \cU_N^{(M)}
(t;0)\right) \psi \rangle \\ = \; &\langle \cU_N (t;0) \psi, \cN^j
\cU_N (t;0) \left( 1 - \cU_N (t;0)^* \cU_N^{(M)} (t;0)\right) \psi
\rangle \\ = \; &-i\int_0^t \rd s \; \langle \cU_N (t;0) \psi, \cN^j
\cU_N (t;s) \left(\cL_N (s) - \cL^{(M)}_N (s) \right) \cU_N^{(M)}
(s;0) \psi \rangle \\
= \; &-\frac{i}{\sqrt{N}} \int_0^t \rd s \int \rd x \rd y  V(x-y)  \\
&\hspace{.2cm} \times \langle \cU_N (t;0) \psi, \cN^j \cU_N (t;s)
a^*_x \left( \overline{\ph}_t (y) a_y \chi (\cN > M) + \ph_t (y)
\chi (\cN
>M) a_y^* \right) a_x \cU_N^{(M)}
(s;0) \psi \rangle \\
= \; &-\frac{i}{\sqrt{N}}\int_0^t \rd s \int \rd x  \langle a_x
\cU_N (t;s)^* \cN^j \cU_N (t;0) \psi, a (V(x-.)\ph_t) \chi (\cN
> M) a_x \cU_N^{(M)} (s;0) \psi \rangle \\
&-\frac{i}{\sqrt{N}}\int_0^t \rd s \int \rd x  \langle a_x \cU_N
(t;s)^* \cN^j \cU_N (t;0) \psi, \chi (\cN >M) a^* (V(x-.)\ph_t) a_x
\cU_N^{(M)} (s;0) \psi \rangle\,.
\end{split}
\end{equation}
Hence
\begin{equation*}
\begin{split}
\Big| \langle \cU_N (t;0) &\psi, \cN^j \left( \cU_N (t;0) -
\cU_N^{(M)} (t;0)\right) \psi \rangle \Big| \\ \leq
 \; &\frac{1}{\sqrt{N}} \int_0^t \rd s \int \rd x  \| a_x \cU_N (t;s)^* \cN^j
\cU_N (t;0) \psi\| \, \| a (V(x-.)\ph_t)  a_x \chi (\cN > M+1)
\cU_N^{(M)} (s;0) \psi \|  \\
&+\frac{1}{\sqrt{N}}\int_0^t \rd s \int \rd x  \| a_x \cU_N (t;s)^*
\cN^j \cU_N (t;0) \psi\| \| a^* (V(x-.)\ph_t) a_x \chi (\cN
>M) \cU_N^{(M)} (s;0) \psi \| \\
\leq \; &\frac{1}{\sqrt{N}} \sup_x \| V(x-.)\ph_t \| \int_0^t \rd s
\int \rd x  \| a_x \cU_N (t;s)^* \cN^j \cU_N (t;0) \psi \| \, \\
&\hspace{6cm} \times \| a_x \cN^{1/2} \chi (\cN > M+1)
\cU_N^{(M)} (s;0) \psi \|  \\
&+\frac{1}{\sqrt{N}} \sup_x \| V(x-.)\ph_t \| \int_0^t \rd s \int
\rd x  \| a_x \cU_N (t;s)^* \cN^j \cU_N (t;0) \psi \| \\
&\hspace{6cm} \times \| a_x \cN^{1/2} \chi (\cN
>M) \cU_N^{(M)} (s;0) \psi \| \\
\leq \; &\frac{C}{\sqrt{N}} \int_0^t \rd s  \|\cN^{1/2} \cU_N
(t;s)^* \cN^j \cU_N (t;0) \psi\| \, \|  \cN \chi (\cN > M)
\cU_N^{(M)} (s;0) \psi \|
\end{split}
\end{equation*}
where we used (\ref{eq:bd}) once again. {F}rom Lemma \ref{lem:weak}, we obtain
\begin{equation}
\begin{split}
\|\cN^{1/2} \cU_N (t;s)^* \cN^j \cU_N (t;0) \psi\|^2 &= \langle \cN^j \cU_N (t;0) \psi, \cU (t;s) \cN \cU_N (t;s)^* \cN^j \cU_N (t;0) \psi \rangle \\ &\leq 6 \langle \cN^j \cU_N (t;0) \psi, (\cN+N+1) \cN^j \cU_N (t;0) \psi \rangle \\ & \leq C(j) \langle \psi, (\cN+N)^{2j+1} (\cN+1) \psi \rangle \\ & \leq C(j) N^{2j+1} \langle \psi, (\cN+1)^{2j+2} \psi \rangle\,.
\end{split}
\end{equation}
Therefore, using the inequality $\chi (\cN > M) \leq (\cN/M)^{2j}$, we obtain
\begin{equation*}
\begin{split}
\Big| \langle \cU_N (t;0) \psi, &\cN^j \left( \cU_N (t;0) -
\cU_N^{(M)} (t;0)\right) \psi \rangle \Big| \\ &\leq C(j) N^j \, \|(\cN+1)^{j+1} \psi\|
\int_0^t \rd s \; \langle \cU_N^{(M)} (s;0) \psi, \cN^2 \chi (\cN
>M) \cU_N^{(M)} (s;0) \psi \rangle^{1/2} \\ &\leq C(j) N^j \, \|(\cN+1)^{j+1} \psi\|   \int_0^t \rd s \; \langle \cU_N^{(M)} (s;0) \psi, \frac{\cN^{2j+2}}{M^{2j}}
\cU_N^{(M)} (s;0) \psi \rangle^{1/2}\, . \end{split}
\end{equation*}
Finally, from (\ref{eq:step1cou}), we conclude that
\begin{equation*}
\begin{split}
\Big| \langle \cU_N (t;0) \psi, \cN^j &\left( \cU_N (t;0) -
\cU_N^{(M)} (t;0)\right) \psi \rangle \Big| \\ &\leq C(j) (N/M)^j
\|(\cN+1)^{j+1} \psi\|^2 \int_0^t \rd s \; \exp \, (K(j) \, s \, (1+ \sqrt{M/N}))
\\ & \leq C(j) \frac{(N/M)^j \|(\cN+1)^{j+1} \psi\|^2}{1+\sqrt{M/N}} \, \exp \, (K(j) \, t\, (1+ \sqrt{M/N}))\,.
\end{split}
\end{equation*}

\medskip

To prove (\ref{eq:step32cou}), we proceed similarly; analogously to
(\ref{eq:sp3-1}) we find
\begin{equation*}
\begin{split}
\langle \cU^{(M)}_N (t;0) &\psi, \cN^j \left( \cU_N (t;0) -
\cU_N^{(M)}
(t;0)\right) \psi \rangle \\
= \; &-\frac{i}{\sqrt{N}}\int_0^t \rd s \int \rd x  \langle a_x
\cU_N (t;s)^* \cN^j \cU^{(M)}_N (t;0) \psi, a (V(x-.)\ph_t) \chi
(\cN > M) a_x \cU_N^{(M)} (s;0) \psi \rangle \\
&-\frac{i}{\sqrt{N}}\int_0^t \rd s \int \rd x  \langle a_x \cU_N
(t;s)^* \cN^j \cU^{(M)}_N (t;0) \psi, \chi (\cN >M) a^*
(V(x-.)\ph_t) a_x \cU_N^{(M)} (s;0) \psi \rangle
\end{split}
\end{equation*}
and thus
\begin{equation}
\begin{split}
\Big| \langle \cU^{(M)}_N (t;0) &\psi, \cN^j \left( \cU_N (t;0) -
\cU_N^{(M)} (t;0)\right) \psi \rangle \Big| \\
\leq \; &\frac{C}{\sqrt{N}} \int_0^t \rd s  \|\cN^{1/2} \cU_N
(t;s)^* \cN^j \cU_N^{(M)} (t;0) \psi \| \, \|  \cN \chi (\cN > M)
\cU_N^{(M)} (s;0) \psi \|\,.
\end{split}
\end{equation}
Again, applying (\ref{eq:step2cou}) and (\ref{eq:step1cou}) we find
\begin{equation*}
\Big| \langle \cU^{(M)}_N (t;0) \psi, \cN^j \left( \cU_N (t;0) -
\cU_N^{(M)} (t;0)\right) \psi \rangle \Big| \leq C \frac{\| (\cN+1)^{j+1} \psi \|^2 }{M^j(1+\sqrt{M/N})} \exp
\, (K(j) \, t \, (1 + \sqrt{M/N}))\,.
\end{equation*}
\end{proof}

\bigskip
\noindent
{\it Step 4. Conclusion of the proof  of Proposition \ref{prop:1cou}}

\medskip
{F}rom (\ref{eq:step31cou}),
(\ref{eq:step32cou}) and (\ref{eq:step1cou}) we obtain, choosing
$M=N$,
\begin{equation*}
\begin{split}
\langle \cU_N (t;s) \psi, \cN^j \cU_N (t;s) \psi \rangle = \; &
\langle \cU_N (t;s) \psi, \cN^j ( \cU_N (t;s) -\cU_N^{(M)} (t;s)
) \psi \rangle \\ &+ \langle (\cU_N (t;s) - \cU_N^{(M)} (t;s))
\psi, \cN^j \, \cU_N^{(M)} (t;s) \psi \rangle \\&+ \langle
\cU_N^{(M)} (t;s) \psi, \cN^j \, \cU_N^{(M)} (t;s) \psi \rangle \\
\leq \; & C (j) \| (\cN+1)^{j+1} \psi \|^2 e^{K(j) |t-s|}\,.
\end{split}\end{equation*}
\end{proof}
\subsection{Approximate dynamics $\wt \cU_N (t;s)$}
We now consider the dynamics $\wt \cU_N (t;s)$, defined in \eqref{eq:wtcUN} by
$$
i\partial_t \wt \cU_N (t;s) = \wt\cL_N (t)\, \wt\cU_N (t;s) \qquad
\text{with} \quad \wt\cU_N (s;s) = 1
$$ with the
time-dependent generator
\begin{equation}
\begin{split}
\wt \cL_N (t) = & \int \rd x \, \nabla_x a^*_x \nabla_x a_x + \int
\rd x \, \left(V *|\ph_t|^2 \right) (x) \, a^*_x a_x + \int \rd x
\rd y \, V(x-y) \overline{\ph_t} (x) \ph_t (y) a^*_y a_x \\
&+ \frac{1}{2} \int \rd x \rd y \, V(x-y) \left( \ph_t (x) \ph_t (y)
a^*_x a^*_y +
\overline{\ph_t} (x) \overline{\ph_t} (y) a_x a_y \right) \\
&+\frac{1}{2N} \int \rd x \rd y \, V(x-y) \, a^*_x a^*_y a_y a_x \, .
\end{split}
\end{equation}
\begin{lemma}\label{lm:2cou}
There exists a constant $K>0$, only depending
on $\| \ph \|_{H^1}$ and on the constant $D$ appearing in
(\ref{eq:assump-coh}), such that
\begin{equation} \label{eq:lm2cou} \langle \wt \cU_N (t;0) \Omega,
\cN^3 \wt \cU_N (t;0) \Omega \rangle \leq e^{Kt} \, .\end{equation}
\end{lemma}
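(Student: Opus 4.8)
The plan is to control $g(t):=\langle\wt\cU_N(t;0)\Omega,(\cN+1)^3\wt\cU_N(t;0)\Omega\rangle$ by a Gronwall argument; since $\cN^3\le(\cN+1)^3$ and $g(0)=\langle\Omega,(\cN+1)^3\Omega\rangle=1$, the bound \eqref{eq:lm2cou} follows once we show $|g'(t)|\le K\,g(t)$. Writing $\psi_t=\wt\cU_N(t;0)\Omega$ and using \eqref{eq:wtcUN} together with the self-adjointness of $\wt\cL_N(t)$ gives
\[ g'(t)=i\,\langle\psi_t,[\wt\cL_N(t),(\cN+1)^3]\,\psi_t\rangle . \]
Now I would go term by term through $\wt\cL_N(t)$: the kinetic term $\int\nabla_x a^*_x\nabla_x a_x$, the two terms $\int(V*|\ph_t|^2)a^*_xa_x$ and $\int V(x-y)\overline{\ph_t}(x)\ph_t(y)a^*_ya_x$, and the quartic term $\tfrac1{2N}\int V(x-y)a^*_xa^*_ya_ya_x$ all conserve the number of particles and hence commute with $(\cN+1)^3$; only the pair term
\[ A:=\frac12\int\rd x\,\rd y\,V(x-y)\Big(\ph_t(x)\ph_t(y)\,a^*_xa^*_y+\overline{\ph_t}(x)\overline{\ph_t}(y)\,a_xa_y\Big) \]
contributes, so $g'(t)=i\,\langle\psi_t,[A,(\cN+1)^3]\,\psi_t\rangle$.

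Next I would compute the commutator. Writing $A=B+B^*$ with $B=\tfrac12\int V(x-y)\ph_t(x)\ph_t(y)a^*_xa^*_y$ and using the pull-through identity $B\,f(\cN)=f(\cN-2)\,B$, one finds $[A,(\cN+1)^3]=-B\,R(\cN)+R(\cN)\,B^*$ where $R(\cN)=(\cN+3)^3-(\cN+1)^3$ is a nonnegative polynomial of degree two. Hence $|g'(t)|\le 2\,|\langle\psi_t,B\,R(\cN)\,\psi_t\rangle|$, and everything reduces to the key estimate
\[ \big|\langle\psi,B\,R(\cN)\,\psi\rangle\big|\le K\,\langle\psi,(\cN+1)^3\psi\rangle\qquad\text{for all }\psi\in\cF, \]
with $K$ depending only on $\sup_t\|\ph_t\|_{H^1}$ and on $D$. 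The naive bound $|\langle B^*\psi,R(\cN)\psi\rangle|\le\|B^*\psi\|\,\|R(\cN)\psi\|\le C\,\|(\cN+1)\psi\|\,\|(\cN+1)^2\psi\|$ is \emph{not} good enough, since $\langle\psi,(\cN+1)^4\psi\rangle^{1/2}\langle\psi,(\cN+1)^2\psi\rangle^{1/2}$ is larger than $\langle\psi,(\cN+1)^3\psi\rangle$ and would not close the Gronwall inequality. The crucial point is to distribute the powers of $\cN$ \emph{symmetrically}. I would write $B=\tfrac12\int\rd x\,\ph_t(x)\,a^*_x\,a^*(g_x)$ with $g_x:=V(x-\cdot)\ph_t$, factor $R(\cN)=R(\cN)^{1/2}R(\cN)^{1/2}$, commute one factor through the two annihilation operators in $B^*$ (using $R(\cN)^{1/2}a(g_x)a_x=a(g_x)a_x\,R(\cN-2)^{1/2}$), and move one annihilation to the other side of the inner product, obtaining
\[ \langle\psi,B\,R(\cN)\,\psi\rangle=\frac12\int\rd x\;\ph_t(x)\,\big\langle\,a_x\,R(\cN-2)^{1/2}\psi\,,\;a^*(g_x)\,R(\cN)^{1/2}\psi\,\big\rangle . \]
Applying Cauchy--Schwarz in the inner product and in the $x$-integral, the second factor is bounded by $\big(\int\rd x\,\|a_x R(\cN-2)^{1/2}\psi\|^2\big)^{1/2}=\langle\psi,\cN\,R(\cN-2)\,\psi\rangle^{1/2}\le C\,\langle\psi,(\cN+1)^3\psi\rangle^{1/2}$, while for the first factor I would use Lemma~\ref{lm:a-bd} and the uniform bound $\sup_x\|g_x\|^2=\sup_x\int V^2(x-y)|\ph_t(y)|^2\rd y\le D\|\ph_t\|_{H^1}^2\le K$ — which is exactly \eqref{eq:bd} and holds uniformly in $t$ by \eqref{eq:H1ph} and conservation of mass and energy — to get $\big(\int\rd x\,|\ph_t(x)|^2\|a^*(g_x)R(\cN)^{1/2}\psi\|^2\big)^{1/2}\le K\,\langle\psi,(\cN+1)R(\cN)\psi\rangle^{1/2}\le K'\langle\psi,(\cN+1)^3\psi\rangle^{1/2}$. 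Multiplying the two estimates proves the key bound, hence $|g'(t)|\le K\,g(t)$, and Gronwall's lemma together with $g(0)=1$ yields \eqref{eq:lm2cou}.

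The main obstacle is precisely the key estimate: one must arrange that, after Cauchy--Schwarz, each of the two factors carries exactly three-halves powers of the number operator — one half coming from a single $a$ or $a^*$, one full power from $R(\cN)^{1/2}$ — because any splitting in which one side absorbs two or more full powers of $\cN$ produces the combination $\langle(\cN+1)^4\rangle^{1/2}\langle(\cN+1)^2\rangle^{1/2}$, which is too weak to run Gronwall. A minor technical point, which I would dispose of in the standard way by first carrying out the differentiation of $t\mapsto\langle\psi_t,(\cN+1)^3\psi_t\rangle$ for a number-cutoff version of the generator (as in the definition of $\cU_N^{(M)}$) and then removing the cutoff, is the justification of this differentiation and of the a priori membership $\psi_t\in D(\cN^3)$; the differential inequality itself is uniform in the cutoff.
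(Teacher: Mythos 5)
Your proposal is correct and follows essentially the same route as the paper: only the pair-creation/annihilation term of $\wt\cL_N(t)$ contributes to the commutator with $(\cN+1)^3$, the powers of the number operator are distributed symmetrically on both sides of $a^*a^*$ before applying Cauchy--Schwarz together with Lemma \ref{lm:a-bd} and the uniform bound (\ref{eq:bd}), and Gronwall closes the argument. Your packaging of the commutator as $-BR(\cN)+R(\cN)B^*$ with the split $R(\cN)^{1/2}R(\cN)^{1/2}$ is just a tidier version of the paper's explicit pull-through computation leading to $3(\cN+1)a_x^*a_y^*(\cN+1)-4a_x^*a_y^*$, so the two proofs coincide in substance.
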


\begin{proof}
We compute the derivative
\begin{equation*}
\begin{split}
\frac{\rd}{\rd t} \langle &\wt \cU_N (t;0) \Omega, (\cN+1)^3 \wt
\cU_N (t;0) \Omega \rangle \\ &= \langle \wt \cU_N (t;0) \Omega,
[i\wt\cL_N (t), (\cN+1)^3 ] \wt \cU_N (t;0) \Omega \rangle \\ &=
2\text{Im} \int \rd x \rd y V(x-y) \ph_t (x) \ph_t (y) \langle \wt
\cU_N (t;0) \Omega, [ a_x^* a_y^* , (\cN+1)^3 ] \wt \cU_N (t;0)
\Omega \rangle
\\ &= 4\text{Im} \int
\rd x \rd y V(x-y) \ph_t (x) \ph_t (y) \langle \wt \cU_N (t;0)
\Omega, \\ &\hspace{2cm} \left( a_x^* a_y^*  (\cN+1)^2  + (\cN+1)
a_x^* a_y^* (\cN+1) + (\cN+1)^2
a_x^* a_y^* \right) \wt \cU_N (t;0) \Omega \rangle \\
 &= 4\text{Im} \int
\rd x \rd y V(x-y) \ph_t (x) \ph_t (y) \langle \wt \cU_N (t;0)
\Omega,
\\ &\hspace{2cm} \left( (\cN-1) a_x^* a_y^*  (\cN+1)  + (\cN+1) a_x^* a_y^*
(\cN+1) + (\cN+1)
a_x^* a_y^* (\cN+3) \right) \wt \cU_N (t;0) \Omega \rangle\\
 &= 4\text{Im} \int
\rd x \rd y V(x-y) \ph_t (x) \ph_t (y) \langle \wt \cU_N (t;0)
\Omega, \left( 3(\cN+1) a_x^* a_y^*  (\cN+1)  - 4 a_x^* a_y^*
\right) \wt \cU_N (t;0) \Omega \rangle\,.
\end{split}
\end{equation*}
Therefore
\begin{equation*}
\begin{split}
\frac{\rd}{\rd t} \langle \wt \cU_N (t;0) &\Omega, (\cN+1)^3 \wt
\cU_N (t;0) \Omega \rangle \\ = \; &12 \text{Im} \int \rd x \ph_t
(x) \langle a_x (\cN+1) \wt \cU_N (t;0) \Omega, a^* (V(x-.)\ph_t)
(\cN+1) \wt \cU_N (t;0) \Omega \rangle \\ \; &-16 \text{Im} \int \rd
x \ph_t (x) \langle a_x \wt \cU_N (t;0) \Omega, a^* (V(x-.)\ph_t)
\wt \cU_N (t;0) \Omega \rangle\,.
\end{split}
\end{equation*}
Taking the absolute value, we find
\begin{equation*}
\begin{split}
\Big| \frac{\rd}{\rd t} \langle \wt \cU_N (t;0) &\Omega, (\cN+1)^3
\wt \cU_N (t;0) \Omega \rangle \Big| \\ \leq \; &12 \int \rd x
|\ph_t (x)| \| a_x (\cN+1) \wt \cU_N (t;0) \Omega \| \, \| a^*
(V(x-.)\ph_t) (\cN+1) \wt \cU_N (t;0) \Omega \| \\ \; &+16  \int \rd
x |\ph_t (x)| \|a_x \wt \cU_N (t;0) \Omega\| \, \| a^* (V(x-.)\ph_t)
\wt \cU_N (t;0) \Omega \| \\ \leq \; &28 \sup_x \| V(x-.) \ph_t \|
\| (\cN+1)^{3/2} \wt \cU_N (t;0)\Omega \|^2 \\ \leq \; &C \, \langle
\wt \cU_N (t;0) \Omega, (\cN+1)^3 \wt \cU_N (t;0) \Omega \rangle\,.
\end{split}
\end{equation*}
Applying Gronwall Lemma, we obtain (\ref{eq:lm2cou}).
\end{proof}
\subsection{Comparison of the $\cU_N$ and $\wt\cU_N$ dynamics}
The final step in the proof of Theorem \ref{thm:coh} is the comparison of
evolutions generated by $\cU_N$ and $\wt\cU_N$.
\begin{lemma}\label{lm:3cou}
Let the evolutions $\cU_N (t;s)$ and $\wt\cU_N (t;s)$ be
defined as in (\ref{eq:cUN}) and (\ref{eq:wtcUN}), respectively.
Then there exist constants $C,K>0$, only depending on $\| \ph
\|_{H^1}$ and on the constant $D$ in (\ref{eq:assump-coh}), such
that
\begin{equation}\label{eq:lm3cou} \left\| \left(\cU_N (t;0) - \wt \cU_N
(t;0) \right) \Omega \right\| \leq \frac{C}{\sqrt{N}} \, e^{Kt}\,.
\end{equation}
\end{lemma}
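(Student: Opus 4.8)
The plan is to compare $\cU_N(t;0)$ and $\wt\cU_N(t;0)$ through a Duhamel expansion, exploiting the fact that their generators differ only by the cubic term carrying the explicit prefactor $1/\sqrt N$. Put
\[
\cK_N(t) := \cL_N(t) - \wt\cL_N(t) = \frac{1}{\sqrt N}\int \rd x\rd y\, V(x-y)\, a_x^*\left(\ph_t(y)\, a_y^* + \overline{\ph_t}(y)\, a_y\right) a_x \,.
\]
Differentiating $s \mapsto \cU_N(t;s)\wt\cU_N(s;0)$ (a direct computation from the defining equations \eqref{eq:cUN} and \eqref{eq:wtcUN}) and integrating in $s$ from $0$ to $t$ gives
\[
\cU_N(t;0) - \wt\cU_N(t;0) = -i\int_0^t \rd s\; \cU_N(t;s)\,\cK_N(s)\,\wt\cU_N(s;0)\,.
\]
Applying this to $\Omega$ and using that $\cU_N(t;s)$ is unitary, we get
\[
\left\|\left(\cU_N(t;0) - \wt\cU_N(t;0)\right)\Omega\right\| \leq \int_0^t \rd s\;\left\|\cK_N(s)\,\wt\cU_N(s;0)\Omega\right\|\,.
\]
So the lemma reduces to a pointwise-in-$s$ bound on $\|\cK_N(s)\psi\|$ for the state $\psi = \wt\cU_N(s;0)\Omega$, together with control of a low moment of $\cN$ in that state.

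The key estimate I would establish is the operator bound
\[
\left\|\cK_N(s)\psi\right\| \leq \frac{C}{\sqrt N}\,\left\|(\cN+1)^{3/2}\psi\right\| \qquad \text{for all } \psi \in \cF,
\]
uniformly in $s$, with $C$ depending only on $D$ and $\|\ph\|_{H^1}$. To obtain it, rewrite $\cK_N(s)\psi = \frac{1}{\sqrt N}\int\rd x\, a_x^*\big(a^*(g_x) + a(g_x)\big)a_x\psi$ with $g_x := V(x-\cdot)\ph_s$; it is essential to keep $\ph_s$ paired with $V$, because then $g_x \in L^2(\bR^3)$ with $\|g_x\|^2 \leq D\|\ph_s\|_{H^1}^2 \leq K$ uniformly in $x$ and $s$, by \eqref{eq:bd}. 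The two monomials $a_x^* a_y^* a_x$ and $a_x^* a_y a_x$ act on the $n$-particle sector like $n^{3/2}$, which is exactly the weight $(\cN+1)^{3/2}$. Concretely, I would estimate $\|\cK_N(s)\psi\|^2$ by expanding it as a double integral $\int\rd x\rd x'\,\langle\,\cdot\,,\,\cdot\,\rangle$, commuting the outer annihilation operators to the right by the canonical commutation relations, and then repeatedly applying Lemma \ref{lm:a-bd} together with $\|g_x\| \leq \sqrt K$ and Cauchy--Schwarz in the remaining $x$-integrations: every resulting term is bounded by a constant (depending on $K$) times $\langle\psi, (\cN+1)^3\psi\rangle$, the leading contribution coming from the fully normal-ordered piece and the contraction terms being of lower order. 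A naive duality argument $\|\cK_N(s)\psi\| = \sup_{\|\phi\|=1}|\langle\phi, \cK_N(s)\psi\rangle|$ does not work here, since $\cK_N$ changes the particle number and would force one to control the test vector $\phi$ in a stronger norm; this is why one should square the norm and normal-order directly.

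Finally, Lemma \ref{lm:2cou} supplies the needed moment bound: its proof establishes, via Gronwall and the fact that $\langle\Omega, (\cN+1)^3\Omega\rangle = 1$, that $\langle\wt\cU_N(s;0)\Omega, (\cN+1)^3\wt\cU_N(s;0)\Omega\rangle \leq e^{Ks}$, so that $\|(\cN+1)^{3/2}\wt\cU_N(s;0)\Omega\| \leq e^{Ks/2}$. Inserting this into the two displays above yields
\[
\left\|\left(\cU_N(t;0) - \wt\cU_N(t;0)\right)\Omega\right\| \leq \frac{C}{\sqrt N}\int_0^t \rd s\, e^{Ks/2} \leq \frac{C'}{\sqrt N}\, e^{Kt}\,,
\]
which is \eqref{eq:lm3cou} after adjusting the constants. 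The one genuinely non-routine ingredient is the operator bound on $\cK_N(s)$ --- identifying $(\cN+1)^{3/2}$ as the correct weight and carrying out the normal-ordering bookkeeping; the rest is the Duhamel/Gronwall scheme already used repeatedly above, e.g.\ in the proofs of Lemma \ref{lm:2cou} and Lemma \ref{lem:comp}.
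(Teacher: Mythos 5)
Your proposal is correct and follows essentially the same route as the paper: the Duhamel formula reduces the difference to the cubic term $\cK_N(s)$ with its explicit $1/\sqrt N$, the squared norm of $\int \rd x\, a_x^*\phi(V(x-\cdot)\ph_s)a_x\psi$ is bounded by normal ordering and Lemma \ref{lm:a-bd} together with the uniform bound (\ref{eq:bd}) to produce the weight $(\cN+1)^{3/2}$, and Lemma \ref{lm:2cou} supplies the moment bound along $\wt\cU_N(s;0)\Omega$. The paper carries out exactly this normal-ordering computation, so no gap remains in your outline.
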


\begin{proof}
We write
\begin{equation*}
\begin{split}
\Big(\cU_N (t;0) - &\wt \cU_N (t;0) \Big) \Omega\\  = \; &\cU_N
(t;0) \left(1 - \cU_N (t;0)^* \wt \cU_N (t;0) \right) \Omega \\ = \;
&- i\int_0^t \rd s \, \cU_N (t;s) \left(\cL_N (s) - \wt\cL_N (s)
\right) \wt\cU_N (s;0) \Omega \\ = \; &- \frac{i}{\sqrt{N}} \int_0^t
\rd s \, \int \rd x \rd y \, V(x-y) \, \cU_N (t;s) a^*_x \left(
\ph_t (y) a_y^* + \overline{\ph}_t (y) a_y \right) a_x \wt\cU_N
(s;0) \Omega \\= \; &- \frac{i}{\sqrt{N}} \int_0^t \rd s \, \int \rd
x  \, \cU_N (t;s) a^*_x \phi (V(x-.) \ph_t) a_x \wt \cU_N (s;0)
\Omega\,.
\end{split}
\end{equation*}
Hence
\begin{equation}\label{eq:lm31}
\begin{split}
\left\| \left(\cU_N (t;0) - \wt \cU_N (t;0) \right) \Omega \right\|
\leq \frac{1}{\sqrt{N}} \int_0^t \rd s \, \left\| \int \rd x  \,
a^*_x \phi (V(x-.) \ph_t) a_x \wt \cU_N (s;0) \Omega \right\|\,.
\end{split}
\end{equation}
Next, we observe that
\begin{equation*}
\begin{split}
\Big\| \int \rd x  \, a^*_x &\phi (V(x-.) \ph_t) a_x \wt \cU_N (s;0)
\Omega \Big\|^2 \\ = \; &\int \rd y \rd x \langle a_y \wt \cU_N
(s;0) \Omega, \phi (V(y-.)\ph_t) a_y a^*_x \phi (V(x-.) \ph_t) a_x
\wt \cU_N (s;0) \Omega \rangle \\ = \; &\int \rd y \rd x \langle a_y
\wt \cU_N (s;0) \Omega, \phi (V(y-.)\ph_t) a^*_x a_y\phi (V(x-.)
\ph_t) a_x \wt \cU_N (s;0) \Omega \rangle \\ &+\int \rd x \langle
a_x \wt \cU_N (s;0) \Omega, \phi (V(x-.)\ph_t) \phi (V(x-.) \ph_t)
a_x \wt \cU_N (s;0) \Omega \rangle \\ = \; &\int \rd y \rd x \langle
a_y \wt \cU_N (s;0) \Omega, \left( a_x^* \phi (V(y-.)\ph_t) + V(y-x)
\overline{\ph}_t (x) \right) \\& \hspace{5cm} \times  \left( \phi
(V(x-.) \ph_t) a_y + V(x-y) \ph_t (y) \right)
a_x \wt \cU_N (s;0) \Omega \rangle \\
&+\int \rd x \langle a_x \wt \cU_N (s;0) \Omega, \phi (V(x-.)\ph_t)
\phi (V(x-.) \ph_t) a_x \wt \cU_N (s;0) \Omega \rangle\,.
\end{split}\end{equation*}
Therefore, we have
\begin{equation*}
\begin{split}
\Big\| \int \rd x  \, a^*_x &\phi (V(x-.) \ph_t) a_x \wt \cU_N (s;0)
\Omega \Big\|^2 \\
= \; &\int \rd y \rd x \langle a_x a_y \wt \cU_N (s;0) \Omega, \phi
(V(y-.)\ph_t) \phi (V(x-.) \ph_t) a_y
a_x \wt \cU_N (s;0) \Omega \rangle \\
&+\int \rd y \rd x V(x-y) \overline{\ph}_t (x) \langle a_y \wt \cU_N
(s;0) \Omega, \phi (V(x-.) \ph_t) a_y a_x \wt \cU_N (s;0) \Omega
\rangle
\\
&+\int \rd y \rd x V(x-y) \ph_t (y) \langle a_x a_y \wt \cU_N (s;0)
\Omega, \phi (V(y-.) \ph_t) a_x \wt \cU_N (s;0) \Omega \rangle\\
&+\int \rd y \rd x V(x-y)^2 \overline{\ph}_t (x) \ph_t (y) \langle
a_y \wt
\cU_N (s;0) \Omega, a_x \wt \cU_N (s;0) \Omega \rangle \\
&+\int \rd x \langle a_x \wt \cU_N (s;0) \Omega, \phi (V(x-.)\ph_t)
\phi (V(x-.) \ph_t) a_x \wt \cU_N (s;0) \Omega \rangle\,.
\end{split}
\end{equation*}
It follows that
\begin{equation*}
\begin{split}
\Big\| \int &\rd x  \, a^*_x \phi (V(x-.) \ph_t) a_x \wt \cU_N (s;0)
\Omega \Big\|^2 \\
\leq \; & \sup_{x} \| V(x-.)\ph_t \|^2 \, \int \rd y \rd x \|
(\cN+2)^{1/2} a_x a_y \wt \cU_N (s;0) \Omega\|^2 \, \\
&+\sup_{x} \| V(x-.)\ph_t \| \;  \int \rd y \rd x |V(x-y)| |\ph_t
(x)| \| (\cN+1)^{1/2} a_y \wt \cU_N (s;0) \Omega \| \| a_y a_x \wt
\cU_N (s;0) \Omega \|
\\
&+ \sup_{y} \| V(y-.)\ph_t \| \; \int \rd y \rd x |V(x-y)| |\ph_t
(y)| \| a_x a_y \wt \cU_N (s;0)
\Omega\|  \| (\cN+1)^{1/2} a_x \wt \cU_N (s;0) \Omega \|\\
&+\int \rd y \rd x V(x-y)^2 |\ph_t (x)||\ph_t (y)| \| a_y \wt
\cU_N (s;0) \Omega \| \| a_x \wt \cU_N (s;0) \Omega \| \\
&+ \sup_x \| V(x-. )\ph_t\|^2 \, \int \rd x \| (\cN+1)^{1/2} a_x \wt
\cU_N (s;0) \Omega\|^2\,.
\end{split}
\end{equation*}
Using (\ref{eq:bd}), we obtain
\begin{equation*}
\begin{split}
\Big\| \int &\rd x  \, a^*_x \phi (V(x-.) \ph_t) a_x \wt \cU_N (s;0)
\Omega \Big\|^2 \\
\leq \; & C \, \int \rd y \rd x \|
 a_x a_y \cN^{1/2} \wt \cU_N (s;0) \Omega \|^2 \, \\
&+C \;  \left(\int \rd y \rd x |V(x-y)|^2 |\ph_t (x)|^2 \|
 a_y \cN^{1/2} \wt \cU_N (s;0) \Omega \|^2 \right)^{1/2} \left(
\int \rd x \rd y \| a_y a_x \wt \cU_N (s;0) \Omega \|^2
\right)^{1/2}
\\
&+ C \left(\int \rd y \rd x \| a_x a_y \wt \cU_N (s;0) \Omega\|^2
\right)^{1/2}  \left( \int \rd y \rd x |V(x-y)|^2 |\ph_t (y)|^2
\| a_x \cN^{1/2} \wt \cU_N (s;0) \Omega \|^2 \right)^{1/2} \\
&+ \int \rd y \rd x V(x-y)^2 |\ph_t (x)|^2 \| a_y \wt \cU_N (s;0)
\Omega \|^2 \\ &+ C \int \rd x \| a_x \cN^{1/2} \wt \cU_N (s;0)
\Omega\|^2\,.
\end{split}
\end{equation*}
{F}rom
\[\int \rd y \rd x |V(x-y)|^2 |\ph_t (y)|^2
\| a_x \psi \|^2 \leq \left( \sup_x \int \rd y V(x-y)^2 |\ph_t
(y)|^2 \right) \| \cN^{1/2} \psi \|^2 \leq C \|\cN^{1/2} \psi \|^2
\] we thus find
\begin{equation*}
\begin{split}
\Big\| \int \rd x  \, a^*_x &\phi (V(x-.) \ph_t) a_x \wt \cU_N (s;0)
\Omega \Big\|^2 \leq C \| (\cN+1)^{3/2} \wt \cU_N (t;0) \Omega
\|^2\,.
\end{split}
\end{equation*}
Inserting the last bound in (\ref{eq:lm31}) and using the result of
Lemma \ref{lm:2cou} we obtain (\ref{eq:lm3cou}).
\end{proof}
This concludes the proof of Theorem \ref{thm:coh}.
\subsection{Discussion}\label{subs:disc}
As mentioned in the introduction, our approach to the study of the mean field limit of
the $N$-body Schr\"odinger dynamics mirrors that
used by Hepp and Ginibre-Velo in \cite{Hepp,GV}
in the study of the semi-classical limit of quantum many-boson systems.
In the language of the mean field limit,
the main result obtained by Hepp
(for smooth potentials) and by Ginibre and Velo (for singular
potentials) was the convergence of the fluctuation dynamics $\cU_N
(t;s)$ (defined in (\ref{eq:cUN})) to a limiting $N$-independent
dynamics $\cU(t;s)$ in the sense that
\begin{equation}\label{eq:HGV} s-\lim_{N \to \infty} \cU_N (t;s) =
\cU (t;s) \, \end{equation}  for all fixed $t$ and $s$. Here the
limiting dynamics $\cU(t;s)$ is defined by
\[ i\partial_t \cU (t;s) = \cL (t) \cU (t;s) \qquad
\text{with } \cU (s;s) = 1 \] and with generator
\begin{equation}\label{eq:cL}
\begin{split}
\cL (t) = & \int \rd x \, \nabla_x a^*_x \nabla_x a_x + \int \rd x
\, \left(V *|\ph_t|^2 \right) (x) \, a^*_x a_x + \int \rd x
\rd y \, V(x-y) \overline{\ph_t} (x) \ph_t (y) a^*_y a_x \\
&+ \frac{1}{2} \int \rd x \rd y \, V(x-y) \left( \ph_t (x) \ph_t (y)
a^*_x a^*_y + \overline{\ph_t} (x) \overline{\ph_t} (y) a_x a_y
\right)
\end{split}
\end{equation}
The convergence (\ref{eq:HGV}) does not give any information about
the convergence of the one-particle marginal $\Gamma_{N,t}^{(1)}$,
associated with the evolution of the coherent initial state, to the
orthogonal projection $|\ph_t \rangle \langle \ph_t|$. The
definition of the marginal density
$\Gamma_{N,t}^{(1)}$ involves {\it unbounded}
creation and annihilation operators. This also explains why the
derivation of the bound (\ref{eq:tr-1}) in Theorem \ref{thm:coh} is
in general more complicated than the proof of the convergence
(\ref{eq:HGV}). The proof of (\ref{eq:HGV}) requires control of the
growth of the expectation of powers of the number of particle
operator $\cN$ {\it only} with respect to the {\it limiting dynamics}. To prove
(\ref{eq:tr-1}), on the other hand, we  need to control the
growth of the expectation of $\cN$ with respect to the $N$-dependent
fluctuation dynamics $\cU_N (t;s)$.

\section{Time evolution of factorized states}\label{sec:fact}
\setcounter{equation}{0}

This section is devoted to the proof of Theorem \ref{thm:fact}. The
main idea in the proof is that we can write the factorized
$N$-particle state $\psi_N = \ph^{\otimes N}$ (whose evolution is
considered in Theorem \ref{thm:fact}) as a linear combination of
coherent states, whose dynamics can be studied using the results of
Section \ref{sec:coh}.

\begin{proof}[Proof of Theorem \ref{thm:fact}]
We start by writing $\psi_N = \ph^{\otimes N}$ or, more precisely,
the sequence \[ \{ 0, 0, \dots, 0, \psi_N, 0, 0, \dots \} =
\frac{(a^* (\ph))^N}{\sqrt{N!}} \Omega \in \cF\] as a linear
combination of coherent states. While it is always possible in principle our goal
is to represent $\psi_N$ with the {\it least} number of coherent states.
\begin{lemma}\label{lem:rep}

We have the following representation.
\begin{equation}\label{eq:cohpsi}
\frac{(a^* (\ph))^N}{\sqrt{N!}} \Omega = d_N \int_0^{2\pi} \frac{\rd
\theta}{2\pi} \; e^{i\theta N} W ( e^{-i\theta} \sqrt{N} \ph) \Omega
\end{equation} with the constant
\begin{equation}
d_N = \frac{\sqrt{N!}}{N^{N/2} e^{-N/2}} \simeq N^{1/4}\,.
\end{equation}
\end{lemma}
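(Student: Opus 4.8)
The plan is to verify the identity by expanding the Weyl operator in the number basis and integrating over $\theta$. Recall from \eqref{eq:coh} that
\[
W(e^{-i\theta}\sqrt{N}\ph)\Omega = e^{-N\|\ph\|^2/2}\sum_{n\geq 0}\frac{(e^{-i\theta}\sqrt{N})^n}{\sqrt{n!}}\,\ph^{\otimes n} = e^{-N/2}\sum_{n\geq 0}\frac{e^{-i\theta n}N^{n/2}}{\sqrt{n!}}\,\ph^{\otimes n},
\]
using $\|\ph\|=1$. Multiplying by $e^{i\theta N}$ and integrating in $\theta$ over $[0,2\pi]$, only the term $n=N$ survives the orthogonality relation $\int_0^{2\pi}(\rd\theta/2\pi)\,e^{i\theta(N-n)}=\delta_{Nn}$, so
\[
\int_0^{2\pi}\frac{\rd\theta}{2\pi}\,e^{i\theta N}W(e^{-i\theta}\sqrt{N}\ph)\Omega = e^{-N/2}\frac{N^{N/2}}{\sqrt{N!}}\,\ph^{\otimes N}.
\]
Since $\ph^{\otimes N}$ is precisely the Fock vector $(a^*(\ph))^N\Omega/\sqrt{N!}$ (this is the content of \eqref{eq:coh} read in the opposite direction, or one checks it directly from the definition of $a^*(\ph)$), multiplying both sides by $d_N = \sqrt{N!}\,N^{-N/2}e^{N/2}$ gives \eqref{eq:cohpsi}.

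The asymptotics $d_N\simeq N^{1/4}$ follows immediately from Stirling's formula: $N!\simeq \sqrt{2\pi N}\,N^N e^{-N}$, hence $d_N = \sqrt{N!}\,N^{-N/2}e^{N/2}\simeq (2\pi N)^{1/4}$.

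The only point requiring a modicum of care is justifying the interchange of the $\theta$-integral with the infinite sum over $n$ (and with the implicit integration defining the Fock-space inner product). This is routine: the series $\sum_n (N^{n/2}/\sqrt{n!})\,\|\ph^{\otimes n}\|$ converges since $\|\ph^{\otimes n}\| = \|\ph\|^n = 1$ and $\sum_n N^{n/2}/\sqrt{n!} < \infty$, so the sum converges absolutely in $\cF$ uniformly in $\theta$, and Fubini applies. I do not expect any real obstacle here; the lemma is essentially a bookkeeping identity expressing the sharp projection onto the $N$-particle sector as an average of coherent states over the phase $\theta$, and the factor $d_N\sim N^{1/4}$ — rather than something exponentially large — is exactly what makes the decomposition useful in the sequel.
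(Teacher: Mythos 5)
Your proposal is correct and follows essentially the same route as the paper's proof: expanding $W(e^{-i\theta}\sqrt{N}\ph)\Omega$ in the number basis via (\ref{eq:coh}) and using the $\theta$-integration to project onto the $N$-particle sector, then identifying the constant $d_N$. The added remarks on the (uniform, absolute) convergence justifying the interchange and on the Stirling asymptotics $d_N \simeq N^{1/4}$ are fine and consistent with the paper.
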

\begin{proof}
To prove the representation (\ref{eq:cohpsi}) observe that, from
(\ref{eq:coh}) and since $\| \ph \| =1$, \begin{equation}
\begin{split} \int_0^{2\pi} \frac{\rd \theta}{2\pi} \; e^{i\theta N} W
(e^{-i\theta} \sqrt{N} \ph) \Omega &= e^{-N/2} \sum_{j=1}^{\infty}
N^{j/2} \left( \int \frac{\rd \theta}{2\pi} \; e^{i\theta (N-j)}
\right) \frac{(a^* (\ph))^j}{j!} \Omega \\
&= \frac{e^{-N/2} N^{N/2}}{\sqrt{N!}} \, \frac{(a^*
(\ph))^N}{\sqrt{N!}} \Omega \, .
\end{split}
\end{equation}
\end{proof}
The kernel of the one-particle density $\gamma_{N,t}^{(1)}$
associated with the solution of the Schr\"odinger equation
\[ e^{-it\cH_N} \frac{(a^* (\ph))^N}{\sqrt{N!}} \Omega \]
is given by (see (\ref{eq:margi}))
\begin{equation}
\begin{split}
\gamma^{(1)}_{N,t} (x;y) =\; &\frac{1}{N} \left\langle \frac{(a^*
(\ph))^N}{\sqrt{N!}} \Omega, e^{i \cH_N t} a_y^* a_x e^{-i\cH_N t}
\frac{(a^* (\ph))^N}{\sqrt{N!}} \Omega \right\rangle
\\ = \; & \frac{d^2_N}{N} \int_0^{2\pi} \frac{\rd \theta_1}{2\pi} \int_0^{2\pi}
\frac{\rd \theta_2}{2\pi} \, e^{-i\theta_1 N} e^{i\theta_2 N}
\langle W(e^{-i\theta_1} \sqrt{N} \ph) \Omega, a_y^* (t) a_x (t)
W(e^{-i\theta_2} \sqrt{N} \ph) \Omega \rangle
\end{split}
\end{equation}
where we introduced the notation $a_x (t) = e^{i\cH_N t} a_x
e^{-i\cH_N t}$. Next, we expand
\begin{equation}\label{eq:onepart}
\begin{split}
\gamma^{(1)}_{N,t} (x;y) = \; & \frac{d^2_N}{N} \int_0^{2\pi}
\frac{\rd \theta_1}{2\pi} \int_0^{2\pi} \frac{\rd \theta_2}{2\pi} \,
e^{-i\theta_1 N} e^{i\theta_2 N}  \left\langle W(e^{-i\theta_1}
\sqrt{N} \ph) \Omega, \, \left( a_y^*
(t) - e^{i\theta_1} \sqrt{N} \overline{\ph}_t (y) \right) \right. \\
&\hspace{3cm} \left. \times \left( a_x (t) - e^{-i\theta_2} \sqrt{N}
\ph_t (x) \right) \, W(e^{-i\theta_2} \sqrt{N} \ph) \Omega
\right\rangle
\\ & + \frac{d^2_N \, \overline{\ph}_t (y)}{\sqrt{N}} \int_0^{2\pi}
\frac{\rd \theta_1}{2\pi} \int_0^{2\pi} \frac{\rd \theta_2}{2\pi} \,
e^{-i\theta_1 (N-1)} e^{i\theta_2 N} \\ &\hspace{3cm} \times
\left\langle W(e^{-i\theta_1} \sqrt{N} \ph) \Omega, \left( a_x (t) -
e^{-i\theta_2} \sqrt{N} \ph_t (x) \right)W(e^{-i\theta_2} \sqrt{N}
\ph) \Omega \right\rangle
\\& + \frac{d^2_N \, \ph_t (x)}{\sqrt{N}} \int_0^{2\pi} \frac{\rd
\theta_1}{2\pi} \int_0^{2\pi} \frac{\rd \theta_2}{2\pi} \,
e^{-i\theta_1 N} e^{i\theta_2 (N-1)} \\ &\hspace{3cm} \times
\left\langle W(e^{-i\theta_1} \sqrt{N} \ph) \Omega, \left( a^*_y (t)
- e^{i\theta_1} \sqrt{N} \overline{\ph}_t (y) \right) \,
W(e^{-i\theta_2} \sqrt{N} \ph) \Omega \right\rangle \\ &+ d_N^2 \,
\ph_t (x) \overline{\ph}_t (y) \int_0^{2\pi} \frac{\rd
\theta_1}{2\pi} \int_0^{2\pi} \frac{\rd \theta_2}{2\pi} \,
e^{-i\theta_1 (N-1)} e^{i\theta_2 (N-1)} \\ &\hspace{3cm} \times
\left\langle W(e^{-i\theta_1} \sqrt{N} \ph) \Omega, \,
W(e^{-i\theta_2} \sqrt{N} \ph) \Omega \right\rangle \, .
\end{split}
\end{equation}
We introduce the notation
\begin{equation}
\begin{split}
f_N (x) = &d^2_N \int_0^{2\pi}
\frac{\rd \theta_1}{2\pi} \int_0^{2\pi} \frac{\rd \theta_2}{2\pi} \,
e^{-i\theta_1 (N-1)} e^{i\theta_2 N} \\ &\hspace{3cm} \times
\left\langle W(e^{-i\theta_1} \sqrt{N} \ph) \Omega, \left( a_x (t) -
e^{-i\theta_2} \sqrt{N} \ph_t (x) \right)W(e^{-i\theta_2} \sqrt{N}
\ph) \Omega \right\rangle\,.
\end{split}
\end{equation}
Since
\begin{equation}
\begin{split} d_N \int_0^{2\pi} \frac{\rd \theta}{2\pi} e^{i\theta (N-1)}
W (e^{-i\theta} \sqrt{N} \ph) \Omega &= d_N e^{-N/2}
\sum_{j=0}^{\infty} \left( \int_0^{2\pi} \frac{\rd \theta}{2\pi}
e^{i\theta (N-1-j)} \right) N^{j/2} \frac{(a^* (\ph))^j}{j!} \Omega
\\ &= d_N \frac{e^{-N/2} N^{(N-1)/2}}{\sqrt{N-1!}} \frac{(a^*
(\ph))^{N-1}}{N-1!} \Omega \\ &= \ph^{\otimes N-1} \, ,
\end{split}
\end{equation}
we obtain, from (\ref{eq:onepart}), that
\begin{equation}
\begin{split}
\gamma^{(1)}_{N,t} (x;y)= \; & \frac{d^2_N}{N} \int_0^{2\pi}
\frac{\rd \theta_1}{2\pi} \int_0^{2\pi} \frac{\rd \theta_2}{2\pi} \,
e^{-i\theta_1 N} e^{i\theta_2 N}  \left\langle W(e^{-i\theta_1}
\sqrt{N} \ph) \Omega, \, \left( a_y^* (t) - e^{i\theta_1} \sqrt{N}
\overline{\ph}_t (y) \right) \right. \\ &\hspace{3cm} \times \left.
\left( a_x (t) - e^{-i\theta_2} \sqrt{N} \ph_t (x) \right) \,
W(e^{-i\theta_2} \sqrt{N} \ph) \Omega \right\rangle \\ &+ \frac{\overline{\ph}_t (y) f_N (x)}{\sqrt{N}} + \frac{\ph_t (x) \overline{f_N} (y)}{\sqrt{N}} + \ph_t (x) \overline{\ph}_t (y)\,.
\end{split}
\end{equation}
Thus
\begin{equation}\label{eq:dif}
\begin{split}
\Big| \gamma^{(1)}_{N,t} (x;y) - \ph_t (x) \overline{\ph}_t (y)
\Big| \leq \; & \frac{d^2_N}{N} \,  \int_0^{2\pi} \frac{\rd
\theta_1}{2\pi} \int_0^{2\pi} \frac{\rd \theta_2}{2\pi} \; \Big\|
\left( a_y (t) - e^{-i\theta_1} \sqrt{N}
\ph_t (y) \right) W(e^{-i\theta_1} \sqrt{N} \ph) \Omega \Big\| \\
&\hspace{4cm} \times \Big\| \left( a_x (t) - e^{-i\theta_2} \sqrt{N}
\ph_t (x) \right) \, W(e^{-i\theta_2} \sqrt{N} \ph) \Omega \Big\| \\
& + \frac{|\ph_t (x)| |f_N (y)|}{\sqrt{N}} + \frac{|\ph_t (y)||f_N (x)|}{\sqrt{N}}  \\ \leq \; & \frac{d^2_N}{N} \, \int_0^{2\pi}
\frac{\rd \theta_1}{2\pi} \int_0^{2\pi} \frac{\rd \theta_2}{2\pi} \;
\| a_y \cU^{\theta_1}_{N} (t;0) \Omega \| \, \| a_x \cU^{\theta_2}_{N}
(t;0) \Omega \| \\ & + \frac{|\ph_t (x)| |f_N (y)|}{\sqrt{N}} + \frac{|\ph_t (y)||f_N (x)|}{\sqrt{N}}
\end{split}
\end{equation}
where the unitary evolutions $\cU^{\theta}_{N} (t;s)$ are defined as
in (\ref{eq:cUN}), but with $\ph_t$ replaced\footnote{We are making use here of the
important fact that if $\ph_t$ solves the nonlinear equation
(\ref{eq:hartree2}), then $e^{i\theta} \ph_t$ is also a solution of
the same equation, for any fixed real $\theta$.} by $e^{-i\theta} \ph_t$
in the generator (\ref{eq:cLN}). Taking the square
of (\ref{eq:dif}) and integrating over $x,y$, we obtain
\begin{equation}
\begin{split}
\int \rd x \rd y \; |\gamma^{(1)}_{N,t} (x;y) - \ph_t (x)
&\overline{\ph}_t (y) \Big|^2  \\ \leq \; & 2\frac{d^4_N}{N^2} \,
\int_0^{2\pi} \frac{\rd \theta_1}{2\pi} \int_0^{2\pi} \frac{\rd
\theta_2}{2\pi} \; \| \cN^{1/2} \cU^{\theta_1}_{N} (t;0) \Omega \|^2
\, \| \cN^{1/2} \cU^{\theta_2}_{N} (t;0) \Omega \|^2 \\ & + \frac{4}{N} \int \rd x |f_N (x)|^2
\end{split}
\end{equation}
Using Proposition \ref{prop:1cou} and the fact that $d_N \simeq N^{1/4}$ to control the first term, and using Lemma \ref{lm:improve} to control the second term on the r.h.s. of the last equation, we find constants $C,K$, only depending on $\| \ph \|_{H^1}$ and on
the constant $D$ in (\ref{eq:assump}) such that
\begin{equation}
\| \gamma^{(1)}_{N,t} - |\ph_t \rangle \langle \ph_t|\|_{\text{HS}}
\leq \frac{C}{N^{1/2}} \, \exp (K t)\,.
\end{equation}
This proves (\ref{eq:HS}) and thus concludes the proof of Theorem
\ref{thm:fact}.
\end{proof}

\begin{lemma}\label{lm:improve}
Let $\ph_t$ be a solution to the Hartree equation (\ref{eq:hartree}) with initial data $\ph \in H^1 (\bR^3)$ with $\| \ph \|=1$. Let
\begin{equation*}
\begin{split}
f_N (x) =& d^2_N \int_0^{2\pi}
\frac{\rd \theta_1}{2\pi} \int_0^{2\pi} \frac{\rd \theta_2}{2\pi} \,
e^{-i\theta_1 (N-1)} e^{i\theta_2 N} \\ &\hspace{3cm} \times
\left\langle W(e^{-i\theta_1} \sqrt{N} \ph) \Omega, \left( a_x (t) -
e^{-i\theta_2} \sqrt{N} \ph_t (x) \right)W(e^{-i\theta_2} \sqrt{N}
\ph) \Omega \right\rangle\,.
\end{split}
\end{equation*}
Then there exist constants $C,K$ (only depending on $\| \ph \|_{H^1}$ and on the constant $D$ in (\ref{eq:assump}) such that
\[ \int \rd x \, |f_N (x)|^2 \leq C e^{Kt}\,. \]
\end{lemma}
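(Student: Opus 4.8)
The plan is to write $\ph^{\otimes N}$ and $\ph^{\otimes(N-1)}$ as superpositions of coherent states and reduce $f_N$ to objects controlled by Proposition~\ref{prop:1cou}. Exactly as in the proof of Lemma~\ref{lem:rep} one has $d_N\int_0^{2\pi}\frac{\rd\theta}{2\pi}e^{i\theta(N-1)}W(e^{-i\theta}\sqrt N\ph)\Omega=\ph^{\otimes(N-1)}$, and inserting this together with (\ref{eq:cohpsi}) into the definition of $f_N$ (with $a_x(t)=e^{i\cH_N t}a_x e^{-i\cH_N t}$) writes $f_N(x)$ as a double $\theta$-integral of $\langle W(e^{-i\theta_1}\sqrt N\ph)\Omega,(a_x(t)-e^{-i\theta_2}\sqrt N\ph_t(x))W(e^{-i\theta_2}\sqrt N\ph)\Omega\rangle$. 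Since $e^{-i\theta}\ph_t$ also solves (\ref{eq:hartree2}), the phase-deformed analogue of (\ref{eq:GV}), with the evolution $\cU^\theta_N$ introduced in (\ref{eq:dif}), gives $(a_x(t)-e^{-i\theta_2}\sqrt N\ph_t(x))W(e^{-i\theta_2}\sqrt N\ph)\Omega=W(e^{-i\theta_2}\sqrt N\ph)\,\cU^{\theta_2}_N(t;0)^*\,a_x\,\cU^{\theta_2}_N(t;0)\Omega$. Moving this Weyl operator to the left slot and using Lemma~\ref{lm:coh}~i)--ii) (so that $W^*(e^{-i\theta_2}\sqrt N\ph)W(e^{-i\theta_1}\sqrt N\ph)\Omega$ equals $W(\sqrt N(e^{-i\theta_1}-e^{-i\theta_2})\ph)\Omega$ up to a phase), I arrive at
\[
|f_N(x)|\ \le\ d_N^2\int_0^{2\pi}\frac{\rd\theta_1}{2\pi}\int_0^{2\pi}\frac{\rd\theta_2}{2\pi}\ \big|\big\langle W(\sqrt N(e^{-i\theta_1}-e^{-i\theta_2})\ph)\Omega,\ \cU^{\theta_2}_N(t;0)^*\,a_x\,\cU^{\theta_2}_N(t;0)\Omega\big\rangle\big| .
\]

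I would then establish two estimates, both with constants depending only on $\|\ph\|_{H^1}$ and $D$. (i) For every $g\in L^2(\bR^3)$ and $\xi\in\cF$,
\[
|\langle W(g)\Omega,\xi\rangle|\ \le\ \frac{C}{\|g\|^2+1}\,\langle\xi,(\cN+1)^2\xi\rangle^{1/2},
\]
which follows by expanding $W(g)\Omega$ over Fock sectors as in (\ref{eq:coh}), applying Cauchy--Schwarz to $\sum_n\frac{\|g\|^n}{\sqrt{n!}}(n+1)^{-1}\cdot(n+1)\|\xi^{(n)}\|$, and using the elementary bound $\sum_{n\ge0}\frac{\lambda^n}{n!(n+1)^2}\le C\,e^{\lambda}(\lambda+1)^{-2}$ (from $(n+1)^2\ge(n+1)(n+2)/2$). (ii) Uniformly in $\theta$,
\[
\int\rd x\ \big\langle \cU^{\theta}_N(t;0)^*a_x\,\cU^{\theta}_N(t;0)\Omega,\ (\cN+1)^2\,\cU^{\theta}_N(t;0)^*a_x\,\cU^{\theta}_N(t;0)\Omega\big\rangle\ \le\ C\,e^{Kt} .
\]
To prove (ii): unitarity of $\cU^\theta_N$ together with $(\cN+1)^2\le 2(\cN^2+1)$ and Proposition~\ref{prop:1cou} applied to $\cU^\theta_N(0;t)$ (whose constants are those of Proposition~\ref{prop:1cou}, in particular $\theta$-independent) bound the left side by $C\,e^{Kt}\int\rd x\,\langle a_x\cU^\theta_N(t;0)\Omega,(\cN+1)^6a_x\cU^\theta_N(t;0)\Omega\rangle$; the pull-through relations $a_x^*(\cN+1)^m=\cN^m a_x^*$ and $\int\rd x\,a_x^*a_x=\cN$ reduce this to $C\,e^{Kt}\langle\cU^\theta_N(t;0)\Omega,\cN^7\cU^\theta_N(t;0)\Omega\rangle$, which by Proposition~\ref{prop:1cou} with $\psi=\Omega$ is $\le C\,e^{Kt}$, since $(\cN+1)^{16}\Omega=\Omega$.

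It remains to combine these. Taking $g=\sqrt N(e^{-i\theta_1}-e^{-i\theta_2})\ph$ in (i), so $\|g\|^2=2N(1-\cos(\theta_1-\theta_2))$, then Minkowski's integral inequality in $(\theta_1,\theta_2)$ together with (ii) give
\[
\Big(\int\rd x\,|f_N(x)|^2\Big)^{1/2}\ \le\ C\,e^{Kt}\,d_N^2\int_0^{2\pi}\frac{\rd\theta_1}{2\pi}\int_0^{2\pi}\frac{\rd\theta_2}{2\pi}\,\frac{1}{2N(1-\cos(\theta_1-\theta_2))+1}\ =\ C\,e^{Kt}\,d_N^2\int_0^{2\pi}\frac{\rd u}{2\pi}\,\frac{1}{2N(1-\cos u)+1}.
\]
Using $1-\cos u\ge\frac{2}{\pi^2}u^2$ on $[-\pi,\pi]$ and rescaling $u\mapsto u/\sqrt N$, the last integral is $\le CN^{-1/2}$; since $d_N^2=\sqrt{N!}\,/(N^{N/2}e^{-N/2})\le CN^{1/2}$ by Stirling, the powers of $N$ cancel and $\int\rd x\,|f_N(x)|^2\le C\,e^{Kt}$, as claimed. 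The main point — and the reason a naive estimate fails — is this last display: bounding the double $\theta$-integral crudely (replacing the overlap by $\|a_x\cU^{\theta_2}_N(t;0)\Omega\|$) would cost a factor $d_N^4\simeq N$, so one must genuinely use that $|\langle W(\sqrt N(e^{-i\theta_1}-e^{-i\theta_2})\ph)\Omega,\cdot\rangle|$ is appreciable only in a $\theta$-window of width $\sim N^{-1/2}$ about the diagonal $\theta_1=\theta_2$. Estimate (ii), where the growth of the $\cN$-moments along the fluctuation dynamics must be uniform in $N$, is precisely the step that relies on Proposition~\ref{prop:1cou}, and it is the technical heart of the argument.
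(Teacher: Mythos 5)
Your proof is correct, and it takes a genuinely different route from the paper's. Both arguments begin the same way: conjugating $(a_x(t)-e^{-i\theta_2}\sqrt{N}\ph_t(x))$ through $W(e^{-i\theta_2}\sqrt{N}\ph)$ via the phase-rotated version of (\ref{eq:GV}) to bring in $\cU^{\theta_2}_N(0;t)\,a_x\,\cU^{\theta_2}_N(t;0)\Omega$, and both ultimately rest on Proposition \ref{prop:1cou} (applied to $\cU^{\theta}_N$, with $\theta$-independent constants) to control the $\cN$-moments of $a_x\,\cU^{\theta}_N(t;0)\Omega$ uniformly in $N$ --- your estimate (ii), with the pull-through reduction to $\langle \cU^{\theta}_N(t;0)\Omega,\cN^7\,\cU^{\theta}_N(t;0)\Omega\rangle$, is sound. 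Where you diverge is in how the prefactor $d_N^2\simeq N^{1/2}$ is absorbed. The paper carries out the $\theta_1$-integration exactly, obtaining $\psi(\theta_2)=d_N e^{-i\theta_2 N}W^*(e^{-i\theta_2}\sqrt N\ph)\,\ph^{\otimes(N-1)}$, expands this vector over the Fock sectors, identifies the coefficients $\cR_m$ with Laguerre polynomials $L_m^{(N-m-1)}(N)$, and invokes Krasikov's sharp bounds, $|\cA_m|\lesssim m^{-1/4}$ for $m\le N$, together with the Parseval identity $\sum_m\cA_m^2=d_N^2$, to run a weighted Cauchy--Schwarz in $m$. You instead keep the double $\theta$-integral, reduce the $\theta_1$-dependence to the coherent-state overlap with $W(\sqrt N(e^{-i\theta_1}-e^{-i\theta_2})\ph)\Omega$, and prove the elementary bound $|\langle W(g)\Omega,\xi\rangle|\le C(1+\|g\|^2)^{-1}\|(\cN+1)\xi\|$ (your sectorwise Cauchy--Schwarz and the bound $\sum_n \lambda^n/(n!(n+1)^2)\le C e^{\lambda}(1+\lambda)^{-2}$ are correct); since $\|g\|^2=2N(1-\cos(\theta_1-\theta_2))$, the $\theta$-integral concentrates on a window of width $\sim N^{-1/2}$ about the diagonal, which exactly cancels $d_N^2$. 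Your route is more elementary: it avoids the Laguerre asymptotics and the reference to \cite{Krasikov} altogether, discarding the oscillatory phases $e^{-i\theta_1(N-1)}e^{i\theta_2 N}$ and extracting all the needed smallness from the near-orthogonality of coherent states, at the negligible price of using slightly higher moments of $\cN$ ($\cN^7$ instead of the paper's $\cN^5$), which is immaterial since Proposition \ref{prop:1cou} holds for every power. You also correctly identify why a crude bound on the overlap would cost a factor $d_N^4\simeq N$ and thereby destroy the $N^{-1/2}$ rate in Theorem \ref{thm:fact}.
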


\begin{proof}
Using that
$$
\left( a_x (t) -
e^{-i\theta_2} \sqrt{N} \ph_t (x) \right)W(e^{-i\theta_2} \sqrt{N}\ph)=
 W(e^{-i\theta_2} \sqrt{N}\ph) {\cal U}_N^{\theta_2}(0;t)a_x  {\cal U}_N^{\theta_2}(t;0)
$$
where the unitary evolution $\cU^\theta_N (t;s)$ is defined as
in (\ref{eq:cUN}), but with $\ph_t$ replaced by $e^{-i\theta} \ph_t$
in the generator (\ref{eq:cLN}), we can rewrite $f_N (x)$ as
\begin{equation}\label{eq:fN1}
\begin{split}
f_N (x) = \int_0^{2\pi}
\frac{\rd \theta_2}{2\pi} \left\langle \psi (\theta_2) ,
{\cal U}_N^{\theta_2}(0;t)a_x  {\cal U}_N^{\theta_2}(t;0)\Omega \right\rangle
\end{split}
\end{equation}
with \begin{equation}\label{eq:psi21} \psi (\theta_2) =  d_N^2 \int_0^{2\pi}
\frac{\rd \theta_1}{2\pi} \,
e^{i\theta_1 (N-1)} e^{-i\theta_2 N}  W^*(e^{-i\theta_2} \sqrt{N}\ph) W(e^{-i\theta_1} \sqrt{N} \ph) \Omega\, . \end{equation}
Performing the integration over $\theta_1$, we immediately obtain
\begin{equation}\label{eq:psi22}
\psi (\theta_2)= d_N  \, e^{-i\theta_2 N}  W^*(e^{-i\theta_2} \sqrt{N}\ph) \ph^{\otimes (N-1)}\,.
\end{equation}
It is also possible to expand $\psi (\theta_2)$ in a sum of factors living in the different sectors of the Fock space. {F}rom Eq. (\ref{eq:coh}) and Lemma \ref{lm:coh}, we compute
\begin{equation}
\begin{split}
W^*(e^{-i\theta_2} \sqrt{N}\ph) W(e^{-i\theta_1} \sqrt{N} \ph) \Omega &= W(-e^{-i\theta_2} \sqrt{N}\ph) W(e^{-i\theta_1} \sqrt{N} \ph) \Omega \\ &= e^{iN \text{Im} e^{i(\theta_2-\theta_1)}} W ((e^{-i\theta_1} - e^{-i\theta_2}) \sqrt{N} \ph ) \Omega \\ &=
e^{-N} e^{N e^{i(\theta_2-\theta_1)}} \sum_{m \geq 0} \frac{N^{m/2} (e^{-i\theta_1} - e^{-i\theta_2})^m}{\sqrt{m!}} \ph^{\otimes m}
\end{split}
\end{equation}
which implies (using the periodicity in the variable $\theta_1$)
\begin{equation*}
\begin{split}
\psi (\theta_2) = \; & d_N^2 e^{-N}
\sum_{m=0}^{\infty} \frac{N^{m/2}}{\sqrt{m!}}  \int_0^{2\pi}
\frac{\rd \theta_1}{2\pi} \,
e^{i\theta_1 (N-1)} e^{-i\theta_2 (m+1)}  e^{N e^{-i\theta_1}} (e^{-i\theta_1} - 1)^m  \ph^{\otimes m}\,.
\end{split}
\end{equation*}
Switching to the complex variable $z=e^{-i\theta_1}$ we obtain
\begin{equation*}
\begin{split}
\psi (\theta_2) = \; & - d_N^2 e^{-N} \sum_{m \geq 0} \frac{N^{m/2}}{\sqrt{m!}}
e^{-i\theta_2 (m+1)} \int \frac{\rd z}{2\pi i} \, z^{-N}  e^{Nz} (z - 1)^m \ph^{\otimes m}
\end{split}
\end{equation*}
where the $z$ integral is over the circle of radius one around the origin (in clock-wise sense). Changing variables $z\to Nz$, and using that $d_N = e^{N/2} \sqrt{N!}/ N^{N/2}$, we obtain
\begin{equation}\label{eq:psi2}
\begin{split}
\psi (\theta_2) = \; &- (N-1)! \sum_{m=0}^\infty \frac{N^{-\frac{m}{2}}}{\sqrt{m!}}  e^{-i\theta_2 (m+1)}  \int \frac{\rd z}{2\pi i} \, z^{-N} e^{z} (z-N)^m \ph^{\otimes m} \\
= \; &\sum_{m=0}^\infty \frac{N^{-\frac m2}}{\sqrt{m!}} \cR_m \,
e^{-i\theta_2(m+1)} \ph^{\otimes m}
\end{split}
\end{equation}
where we defined
\begin{equation}\label{eq:Rm}
{\cal R}_m=\frac {\rd^{N-1}}{\rd z^{N-1}}\left (e^z (z-N)^m\right)|_{z=0}\,.
\end{equation}
Comparing (\ref{eq:psi2}) with (\ref{eq:psi22}), we obtain the identity
\begin{equation}\label{eq:parse}
\sum_{m=0}^\infty \frac{{\cal R}^2_m}{N^{m}{m!}}=d_N^2 \, .
\end{equation}
It is also possible to obtain pointwise bounds on the coefficients $\cR_m$.
{F}rom \eqref{eq:Rm} we deduce that for $m\le (N-1)$
\begin{equation}\label{eq:repR}
{\cal R}_m = \sum_{k=0}^m (-1)^{m-k} \frac {(N-1)! m! N^{m-k}}{k! (N-1-k)! (m-k)!}
=\sum_{k=0}^m (-1)^{m-k} N^{m-k} (N-1)...(N-k) \frac {m!}{k! (m-k)!}\,.
\end{equation}
The coefficients ${\mathcal R}_m$ turn out to be intimately connected with the classical
system of orthogonal Laguerre polynomials. Recall that the associated Laguerre polynomial
$L_n^{(\alpha)}(x)$ admits the following representation
$$
L_n^{(\alpha)}(x)=\sum_{k=0}^n (-1)^k \frac {(n+\alpha)!}{k! (n-k)! (\alpha+k)!} x^k.
$$
Therefore
$$
{\cal R}_m = (-1)^m m!\, L_m^{(N-m-1)}(N),
$$
which, for $N>m+1$, involves the value of the Laguerre polynomial $L_n^{(\alpha)}(N)$ with a  {\it positive} index $\alpha$. Asymptotic expansions and estimates for the Laguerre polynomials is
a classical subject, see \cite{Szego} and references therein.
However for the indices $\alpha=N-m-1$, $n=m$ with $N \gg m$ the value of $x=N$ belongs to the oscillatory regime of the behavior of $L_n^{(\alpha)}(x)$ and the sharp estimates for those values of parameters have been only obtained recently in \cite{Krasikov}, where it is proven that, for $\alpha>-1$, $n\ge 2$ and the values of $x\in (q^2,s^2)$ the function $L^{(\alpha)}_n(x)$ obeys the bound
$$
|L^{(\alpha)}_n(x)|< \sqrt{\frac {(n+\alpha)!}{n!}}\sqrt{\frac {x(s^2-q^2)}{r(x)}} e^{\frac x2} x^{-\frac {\alpha+1}2},
$$
where
$$
s=(n+\alpha+1)^{\frac 12} + n^{\frac 12}, \quad q=(n+\alpha+1)^{\frac 12} - n^{\frac 12},\quad
r(x)=(x-q^2)(s^2-x).
$$
As a consequence, we obtain that
$$
|L_m^{(N-m-1)}(N)|<  \sqrt{\frac {(N-1)!}{m!}}\sqrt{\frac {4 N\sqrt{Nm}}{4Nm-m^2}} e^{\frac N2} N^{-\frac {N-m}2}
$$
Assuming that $m\le N$ and using the asymptotics $(N-1)!\sim N^{N-1/2} e^{-N}$ we obtain
$$
|L_m^{(N-m-1)}(N)|\lesssim m^{-\frac 14} (m!)^{-\frac 12} N^{\frac m2}
$$
and therefore
$$
\frac {{\cal R}_m}{(m!)^{\frac 12} N^{\frac m2}}\lesssim m^{-\frac 14}\,.
$$
Summarizing, the coefficients $\cA_m = \cR_m / (m!^{1/2} N^{m/2})$ appearing in the expansion (\ref{eq:psi2}) of $\psi (\theta_2)$ satisfy the bounds
\begin{equation}\label{eq:Am}
\begin{split}
|\cA_m| &\leq C m^{-1/4} \qquad \text{for all } m \leq N \quad \text{and} \\
\sum_{m=0}^{\infty} \cA_m^2 &= d_N^2 \leq C N^{1/2} \,.
\end{split}
\end{equation}

Inserting (\ref{eq:psi2}) into (\ref{eq:fN1}) we obtain
\begin{equation}\label{eq:fN2}
f_N (x) =  \sum_{m=0}^\infty \cA_m \,
\int_0^{2\pi} \frac{\rd \theta}{2\pi} \, e^{i\theta(m+1)} \left\langle
\ph^{\otimes m},
{\cal U}_N^{\theta}(0;t) a_x  {\cal U}_N^{\theta}(t;0)\Omega \right\rangle \end{equation}
and therefore
\begin{equation}
\begin{split}
|f_N (x)| = &\; \int_0^{2\pi} \frac{\rd \theta}{2\pi} \, \sum_{m=0}^\infty \frac{|{\cal A}_m|}{\sqrt{m+1}} \left| \left\langle \ph^{\otimes m},
(\cN+1)^{1/2} \,{\cal U}_N^{\theta}(0;t) a_x {\cal U}_N^{\theta}(t;0)\Omega \right\rangle\right| \\ \leq & \, \left(\sum_{m=0}^\infty \frac{|{\cal A}_m|^2}{m+1} \right)^{1/2} \int_0^{2\pi} \frac{\rd \theta}{2\pi}  \left\|
(\cN+1)^{1/2} \,{\cal U}_N^{\theta}(0;t) a_x {\cal U}_N^{\theta}(t;0)\Omega \right\|\,.
\end{split}
\end{equation}
{F}rom (\ref{eq:Am}), we obtain
\begin{equation}
\sum_{m=0}^\infty \frac{|{\cal A}_m|^2}{m+1}  \leq C \sum_{m=0}^{N-1} \frac{1}{(m+1)^{3/2}} + \frac{1}{N} \sum_{m \geq N} |A_m|^2 \leq \const \,.
\end{equation}
On the other hand, from Proposition \ref{prop:1cou}, we have \begin{equation*}
\begin{split}
\left\| (\cN+1)^{1/2} \,{\cal U}_N^{\theta}(0;t) a_x {\cal U}_N^{\theta}(t;0)\Omega \right\|^2 &\leq C e^{\wt{K} t} \left\| (\cN+1)^2 a_x {\cal U}_N^{\theta}(t;0)\Omega \right\|^2 \leq C e^{\wt{K} t} \left\| a_x \cN^2 {\cal U}_N^{\theta}(t;0)\Omega \right\|^2.
\end{split}
\end{equation*}
Thus, applying once more Proposition \ref{prop:1cou}, we find
\[ \int \rd x \, |f_N (x)|^2 \leq C e^{\wt{K}t} \int_0^{2\pi} \frac{\rd \theta_2}{2\pi} \langle {\cal U}_N^{\theta}(t;0)\Omega, \cN^5 {\cal U}_N^{\theta}(t;0)\Omega  \rangle \leq C e^{Kt} \,.\]
\end{proof}

\bigskip

{\it Acknowledgements.} B. Schlein is grateful to L. Erd\H os and
H.-T. Yau for many stimulating discussions concerning the
dynamics of mean field systems.

\thebibliography{hh}

\bibitem{AGT} Adami, R.; Golse, F.; Teta, A.:
Rigorous derivation of the cubic NLS in dimension one. Preprint:
Univ. Texas Math. Physics Archive, www.ma.utexas.edu, No. 05-211.

\bibitem{BGM}
Bardos, C.; Golse, F.; Mauser, N.: Weak coupling limit of the
$N$-particle Schr\"odinger equation. \textit{Methods Appl. Anal.}
\textbf{7} (2000), 275--293.

\bibitem{EESY} Elgart, A.; Erd{\H{o}}s, L.; Schlein, B.; Yau, H.-T.
 {G}ross--{P}itaevskii equation as the mean filed limit of weakly
coupled bosons. \textit{Arch. Rat. Mech. Anal.} \textbf{179} (2006),
no. 2, 265--283.

\bibitem{ES} Elgart, A.; Schlein, B.:
Mean Field Dynamics of Boson Stars. \textit{Commun. Pure Appl.
Math.} {\bf 60} (2007), no. 4, 500--545.

\bibitem{ESY1} Erd{\H{o}}s, L.; Schlein, B.; Yau, H.-T.:
Derivation of the cubic non-linear Schr\"odinger equation from
quantum dynamics of many-body systems. {\it Invent. Math.} {\bf 167}
(2007), no. 3, 515-614.

\bibitem{ESY2} Erd{\H{o}}s, L.; Schlein, B.; Yau, H.-T.:
Derivation of the Gross-Pitaevskii equation for the dynamics of
Bose-Einstein condensate. To appear in {\it Ann. of Math.} Preprint
arXiv:math-ph/0606017.

%\bibitem{ESY4} Erd{\H{o}}s, L.; Schlein, B.; Yau, H.-T.:
%Rigorous derivation of the Gross-Pitaevskii equation. {\it Phys.
%Rev. Lett.} {\bf 98} (2007), no. 4, 040404.

\bibitem{EY} Erd{\H{o}}s, L.; Yau, H.-T.: Derivation
of the nonlinear {S}chr\"odinger equation from a many body {C}oulomb
system. \textit{Adv. Theor. Math. Phys.} \textbf{5} (2001), no. 6,
1169--1205.

\bibitem{GV} Ginibre, J.; Velo, G.: The classical
field limit of scattering theory for non-relativistic many-boson
systems. I and II. \textit{Commun. Math. Phys.} \textbf{66} (1979),
37--76, and \textbf{68} (1979), 45--68.

\bibitem{Hepp} Hepp, K.: The classical limit for quantum mechanical
correlation functions. \textit{Commun. Math. Phys.} \textbf{35}
(1974), 265--277.

\bibitem{Krasikov} Krasikov, I.: Inequalities for Laguerre polynomials.
\textit{East J. Approx.} \textbf{11} (2005), 257--268.

\bibitem{Spohn} Spohn, H.: Kinetic Equations from Hamiltonian Dynamics.
\textit{Rev. Mod. Phys.} \textbf{52} (1980), no. 3, 569--615.

\bibitem{Szego} Szeg\"o, G. Orthogonal polynomials. \textit{Colloq. pub. AMS.}
v. \textbf{23}, New York, AMS, (1959).
\end{document}